\providecommand{\algorithmname}{Algorithm}
\DeclareMathOperator*{\argmax}{arg\,max}
\DeclareMathOperator*{\argmin}{arg\,min}
\newcommand{\manuallabel}[2]{\def\@currentlabel{#2}\label{#1}}
\pgfplotsset{compat=1.14}
\newtheorem{theorem}{Theorem}
\newtheorem{lemma}[theorem]{Lemma}
\newtheorem{dfn}[theorem]{Definition}
\title{Multi-Bit Relaying over a Tandem of Channels}
\author{Yan Hao Ling and Jonathan Scarlett\thanks{The authors are with the  Department of Computer Science, National University of Singapore (NUS). Jonathan Scarlett is also with the Department of Mathematics, NUS, and the Institute of Data Science, NUS.  (e-mail: \url{lingyh@nus.edu.sg};  \url{scarlett@comp.nus.edu.sg}).} \thanks{This work was supported by the Singapore National Research Foundation (NRF) under grant number R-252-000-A74-281.}}
\newcommand{\edag}{\mathcal{E}^{\dagger}}
\newcommand{\p}{\mathbb{P}}
\newcommand{\e}{\mathbb{E}}
\newcommand{\eratetwo}{E^{(2)}}
\newcommand{\erateone}{E^{(1)}}
\newcommand{\calA}{\mathcal{A}}
\newcommand{\calB}{\mathcal{B}}
\newcommand{\calX}{\mathcal{X}}
\newcommand{\calY}{\mathcal{Y}}
\newcommand{\calC}{\mathcal{C}}
\newcommand{\calP}{\mathcal{P}}
\newcommand{\ratetwo}{\mathcal{E}^{(2)}}
\newcommand{\rateone}{\mathcal{E}^{(1)}}
\newcommand{\db}{d_{\rm B}}
\newcommand{\dbmin}{d^{\min}_{\rm B}}
\newcommand{\dc}{d_{\rm C}}
\newcommand{\dcmin}{d^{\min}_{\rm C}}
\newcommand{\altrate}{\mathcal{E}'}
\newcommand{\otherrate}{\mathcal{E}^{\dagger}}
\newcommand{\smm}{s_{m,m'}}
\newcommand{\empdist}{\hat{p}}
\newcommand{\erasure}{\mathsf{e}}
\begin{document}
    \maketitle
    
    \begin{abstract}
        We study error exponents for the problem of relaying a message over a tandem of two channels sharing the same transition law, in particular moving beyond the 1-bit setting studied in recent related works.  Our main results show that the 1-hop and 2-hop exponents coincide in both of the following settings: (i) the number of messages is fixed, and the channel law satisfies a condition called pairwise reversibility, or (ii) the channel is arbitrary, and a zero-rate limit is taken from above.  In addition, we provide various extensions of our results that relax the assumptions of pairwise reversibility and/or the two channels having identical transition laws, and we provide an example for which the 2-hop exponent is strictly below the 1-hop exponent.
    \end{abstract}
    
    \section{Introduction} 
    
    The relay channel is a fundamental building block of network information theory, and has many variations providing unique challenges and open problems.  In this work, we build on a recent line of works studying error exponents for transmitting a \emph{single bit} over a \emph{tandem of channels}, which was introduced by Huleihel, Polyanskiy, and Shayevitz \cite{onebit}, as well as Jog and Loh using different motivation/terminology based on teaching and learning in multi-agent problems \cite{jog2020teaching}.  
    
    In this 1-bit setting, we showed in \cite{teachlearn} that the 1-hop and 2-hop exponents coincide whenever the two channels have the same transition law (and also in certain other cases), which confirmed a conjecture from  \cite{onebit} inspired by the \emph{information velocity} (many-hop relaying) problem.  We provide further details and outline other related works in Section \ref{sec:existing}.

    In this paper, we study the natural extension of the preceding problem to the multi-bit setting.  We provide broad conditions under which the 1-hop and 2-hop exponents match, both in the case of a fixed finite number of messages, and in the case of a positive rate approaching zero from above.  The multi-bit setting comes with a variety of additional challenges that will become evident throughout the paper.

    \subsection{Problem Setup}

We first formalize the model, which is depicted in Figure \ref{fig:setup}. There are three agents: an encoder, relay, and decoder.  We focus on the case that the ``encoder $\to$ relay'' channel and the ``relay $\to$ decoder'' channel are the same (and independent of one another),\footnote{See Section \ref{sec:diff_chan} for the case that the two channels have different transition laws.} according to a discrete memoryless law $P$. 
 The unknown message of interest is random variable $\Theta$ drawn uniformly from $\{1, \ldots, M\}$.

At time step $i \in \{1,\dotsc, n\}$, the following occurs (simultaneously):
    \begin{itemize} 
        \item The encoder transmits to the relay via one use of a discrete memoryless channel (DMC) with transition law $P$. Let $X_i$ denote the input from the encoder and $Y_i$ denote the output received by the relay.
        \item The relay transmits to the decoder via one use of another DMC with the same transition law $P$. Let $W_i$ denote the input from the relay and $Z_i$ denote the output received by the decoder.
    \end{itemize}
    Importantly, $W_i$ must only be a function of $Y_1, \ldots, Y_{i-1}$; the relay is not allowed to use information from the future.  At time $n$, having received $Z_1, \ldots, Z_n$, the decoder forms an estimate of $\Theta$, which we denote by $\hat{\Theta}_n$ (or sometimes simply $\hat{\Theta}$). 
    
    The input alphabets and output alphabets of $P$ are denoted by $\calX_P$ and $\calY_P$ respectively (and similarly for other DMCs, e.g., $\calX_Q$, $\calY_Q$). We will also write these as $\calX$ and $\calY$ when there is no ambiguity.
    
    Let $P_e^{(2)}(n, M, P) = \p(\hat{\Theta}_n \ne \Theta)$ be the error probability with $n$ time steps and $M$ messages.  Then, the two-hop error exponent is defined as
    \begin{equation}
        \ratetwo_{M,P} = \sup_{{\rm protocols}} \liminf_{n\rightarrow \infty} \left\{ -\frac1n \log P_e^{(2)}(n, M, P)\right\},
        \label{learning_rate}
    \end{equation}
    where the supremum is over all possible designs of the encoder, relay, and decoder.
    
    We are also interested in the zero-rate error exponent. For $R>0$, we define the error exponent at rate $R$ by 
	\begin{equation}
	    \eratetwo_P(R) = \sup_{{\rm protocols}} \liminf_{n \rightarrow \infty} \left\{ -\frac1n \log P_e^{(2)}(n, e^{nR}, P) \right\},
	\end{equation}
	and we extend this function to $R=0$ via $\eratetwo_P(0) = \lim_{R \to 0^+} \eratetwo_P(R)$.  We sometimes omit the subscript $P$ and simply write $\ratetwo_{M}$, $\eratetwo$ and $P_e^{(2)}(n, M)$ when there is no ambiguity.
    
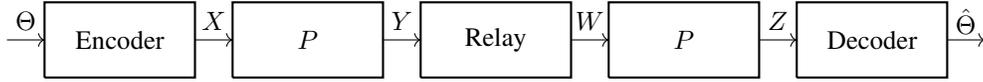
\begin{figure*}[!t]
    \centering
    \begin{tikzpicture}
\draw (0.75,1.25) node {$\Theta$};
\draw (3.25,1.25) node {$X$};
\draw (5.75,1.25) node {$Y$};
\draw (8.25,1.25) node {$W$};
\draw (10.75,1.25) node {$Z$};
\draw (13.25,1.25) node {$\hat{\Theta}$};
\draw[->] (0.5,1) -- (1,1);
\draw[thick] (1,0.5) -- (1,1.5) -- (3,1.5) -- (3,0.5) -- (1,0.5);
\node at (2, 1) {Encoder};
\draw[->] (3,1) -- (3.5,1);
\draw[thick] (3.5,0.5) -- (3.5,1.5) -- (5.5,1.5) -- (5.5,0.5) -- (3.5,0.5);
\node at (4.5, 1) {$P$};
\draw[->] (5.5,1) -- (6,1);
\draw[thick] (6,0.5) -- (6,1.5) -- (8,1.5) -- (8,0.5) -- (6,0.5);
\node at (7, 1) {Relay};
\draw[->] (8,1) -- (8.5,1);
\draw[thick] (8.5,0.5) -- (8.5,1.5) -- (10.5,1.5) -- (10.5,0.5) -- (8.5,0.5);
\node at (9.5, 1) {$P$};
\draw[->] (10.5,1) -- (11,1);
\draw[thick] (11,0.5) -- (11,1.5) -- (13,1.5) -- (13,0.5) -- (11,0.5);
\node at (12, 1) {Decoder};
\draw[->] (13,1) -- (13.5,1);
\end{tikzpicture}
    \caption{Illustration of our problem setup.}
    \label{fig:setup}
\end{figure*}
	
    Similar quantities can be defined for the one-hop case where encoder transmits directly to the decoder through $P$; we refer to the associated error exponents as $\rateone_M = \rateone_{M,P}$ and $\erateone = \erateone_{P}$.
    
    It is clear from data processing inequalities that $\rateone_{M,P} \geq \ratetwo_{M,P}$ and $\erateone_P(R) \geq \eratetwo_P(R)$. In this paper, we will derive various sufficient conditions under which the 1-hop and 2-hop error exponents are equal.
    
    \subsection{Related Work} \label{sec:existing}
    
    {\bf Point-to-point settings.} As summarized in \cite[Ch.~5]{gallager}, there are two particularly well-known achievable 1-hop error exponents at positive rates. The \textit{random coding exponent},  as its name suggests, is the error exponent of an optimal decoder when the codebook is generated in an i.i.d. manner. However, at low rates, the error probability is dominated by a small fraction of the codewords. Accordingly, improvements can be attained by generating an i.i.d. codebook and then removing the ``bad'' codewords; the error exponent formed by this procedure is called the \textit{expurgated exponent}.  Converse bounds on the positive-rate exponent (e.g., the sphere-packing bound) are also summarized in \cite[Ch.~5]{gallager}, but these are generally less relevant to our work, other than the fact that the expurgated exponent is tight for any DMC in the limit of zero rate.
    
    In \cite{berlekamp}, 1-hop error exponent bounds were derived for fixed $M$. The authors of \cite{berlekamp} introduced the notion of pairwise reversible channels (see Definition \ref{pairwise_reversible} below) and derived an achievable exponent which is tight for pairwise reversible channels. They also showed that as $M \rightarrow \infty$, the optimal error exponent approaches the zero-rate expurgated exponent, regardless of whether the pairwise reversible assumption holds.  We will use several of the results from \cite{berlekamp} as building blocks towards our own.
    
    {\bf Relay channel settings.} As mentioned in the introduction, in the 2-hop setting, the problem of relaying a single bit (i.e. $M=2$) over a binary symmetric channel (BSC) was studied in \cite{onebit,jog2020teaching,teachlearn}.  A variety of achievability bounds were developed in \cite{onebit,jog2020teaching} using various techniques (e.g., direct forwarding, relaying the best guess so far, and others) that we do not detail here; they have varying degrees of tightness, but all fall short of the simple converse based on the data processing inequality.  In \cite{teachlearn}, we showed that in fact the 1-hop and 2-hop error exponents match whenever $M=2$ (i.e. $\ratetwo_{2,P} = \rateone_{2,P}$), not only for the BSC but for any DMC $P$.  We also identified sufficient conditions under which the 1-hop and 2-hop exponents match when the two channels are different.
    
    While the study of error exponents for relay channels is already well-motivated in itself from a theoretical standpoint, we also note that Huleihel, Polyanskiy, and Shayevitz \cite{onebit} significantly strengthened the motivation by connecting the 1-bit 2-hop problem with the \emph{information velocity} problem, which was posed by Yury Polyanskiy \cite{onebit} and is also captured under a general framework studied by Rajagopalan and Schulman \cite{schulman_1994}.  Briefly, the goal is to reliably transmit a single bit over a long chain of relays while maintaining a non-vanishing ratio between the number of hops and the total transmission time.  Based on this connection, it was conjectured in \cite{onebit} that the 1-hop and 2-hop error exponents should coincide in the high-noise limit (so that ``information propagation does not slow down''), and as a step towards this conjecture, they showed that the two differ by at most a factor of $\frac{3}{4}$.  Our above result from \cite{teachlearn} confirmed their conjecture, without requiring the high-noise condition.  The results of the present paper may similarly have interesting connections with a multi-bit generalization of the information velocity problem, though we do not attempt to explore this direction here.

   Several other works have focused on error exponents for relay channels at positive rates, e.g., see \cite{highrates,bradford2012error,multihopping} and the references therein.  This is a fundamentally different regime from the constant-$M$ setting or zero-rate limit that we study; for example, \cite{highrates} uses random coding techniques, but as we mentioned above, it is well-known that random coding exponents are loose at low rates.

   The work \cite{multihopping} focuses on multi-hop tandem channels, and notes that concatenated codes \cite{concat} are optimal to within a factor of two in the zero-rate limit, while being much worse at higher rates.  Accordingly, they propose strategies with improved error exponents at these higher rates.  We emphasize that in our work, we are interested in scenarios where we can bring the factor of two all the way down to one, i.e., matching 1-hop and 2-hop exponents.  We are not aware of any results of this kind beyond the case of $M=2$ from \cite{teachlearn}.
   
   Additional works for other settings related to the preceding two paragraphs (e.g., Gaussian and/or fading channels) can be found in the reference lists of \cite{highrates,bradford2012error,multihopping,endotend}.  Finally, we briefly mention that other considerations involving relaying have included channel capacity \cite[Ch.~16]{gamalkim},  second-order asymptotics \cite{fong2017achievable}, and hypothesis testing \cite{hyptestrelay}.
        
    \subsection{Main Results}

    The result $\ratetwo_{2,P} = \rateone_{2,P}$ from \cite{teachlearn} naturally leads to the following question: {\em For what $(P,M)$ do we have $\ratetwo_{M,P} = \rateone_{M,P}$ when $M > 2$?}

    Towards partially answering this question, we first state the following definition.
    
    \begin{dfn}
    {\em \cite{berlekamp}}
    A discrete memoryless channel is \textbf{pairwise reversible} if, for all $x, x' \in \calX_P$, the quantity
    \begin{equation}
        \sum_{y \in \calY_{P}} P(y|x)^{1-s} P(y|x')^s
    \end{equation}
    attains its minimum at $s=\frac{1}{2}$ (possibly non-uniquely).
    \label{pairwise_reversible}
    \end{dfn}

    The class of pairwise reversible channels includes the BSC and BEC, as well as the `$K$-ary symmetric channel' where all diagonal entries take on a single value and all off-diagonal entries take on another value.  A key benefit of pairwise reversibility is that it leads to a straightforward calculation for $\rateone_{M,P}$, which is not available for general channels.  A more detailed discussion is given in \cite{berlekamp}, so we do not go into further detail here.
    
    Here we formally state the simplest forms of our results, while providing forward references to additional theorems that generalize these.  We first have the following
    
    \begin{theorem}
    For any pairwise reversible discrete memoryless channel $P$ and any fixed number of messages $M$, we have $\ratetwo_{M,P} = \rateone_{M,P}$.
    \label{thm:fixed_alphabet}
    \end{theorem}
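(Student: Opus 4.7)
The bound $\ratetwo_{M,P} \leq \rateone_{M,P}$ is immediate from the data processing inequality, so I focus on the matching lower bound. My plan is to construct a 2-hop protocol whose pairwise error analysis reduces, pair by pair, to the $M=2$ result of \cite{teachlearn}, using an encoder codebook furnished by Berlekamp et al.'s characterization of the 1-hop exponent.

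By \cite{berlekamp}, for pairwise reversible $P$ and fixed $M$ there is a length-$n$ codebook $\{x^n_1, \ldots, x^n_M\}$ satisfying $\min_{m \neq m'} \sum_{i=1}^n \db(P(\cdot \mid x^n_{m,i}), P(\cdot \mid x^n_{m',i})) \geq n \rateone_{M,P} - o(n)$, for which the Bhattacharyya bound at $s = \tfrac{1}{2}$ is asymptotically tight. The encoder uses this codebook; by permuting the time indices (a simple probabilistic argument over a random permutation, followed by a union bound over pairs), I may also arrange for each pair's Bhattacharyya weight to be approximately balanced across the $n$ positions. The relay maintains the posterior $\pi_i(m) = \p(\Theta = m \mid Y_1^{i-1})$ and transmits $W_i$ as a function of $\pi_i$---for instance, sampling $\tilde m \sim \pi_i$ and sending $W_i = x^n_{\tilde m, i}$, which is the natural generalization of the $M=2$ scheme of \cite{teachlearn}. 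The decoder applies MAP decoding to $Z_1^n$.

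The analysis proceeds pair by pair. For each $(m, m')$, I would upper bound $\p(\hat\Theta = m' \mid \Theta = m)$ by the Bhattacharyya coefficient between $P_{Z_1^n \mid \Theta = m}$ and $P_{Z_1^n \mid \Theta = m'}$, and argue that when $\Theta \in \{m, m'\}$ the posterior $\pi_i$ concentrates on these two alternatives exponentially fast, so that after an initial $o(n)$ warmup the relay is effectively executing the $M=2$ protocol between $x^n_m$ and $x^n_{m'}$. The $M=2$ result of \cite{teachlearn} then yields a Bhattacharyya exponent of at least $\sum_{i=1}^n \db(P(\cdot \mid x^n_{m,i}), P(\cdot \mid x^n_{m',i})) - o(n) \geq n \rateone_{M,P} - o(n)$, and a union bound over the $\binom{M}{2}$ pairs gives $\ratetwo_{M,P} \geq \rateone_{M,P}$.

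I expect the principal obstacle to be rigorously establishing this reduction to the $M=2$ case in the presence of all $M$ messages: one must quantify the sense in which the distribution of $Z_1^n$ under $\Theta \in \{m, m'\}$ mimics that of the two-message scheme, despite the relay's posterior actually living on the full $M$-ary simplex and its induced $W_i$'s being correlated through the entire past. A secondary issue is the $o(n)$ warmup during which $\pi_i$ is not yet concentrated; its contribution to the pairwise Bhattacharyya loss should be $o(n)$ by exponential concentration of the posterior in the fixed-$M$ regime, but this must be made precise to confirm that it does not degrade the exponent.
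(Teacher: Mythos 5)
Your reduction to the $M=2$ case has a genuine gap at exactly the point you flag as the ``principal obstacle,'' and the obstacle is not a technicality: the argument ``the posterior concentrates exponentially fast, so after an $o(n)$ warmup the relay is effectively running the two-message protocol'' is a typical-trajectory argument, but the pairwise Bhattacharyya coefficient between $P_{Z_1^n\mid\Theta=m}$ and $P_{Z_1^n\mid\Theta=m'}$ is dominated by exponentially rare trajectories on which the relay's posterior is \emph{not} concentrated, or is concentrated on the wrong message. An event of probability $e^{-cn}$ on which the two conditional output laws become indistinguishable contributes $e^{-cn}$ to the Bhattacharyya coefficient and hence caps the exponent at $c$, no matter how small its probability; so one cannot discard the non-concentrated trajectories as an $o(n)$ correction. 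Controlling precisely the trade-off between ``how unlikely it is that the relay is misled'' and ``how much the second channel can then confuse the decoder'' is the entire technical content of the problem, and it is not inherited for free from the $M=2$ result of \cite{teachlearn} (which, moreover, is a statement about the optimal two-message exponent for a fixed channel, not a bound on Bhattacharyya coefficients of an $M$-ary protocol restricted to two hypotheses with position-varying codeword pairs). A secondary concern is your relay rule itself: sampling $\tilde m\sim\pi_i$ and forwarding $x^n_{\tilde m,i}$ is a ``forward your (randomized) best guess'' scheme, and schemes of this flavor were already shown in \cite{onebit,jog2020teaching} to fall short of the optimal exponent even for $M=2$; the optimal relay must deliberately hedge as a carefully tuned function of its confidence.

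The paper's proof avoids posteriors and online tracking entirely. It uses a block-structured relay that, for each length-$k$ received block, computes the empirical distribution $q$ and transmits a length-$k$ sequence whose composition $g(q)$ interpolates between uniform (when $\min_j D(q\Vert P_j)$ is small, i.e., low confidence) and a point mass on the most likely input (high confidence). The error exponent of the resulting composite channel $P^k\circ f\circ P^k$ is then bounded by a \emph{static} optimization over pairs of types $(q,q')$, combining a Chernoff-divergence bound for composite channels (Lemma \ref{lem:distributive}), the type-class bound $-\log\p(\hat q=q)\ge kD(q\Vert P_m)$, and the key inequality $D(q\Vert P_m)\ge \altrate\,(1-g(q)_m+g(q)_{m'})$ (Lemma \ref{lem:linear_bound}); these combine so that the KL cost of misleading the relay plus the Bhattacharyya separation of what the relay then sends always totals at least $k\altrate$ per block. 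The general case is reduced to $|\calX|=M$ by taking a Berlekamp codebook with $\dcmin(\calC,P^\ell)\ge\ell\edag$, using pairwise reversibility to replace $\dc$ by $\db$, and running the protocol on length-$\ell$ super-symbols. If you want to salvage your plan, you would need to replace the concentration step by a worst-case accounting of the misleading events of exactly this kind, at which point you are essentially rebuilding the paper's block analysis.
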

	
    Theorem \ref{thm:fixed_alphabet} is proved in Section \ref{sec:protocol} after establishing some preliminary results in Section \ref{sec:prelim}.  The proof is based on a protocol that is explicit (i.e., constructive) other than using a codebook attaining the optimal exponent in the 1-hop setting as a black box.  Explicit constructions of such codebooks are known for pairwise reversible channels \cite[p.~431]{berlekamp}.  On the other hand, in Section \ref{sec:non_pw}, we give a generalization to a class of channels that need not be pairwise reversible (see Theorem \ref{thm:match}), and for such channels, explicit codebook constructions with optimal error exponents are generally unavailable.
 
 
    While one may hope based on these results (and those of \cite{teachlearn}) that $\ratetwo_{M,P} = \rateone_{M,P}$ for all $(M,P)$, the following result (proved in Section \ref{sec:differ}) shows that this is not the case. 

    \begin{theorem}
        In the case that $M=3$, there exist DMCs with $|\calX_P|=3$ and $|\calY_P|=4$ such that $\ratetwo_{3,P} < \rateone_{3,P}$.
        \label{thm:mismatch}
    \end{theorem}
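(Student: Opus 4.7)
The plan is to exhibit an explicit asymmetric DMC $P$ with $|\calX_P|=3$ and $|\calY_P|=4$ for which one can establish matching-direction bounds showing a gap. A natural candidate, chosen so that $P$ is far from pairwise reversible and so that different input pairs are discriminated by incompatible output sufficient statistics, is the ``sliding'' channel
\begin{equation}
   P(\cdot|1)=(1-p,p,0,0),\qquad P(\cdot|2)=(0,1-p,p,0),\qquad P(\cdot|3)=(0,0,1-p,p),
\end{equation}
for small $p\in(0,1/2)$. The pair $(1,3)$ has disjoint output supports and is perfectly distinguishable, whereas $(1,2)$ and $(2,3)$ each live on a different three-element subset of $\calY_P$, so no single map $f\colon\calY_P\to\calX_P$ can simultaneously preserve the likelihood-ratio statistics for both critical pairs.

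For the 1-hop lower bound, I would use the constant codewords $x_m^n=(m,\ldots,m)$ and invoke the fixed-$M$ achievability bounds of \cite{berlekamp}. Because the pairwise Chernoff information
\begin{equation}
   C(x,x')=-\log\min_{s\in[0,1]}\sum_{y\in\calY_P}P(y|x)^{1-s}P(y|x')^s
\end{equation}
for the sliding channel is attained as $s\to 0^+$ and equals $-\log p$ on the adjacent pairs $(1,2)$ and $(2,3)$ (and $+\infty$ on $(1,3)$), this yields $\rateone_{3,P}\geq E_1:=-\log p$. For the 2-hop upper bound, I would combine a standard pairwise converse, $\p(\hat\Theta=m'\mid\Theta=m)\geq\frac14\exp(-nC_n^{(2)}(m,m'))$ with $C_n^{(2)}(m,m')$ the Chernoff information of the $\vz$-marginals, with a single-letter cascade inequality. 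Specifically, for any stochastic map $f\colon\calY_P\to\calX_P$, the one-letter cascade $Q_f=P\circ f\circ P$ satisfies
\begin{equation}
   \min_{x\neq x'}\Bigl(-\log\min_{s\in[0,1]}\sum_z Q_f(z|x)^{1-s}Q_f(z|x')^s\Bigr)\leq E_2<E_1,
\end{equation}
which can be verified by enumerating deterministic $f$ (convexity reducing the search). A direct computation for the natural choice $f(1)=1,\,f(2)=2,\,f(3)=3,\,f(4)=3$ gives cascade Chernoff $\approx-\log(2p)$ on the adjacent pairs, a gap of $\log 2$ below $E_1$. A tensorization/convexity argument then lifts this single-letter bound to all $n$-letter protocols.

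The principal obstacle is this lifting step, since the relay may adopt causal, memory-bearing, time-varying and randomized strategies, and the encoder need not use constant codewords. The cleanest plan I see is: (i) reduce to deterministic relays by convexity of the Bhattacharyya coefficient in the relay's conditional distribution; (ii) for any fixed codebook, identify the pair $(m,m')$ whose induced codeword types suffer the largest cascade degradation; and (iii) apply the single-letter inequality on a per-symbol basis, together with the independence of the two hops and the sub-multiplicativity of Bhattacharyya/Chernoff coefficients, to bound the pairwise exponent of the $\vz$-marginals uniformly over protocols. A delicate point is that the encoder and relay may try to coordinate to exploit the structure of $P$; I expect this is handled by a minimax rearrangement, using the uniformity of the single-letter bound over all pairs of inputs.
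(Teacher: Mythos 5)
There is a genuine gap, and it is fatal to the proposed converse. Your plan is to show that every single-letter cascade $Q_f = P\circ f\circ P$ has Chernoff information strictly below $E_1$, and then to ``lift'' this single-letter bound to all $n$-letter protocols by tensorization/convexity. No such lifting can exist: the relay is not restricted to symbol-by-symbol maps, and block-based relaying with maps $f\colon \calY^k\to\calX^k$ over blocks of length $k\to\infty$ makes the per-symbol cascade loss vanish. Indeed, this is precisely the mechanism behind Theorem \ref{thm:fixed_alphabet}: for a BSC (or any pairwise reversible channel), every single-letter cascade is strictly lossy, yet $\ratetwo_{M,P}=\rateone_{M,P}$. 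An argument of the form ``single-letter gap $\Rightarrow$ protocol-level gap'' would therefore prove $\ratetwo_{2,\mathrm{BSC}}<\rateone_{2,\mathrm{BSC}}$, contradicting known results. The steps you flag as delicate (reduction to deterministic relays, per-symbol application of the cascade inequality, ``minimax rearrangement'') are exactly where the argument breaks, and they cannot be repaired within this framework. It is also unclear whether your sliding channel even exhibits a gap, since it lacks the structural feature the paper exploits.

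The paper's actual converse is of a completely different nature. It uses a channel with a \emph{common erasure output} $\erasure$ that is reached from every input with probability $p$, and relaxes the problem by a genie: the relay is handed $\Theta$ for free, either immediately (if any non-erasure arrives in the first block) or after the first block (if the first block is all erasures, an event of probability $p^{6n/19}$ under which the relay has learned nothing). The relay's behavior is then summarized by a small number of codewords $w'_1,w'_2,w'_3,w'_{\erasure}$, and a pigeonhole argument over $\binom{4}{2}$ pairs on an alphabet of size $3$ forces some pair to agree in $\Omega(n)$ positions; carefully chaining the resulting confusion events yields $P_e^{(2)}(n,3)\geq \frac13 p^{18n/19}(1-2p)^n$, hence a $2$-hop exponent at most $\frac{18}{19}\log\frac1p + \log\frac{1}{1-2p}$, which is strictly below the $1$-hop exponent $\log\frac{1}{2p}$ for small $p$. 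If you want to salvage your write-up, you need a converse that works against arbitrary causal block-structured relays, and the genie-plus-combinatorics route is the only known way to do that here.
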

	
    Next, we consider the \emph{zero-rate error exponent}, by which we mean the limit of the positive-rate exponent as the rate $R$ approaches zero from above. Unlike the fixed-$M$ case, the zero-rate error exponent for any DMC can easily be computed (see Theorem \ref{zero_rate}). The following theorem shows that in this regime, the 1-hop and 2-hop exponents are equal, without any assumptions on $P$.
	
    \begin{theorem}
        For any discrete memoryless channel $P$, we have $\eratetwo_P(0) = \erateone_P(0)$.
        \label{thm:zero_rate}
    \end{theorem}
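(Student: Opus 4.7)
The direction $\eratetwo_P(0) \le \erateone_P(0)$ is immediate from the data processing inequality, so the task reduces to the matching lower bound. A key starting ingredient is the identity $\erateone_P(0) = E_{\rm ex}(0, P) = \lim_{M \to \infty} \rateone_{M, P}$ from \cite{berlekamp}, valid for every DMC $P$. For any $\epsilon > 0$, one fixes $M$ large enough that $\rateone_{M, P} \ge \erateone_P(0) - \epsilon$, and picks an input distribution $Q$ approximately attaining the zero-rate expurgated exponent. The goal is then to exhibit, for each small $R > 0$, a 2-hop protocol for $e^{nR}$ messages achieving exponent at least $\erateone_P(0) - O(\epsilon)$.

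The main technical challenge is avoiding the factor-of-two loss that arises from naive decode-and-forward. If the $n$ time steps are split into an encoder-only phase of length $\alpha n$ and a relay-only phase of length $(1-\alpha) n$, the resulting exponent $\min\{\alpha \erateone_P(R/\alpha),\, (1-\alpha) \erateone_P(R/(1-\alpha))\}$ tends to $\min(\alpha, 1 - \alpha) E_{\rm ex}(0, P) \le E_{\rm ex}(0, P)/2$ in the $R \to 0^+$ limit, irrespective of the split. To close this gap, I would use an overlapping scheme: the encoder transmits a constant-composition expurgated codeword over all $n$ steps, and the relay transmits adaptively at each time $i$ based on $y_1, \dots, y_{i-1}$. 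The codebook has size $e^{nR}$, composition $Q$, and all pairwise Bhattacharyya distances close to $n E_{\rm ex}(0, P)$, by standard expurgation arguments.

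The crux is to design a single relay strategy that preserves the 1-hop pairwise Bhattacharyya distance simultaneously for every codeword pair. The per-pair matching $\ratetwo_{2,P} = \rateone_{2,P}$ from \cite{teachlearn} guarantees this is possible for any single pair over arbitrary DMCs, and I would exploit the constant-composition structure (all pairs have empirical joint type $\approx Q \times Q$) to design a $Q$-dependent relay strategy that works uniformly across all pairs, rather than a pair-specific one. The resulting pairwise 2-hop error probabilities would each be bounded by $e^{-n(E_{\rm ex}(0, P) - O(\epsilon))}$, and a union bound over the $O(e^{2nR})$ pairs loses only an $e^{2nR}$ factor, whose exponent is negligible as $R \to 0^+$. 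Carrying out this uniform-over-pairs construction, and verifying that the cross-time correlations introduced by the adaptive relay do not spoil the Bhattacharyya-style pairwise bounds, is the part I expect to be most delicate.
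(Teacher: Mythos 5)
Your reduction to the lower bound and your diagnosis of the factor-of-two loss in naive decode-and-forward are both correct, but the proof has a genuine gap at its core: the ``single relay strategy that preserves the 1-hop pairwise Bhattacharyya distance simultaneously for every codeword pair'' is never constructed, and the route you sketch for obtaining it does not work. The result $\ratetwo_{2,P}=\rateone_{2,P}$ from \cite{teachlearn} is a statement about two \emph{fixed, known} hypotheses; it yields a relay rule tailored to one pair, and constant composition of the codebook does not let you glue these rules together, because the relay never sees codeword pairs --- it sees a single noisy output stream and must causally decide what to forward while its posterior is spread over $e^{nR}$ codewords. At time $i$ the relay has observed only $i-1$ symbols of a rate-$R$ codeword and so cannot form any meaningful estimate of the message until nearly the whole block has arrived; this is exactly the obstruction that forces the $1/2$ loss in the time-splitting scheme you rightly reject, and nothing in your outline explains how the adaptive relay escapes it.

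The paper closes the gap by \emph{concatenation} rather than by a direct full-blocklength relay rule. Fix a finite $M$ and an $(M,\ell)$ inner codebook $\calC'$ with $\dbmin(\calC',P^\ell)\ge \ell\,\otherrate$ for $\otherrate$ close to $\erateone_P(0)$ (Lemma \ref{lem:expurgation}), and let $P'$ be the restriction of $P^\ell$ to these $M$ codewords, so $|\calX_{P'}|=M$. The relay then runs the fixed-$M$ block protocol of Section \ref{sec:protocol} on short blocks of $k$ super-symbols: each block carries only $\log M$ nats, so the relay can form a confidence-weighted estimate after every block, and Lemma \ref{lem:db_sep} shows that the composite super-channel $(P')^k\circ f\circ(P')^k$ retains minimum Bhattacharyya distance roughly $k\ell\otherrate$. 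The outer code is an expurgated rate-$R$ code over this composite super-channel (which is what makes the proof non-constructive), giving exponent $\frac{M-1}{M}\ell\otherrate$ per super-use via the uniform-input bound on the zero-rate exponent; letting $M\to\infty$ removes the $\frac{M-1}{M}$ factor. If you want to salvage your outline, the missing ingredient is precisely this two-level structure: the relay must act on an inner code of \emph{constant} size, not on the rate-$R$ code directly.
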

	
    Theorem \ref{thm:zero_rate} is proved in Section \ref{zero_rate}, and uses Theorem \ref{thm:fixed_alphabet} as a stepping stone.  In contrast with Theorem \ref{thm:fixed_alphabet}, the proof of Theorem \ref{thm:zero_rate} is highly non-constructive; the reason for this is highlighted in the proof itself.
    
    Finally, while the preceding results focus on the case that the two channels in the system are identical, we present generalizations of Theorems \ref{thm:fixed_alphabet} and \ref{thm:zero_rate} in Section \ref{sec:diff_chan} that do not require this assumption.
	
\section{Preliminaries} \label{sec:prelim}

In this section, we introduce some additional notation and definitions, and provide a number of useful auxiliary results that will be used for proving our main results.

\subsection{Notation and Definitions}

To lighten notation, we let $P_x(\cdot)$ denote the output distribution $P(\cdot | x)$.  An \emph{$(M,\ell)$-codebook} is defined to be a collection of $M$ codewords each having length $\ell$, and when utilizing such a codebook, we will use the notation $(x^{(1)}, \ldots, x^{(M)})$ for the associated codewords.

For two probability distributions $Q, Q'$ over some finite set $\calX$, the Bhattacharyya distance is defined as
\begin{equation}
    \db(Q, Q') = -\log \sum_{x \in \calX} \sqrt{Q(x)Q'(x)}.
\end{equation}
For $x, x' \in \calX_P$, we also define the Bhattacharyya distance associated with two channel inputs as 
\begin{equation}
    \db(x, x', P) = \db(P_x, P_{x'}) \label{eq:dB_channel}
\end{equation}
with a slight abuse of notation.

Generalizing the Bhattacharyya distance, the Chernoff divergence with parameter $s$ is given by
\begin{equation}
    \dc(Q, Q',s) =  -\log \sum_{x \in \calX}  Q(x)^{1-s}Q'(x)^{s}, \label{eq:dc}
\end{equation}
and the Chernoff divergence (with optimized $s$) is given by
\begin{equation}
    \dc(Q, Q') = \max_{0\leq s \leq 1} \dc(Q, Q', s).
\end{equation}
Analogous to \eqref{eq:dB_channel}, we also write
\begin{equation}
    \dc(x, x', P) = \dc(P_x, P_{x'}). \label{eq:dC_channel}
\end{equation}
Note that whenever $P$ is pairwise reversible, we have $\dc(x, x', P) = \db(x, x', P)$.

For any positive integer $k$, we let $P^k$ denote the $k$-fold product of $P$, with probability mass function
\begin{equation}
    P^k(\vec{y}|\vec{x}) = \prod_{i=1}^k P(y_i|x_i).
\end{equation}
For two sequences $\vec{x}, \vec{x}'$ of length $k$, we also use the notation $\db(\vec{x}, \vec{x}', P^k)$ and $\dc(\vec{x}, \vec{x}', P^k)$ similarly to \eqref{eq:dB_channel} and \eqref{eq:dC_channel}, with the understanding that $\vec{x}, \vec{x}'$ are treated as inputs to $P^k$.

Next, for $S \subseteq \calX$, define
\begin{equation}
    \dbmin(S, P) = \min_{\substack{x,x' \in S, \\ x\neq x'}} \db(x, x', P),
\end{equation}
and similarly
\begin{equation}
    \dcmin(S, P) = \min_{\substack{x,x' \in S, \\ x\neq x'}} \dc(x, x', P).
    \label{chernoff_set}
\end{equation}
Given an $(M,k)$-codebook $\calC = (x^{(1)}, \ldots, x^{(M)})$, we similarly write 
\begin{align}
    \dbmin(\calC, P^k) &= \min_{m,m'} \db(x^{(m)}, x^{(m')}, P^k), \\
    \dcmin(\calC, P^k) &= \min_{m,m'} \dc(x^{(m)}, x^{(m')}, P^k),   
\end{align}
where the codewords are treated as inputs to $P^k$.

We let $\calP(\calX)$ denote the set of all probability distributions on $\calX$.  If $P_1$ and $P_2$ are channels such that $\calY_{P_1} \subseteq \calX_{P_2}$, define $P_2 \circ P_1$ to be the composite channel formed by feeding the output of $P_1$ into $P_2$.  We will also treat deterministic functions as channels and write $f\circ P_1$ where appropriate.
    
\subsection{Auxiliary Results}
\subsubsection{Results on 1-hop error exponents}
We will use two results from \cite{berlekamp} regarding the 1-hop error exponents.

\begin{theorem}
    \emph{\cite[Thm.~2]{berlekamp}}
    For any $\edag < \rateone_M$, it holds for all sufficiently large $\ell$ that there exists an $(M,\ell)$-codebook $\calC$ such that
    \begin{equation}
        \dcmin(\calC, P^\ell) \geq \ell \cdot \edag.
        \label{eq:berlekamp}
    \end{equation}
    \label{thm:berlekamp}
\end{theorem}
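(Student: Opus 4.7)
The plan is to use the Shannon--Gallager--Berlekamp (SGB) converse for binary hypothesis testing to translate the assumed achievability of an error exponent larger than $\edag$ back into a lower bound on the minimum Chernoff divergence of the codebook in question. The underlying intuition is that any codebook with very small error probability must, by the fundamental limits of binary hypothesis testing, have all codeword pairs well-separated in Chernoff divergence, so the candidate codebook is handed to us directly by the definition of $\rateone_M$.

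By definition of $\rateone_M$, fix $\edag' \in (\edag, \rateone_M)$ and pick a protocol---i.e.\ a sequence of $(M,\ell)$-codebooks $\calC_\ell = (x^{(1)}_\ell, \ldots, x^{(M)}_\ell)$ together with decoders---achieving $-\tfrac{1}{\ell}\log P_e(\calC_\ell) \geq \edag'$ for all sufficiently large $\ell$. The candidate codebook for the conclusion is $\calC_\ell$ itself, so it remains to lower-bound $\dcmin(\calC_\ell, P^\ell)$. The SGB two-hypothesis converse, applied to the binary question ``is $\Theta = m$ or $\Theta = m'$?'', yields
\begin{equation}
\p(\hat{\Theta} \neq m \mid \Theta = m) + \p(\hat{\Theta} \neq m' \mid \Theta = m') \geq c \cdot e^{-\dc(x^{(m)}_\ell, x^{(m')}_\ell, P^\ell) - O(\sqrt{\ell})}
\end{equation}
for a universal constant $c > 0$, the $O(\sqrt{\ell})$ term arising from a central-limit-style correction around the Chernoff-optimal tilt. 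Choosing the pair $(m^\ast, m'^\ast)$ attaining $\dcmin(\calC_\ell, P^\ell)$ and using the uniform-prior identity $\sum_m \p(\hat{\Theta} \neq m \mid m) = M \cdot P_e(\calC_\ell)$ gives $P_e(\calC_\ell) \geq \tfrac{c}{M} e^{-\dcmin(\calC_\ell, P^\ell) - O(\sqrt{\ell})}$.

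Taking logarithms and combining with the hypothesis $-\tfrac{1}{\ell}\log P_e(\calC_\ell) \geq \edag'$ yields
\begin{equation}
\dcmin(\calC_\ell, P^\ell) \geq \ell\edag' - O(\sqrt{\ell}) - \log(M/c),
\end{equation}
which exceeds $\ell \edag$ for all sufficiently large $\ell$ because $\edag' > \edag$ and the correction is sublinear. Hence $\calC_\ell$ itself is the required codebook at every large enough length.

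The main obstacle is the SGB lower bound itself. Its proof hinges on a change of measure to the Chernoff-tilted distribution $Q_{s^\ast}(y) \propto P_x(y)^{1-s^\ast} P_{x'}(y)^{s^\ast}$, under which the log-likelihood ratio $\log(P_{x'}^\ell / P_x^\ell)$ has mean of order $\dc(x,x',P^\ell)$ and variance of order $\ell$; a Berry--Esseen or truncation estimate then shows that the log-likelihood ratio sits within $O(\sqrt{\ell})$ of its mean with constant tilted probability, which in turn forces one of the two conditional error probabilities to be at least $\exp(-\dc - O(\sqrt{\ell}))$. The remaining ingredients---the multi-class to binary reduction, the factorization $\dc(\cdot, \cdot, P^\ell) = \sum_i \dc(x_i, x'_i, P)$ across independent coordinates, and the counting arguments---are essentially bookkeeping once this central estimate is in hand.
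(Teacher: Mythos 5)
This theorem is cited by the paper from Shannon--Gallager--Berlekamp without proof, so there is no in-paper argument to compare against; judged on its own, your proof is correct and is essentially the classical derivation of this result. The key inversion --- take the codebook of any protocol achieving exponent $\edag' \in (\edag, \rateone_M)$, and use the binary hypothesis-testing converse together with the reduction $\p(\hat{\Theta}\neq m\mid m)+\p(\hat{\Theta}\neq m'\mid m') \le M P_e$ to force $\dcmin(\calC_\ell, P^\ell) \ge \ell\edag' - O(\sqrt{\ell})$ --- is sound, and the $O(\sqrt{\ell})$ correction is uniform over codeword pairs for a fixed finite-alphabet DMC. The only element taken on faith is the SGB two-hypothesis lower bound itself (their Theorem 5), which you sketch accurately; relying on it as a black box is reasonable here since it is strictly more elementary than the statement being proved.
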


\begin{theorem}
    \emph{\cite[Thm.~4]{berlekamp}} For any $P$, the zero-rate error exponent is given by
    \begin{equation}
        E^{(1)}(0) = \max_{q \in \calP(\calX)} \sum_{x, x' \in \calX} q_{x} q_{x'} \db(x, x', P). \label{eq:E1zero}
    \end{equation}
    \label{zero_rate}
\end{theorem}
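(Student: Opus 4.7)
The plan is to establish both directions of \eqref{eq:E1zero}, writing $E^{\star} = \max_{q \in \calP(\calX)} \sum_{x,x'} q_x q_{x'} \db(x, x', P)$ for brevity. The zero-rate regime is the natural habitat of the Bhattacharyya distance, as opposed to the Chernoff divergence that controls fixed-$M$ exponents, because in the zero-rate limit only the pairwise-confusion structure of a codebook matters and the additional slack in Chernoff's optimisation over $s$ effectively disappears.

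For achievability, I would fix a maximiser $q^{\star}$ of \eqref{eq:E1zero} and draw $2M$ random codewords, each with symbols i.i.d.\ from $q^{\star}$. Since the Bhattacharyya distance factorises over the $n$ coordinates, the expected pairwise Bhattacharyya distance between any two codewords is exactly $n E^{\star}$. A standard expurgation argument, relying on concentration of the coordinate-wise sum (or equivalently on the positive-rate expurgated exponent of \cite{berlekamp}), then extracts a subcodebook of $M$ codewords in which every pairwise distance is at least $(1-\epsilon) n E^{\star}$ for any fixed $\epsilon > 0$. Applying the Bhattacharyya bound on each pairwise error and union bounding over the $M-1$ competitors yields $P_e \leq M e^{-(1-\epsilon) n E^{\star}}$; with $M = e^{nR}$ this produces exponent at least $(1-\epsilon) E^{\star} - R$, and taking $R \to 0^+$ followed by $\epsilon \to 0$ gives $\erateone(0) \geq E^{\star}$.

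For the converse, I would combine an averaging bound on the minimum pairwise Bhattacharyya distance with a pairwise hypothesis-testing lower bound on $P_e$. Writing $q_i$ for the empirical distribution of codeword symbols at position $i$ and using $\db(x, x, P) = 0$, a direct expansion gives
\[
\sum_{m \neq m'} \db(x^{(m)}, x^{(m')}, P^n) \;=\; \sum_{i=1}^n M^2 \sum_{x, x'} q_i(x) q_i(x') \db(x, x', P) \;\leq\; n M^2 E^{\star},
\]
so the minimum pairwise distance satisfies $\dbmin(\calC, P^n) \leq \frac{M}{M-1} n E^{\star}$. Next, restricting the decoder to a binary decision on the pair attaining this minimum bounds $P_e$ below by $\frac{1}{M}$ times the Neyman--Pearson error for that pair, which by the classical $s = \tfrac{1}{2}$ tilted-measure argument is at least $e^{-\dbmin(\calC, P^n) - o(n)}$. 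Combining these two inequalities gives $-\frac{1}{n} \log P_e \leq E^{\star} + R + o(1)$, and letting $R \to 0^+$ finishes the converse.

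The hardest step is the sharp pairwise lower bound. The naive inequality $P_e \gtrsim \tfrac{1}{2}\bigl(1 - \sqrt{1 - e^{-2\db}}\bigr)$ only produces an exponent of $2\db$, a factor of two off. The Shannon--Gallager--Berlekamp remedy is to exploit the i.i.d.\ product form of $P^n$: tilting both $P^n_{x^{(m)}}$ and $P^n_{x^{(m')}}$ to the symmetric measure at $s = \tfrac{1}{2}$ and invoking a CLT-type estimate on the log-likelihood ratio shows that $\sum_y \min(P^n_{x^{(m)}}(y), P^n_{x^{(m')}}(y)) \geq e^{-\db - O(\sqrt{n})}$. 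Once this sharp pairwise bound is in hand, the $M/(M-1)$ and $1/M$ prefactors in the two halves of the converse become negligible in the zero-rate limit, and both directions agree at $E^{\star}$.
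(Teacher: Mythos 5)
The paper does not actually prove this statement; it is imported verbatim from \cite[Thm.~4]{berlekamp}, so your proposal can only be judged on its own merits. Your achievability direction is fine and is the standard argument: i.i.d.\ codewords from the maximizing $q$, concentration of the per-letter sum $\db(\vec{x},\vec{x}',P^n)=\sum_i\db(x_i,x_i',P)$ around $nE^\star$, expurgation, the Bhattacharyya pairwise error bound, and a union bound, with $R\to 0^+$ absorbing the polynomial/exponential prefactors. The Plotkin-type averaging step of your converse is also correct.

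The gap is in the step you yourself flag as the hardest one. The claim that for two product measures $Q_1=P^n_{x^{(m)}}$, $Q_2=P^n_{x^{(m')}}$ one has $\sum_y \min(Q_1(y),Q_2(y)) \ge e^{-\db(Q_1,Q_2)-O(\sqrt{n})}$ is false in general. Writing $\mu(s)=\log\sum_y Q_1(y)^{1-s}Q_2(y)^s$, the tilted-measure/CLT argument at $s=\tfrac12$ only yields $\sum_y\min(Q_1,Q_2)\ge \tfrac12\exp\big(\mu(\tfrac12)-\tfrac12|\mu'(\tfrac12)|-O(\sqrt{n})\big)$, and the skewness term $|\mu'(\tfrac12)|$ is $\Theta(n)$ for non-pairwise-reversible pairs. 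A concrete counterexample already appears in this paper: for the cyclic channel of Theorem \ref{thm:mismatch}, two distinct inputs $x,x'$ satisfy $\sum_y\min(P_x(y),P_{x'}(y))=2p$ while $e^{-\db(x,x',P)}=\sqrt{p(1-2p)}+p\approx\sqrt{p}$, so for the constant codewords the left side is roughly $e^{-2\db}$ rather than $e^{-\db}$, a gap of a factor of two in the exponent that no $O(\sqrt{n})$ correction can absorb. (The sharp exponent for a single binary test is the Chernoff divergence $\dc\ge\db$, attained at the $s$ with $\mu'(s)=0$, not at $s=\tfrac12$.) As written, your converse therefore only establishes $\erateone(0)\le 2E^\star$, which is the well-known ``factor of two'' barrier. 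Closing it is genuinely the hard part of \cite{berlekamp}: one cannot argue pair-by-pair with $\db$ alone, because the pair minimizing the Bhattacharyya distance may be maximally skewed; the actual proof combines the skewed one-sided bounds (either $P_{e,m}\gtrsim e^{-D(Q_{1/2}\Vert Q_1)}$ or $P_{e,m'}\gtrsim e^{-D(Q_{1/2}\Vert Q_2)}$, whose exponents sum to $2\db$) across many codewords of a common composition, exploiting that the skewness cannot point ``the same way'' for all pairs simultaneously as $M\to\infty$. You would need to either reproduce that global argument or cite it; the pairwise reduction does not suffice.
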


\subsubsection{Properties of Chernoff divergence}
The following lemma concerns the Chernoff divergence of a composite channel; we are not aware of a reference for this result, so we provide a complete proof.

\begin{lemma}
    Let $P_1, P_2$ be channels such that $\calY_{P_1} \subseteq \calX_{P_2}$. For all $x, x' \in \calX_{P_1}$, we have
    \begin{multline}
        \dc(x, x', P_2 \circ P_1, s) \geq \min_{y, y' \in \calY_{P_1}} \big\{ \dc(y, y', P_2, s) \\ -(1-s) \log P_1(y|x) - s \log P_1(y'|x') \big\} - 2 \log |\calY_{P_1}|.
    \end{multline}
    \label{lem:distributive}
\end{lemma}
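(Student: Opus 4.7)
The plan is to prove the inequality directly by manipulating the sum defining $\dc(x,x',P_2 \circ P_1, s)$. Writing out the composite channel as $(P_2 \circ P_1)(z|x) = \sum_{y \in \calY_{P_1}} P_2(z|y)P_1(y|x)$, the quantity $e^{-\dc(x,x',P_2 \circ P_1, s)}$ becomes
\begin{equation*}
    \sum_{z} \Bigl(\sum_{y} P_2(z|y) P_1(y|x)\Bigr)^{1-s} \Bigl(\sum_{y'} P_2(z|y') P_1(y'|x')\Bigr)^{s},
\end{equation*}
and our target is an upper bound on this, since we want a lower bound on $\dc$.

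The key tool will be subadditivity of $t \mapsto t^\alpha$ for $\alpha \in [0,1]$: since $s \in [0,1]$ (by definition of Chernoff divergence) we have both $s$ and $1-s$ in $[0,1]$, so $(\sum_i a_i)^\alpha \leq \sum_i a_i^\alpha$ applies to both outer factors. Applying this with $a_i = P_2(z|y) P_1(y|x)$ (and analogously for $y'$) and then swapping the order of summation, the expression is bounded above by
\begin{equation*}
    \sum_{y,y'} P_1(y|x)^{1-s} P_1(y'|x')^{s} \sum_z P_2(z|y)^{1-s} P_2(z|y')^{s},
\end{equation*}
where the inner $z$-sum is exactly $e^{-\dc(y,y',P_2,s)}$ by definition.

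At this stage the proof is essentially done: I bound the double sum over $(y,y')$ by $|\calY_{P_1}|^2$ times its maximum term, take $-\log$, and note that the maximum of the product becomes a minimum of the corresponding sum inside the log. This yields exactly the stated bound, with the $-2\log|\calY_{P_1}|$ term coming from bounding the sum by the max.

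I do not expect any real obstacle here — the only subtlety is checking that the subadditivity inequality applies to both the $(1-s)$-th and $s$-th power factors simultaneously, which is immediate from $s \in [0,1]$. Everything else is rearrangement of sums and an application of the max-vs-sum bound.
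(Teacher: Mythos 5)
Your proposal is correct and follows essentially the same route as the paper's proof: apply subadditivity of $t \mapsto t^\alpha$ for $\alpha \in [0,1]$ to both the $(1-s)$-power and $s$-power factors, swap the order of summation to isolate $\sum_z P_2(z|y)^{1-s}P_2(z|y')^s = e^{-\dc(y,y',P_2,s)}$, and bound the double sum over $(y,y')$ by $|\calY_{P_1}|^2$ times its maximum before taking the negative logarithm. No gaps.
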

\begin{proof}
   Since $(a+b)^s \leq a^s + b^s$ for $s \in [0,1]$, we have for all $z \in \calY_{P_2}$ that
    \begin{align}
    (P_2 \circ P_1)(z|x)^s 
        &= \left(\sum_{y \in \calY_{P_1}} P_2(z|y) P_1(y|x)\right)^s \\
        &\leq \sum_{y \in \calY_{P_1}} P_2(z|y)^s P_1(y|x)^s,
    \end{align}
    and similarly
    \begin{equation}
    (P_2 \circ P_1)(z|x')^{1-s} \leq \sum_{y' \in \calY_{P_1}} P_2(z|y')^{1-s} P_1(y'|x')^{1-s}.
    \end{equation}
    It follows that
    \begin{align}
        &\sum_{z \in \calY_{P_2}} (P_2 \circ P_1)(z|x)^{1-s} (P_2 \circ P_1)(z|x')^{s}\\
        &\leq \sum_{z \in \calY_{P_2}} \bigg(\sum_{y \in \calY_{P_1}} P_2(z|y)^{1-s} P_1(y|x)^{1-s}  \nonumber \\ 
            &\hspace*{3cm} \times\sum_{y' \in\calY_{P_1}} P_2(z|y')^{s} P_1(y'|x')^{s} \bigg)\\
        &= \sum_{y, y' \in \calY_{P_1}} P_1(y|x)^{1-s} P_1(y'|x')^{s} \sum_{z \in \calY_{P_2}} P_2(z|y)^{1-s} P_2(z|y')^{s} \\
        &\leq |\calY_{P_1}|^2 \max_{y, y' \in \calY_{P_1}} P_1(y|x)^{1-s} P_1(y'|x')^{s} \nonumber \\ 
            &\hspace*{3cm} \times \sum_{z \in \calY_{P_2}} P_2(z|y)^{1-s} P_2(z|y')^{s}.
    \end{align}
    Taking the negative logarithm on both sides gives the desired result.
\end{proof}

Next, we state a simple tensorization property of Chernoff divergence.  This result is standard, but we provide a short proof in Appendix \ref{sec:pf_tensorize}.

\begin{lemma} \label{lem:tensorize}
    For any sequences $\vec{x} = (x_1,\dotsc,x_k)$ and $\vec{x}' = (x'_1,\dotsc,x'_k)$, we have
    \begin{equation}
        \dc(\vec{x}, \vec{x}', P^k, s) = \sum_{i=1}^k \dc(x_i, x'_i, P, s),
        \label{eq:iid1}
    \end{equation}
    and
    \begin{equation}
        \dc(\vec{x}, \vec{x}', P^k) = \max_{0\leq s\leq 1} \sum_{i=1}^k \dc(x_i, x'_i, P, s).
        \label{eq:iid2}
    \end{equation}
    \label{lem:iid}
\end{lemma}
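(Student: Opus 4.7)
The plan is to directly unwind the defining sum of $\dc(\vec{x},\vec{x}',P^k,s)$ and exploit the product structure of $P^k$. Substituting the definition \eqref{eq:dc} into the left-hand side of \eqref{eq:iid1} gives
\[
\dc(\vec{x},\vec{x}',P^k,s) = -\log \sum_{\vec{y} \in \calY^k} P^k(\vec{y}|\vec{x})^{1-s}\, P^k(\vec{y}|\vec{x}')^s,
\]
and then I would substitute $P^k(\vec{y}|\vec{x}) = \prod_{i=1}^k P(y_i|x_i)$ so that the argument of the logarithm becomes $\prod_{i=1}^k P(y_i|x_i)^{1-s}\, P(y_i|x_i')^s$.

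Next, since the summation index $\vec{y}$ ranges over the Cartesian product $\calY^k$ and the summand is a product over the coordinates $i$, the sum factors as $\prod_{i=1}^k \sum_{y_i \in \calY} P(y_i|x_i)^{1-s}\, P(y_i|x_i')^s$. Taking the negative logarithm then turns this product into a sum, and each resulting term is exactly $\dc(x_i, x_i', P, s)$ by \eqref{eq:dc}. This yields \eqref{eq:iid1}.

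For \eqref{eq:iid2}, I simply take the maximum over $s \in [0,1]$ on both sides of \eqref{eq:iid1}, using the definition of the (optimized) Chernoff divergence on the left.

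There is no real obstacle here: the only thing to be careful about is the standard manipulation that a sum over a product set of a product integrand factors into a product of sums (an elementary consequence of distributivity). I would treat this as the single non-notational step and state it cleanly, rather than belabor the bookkeeping.
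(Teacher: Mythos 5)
Your proposal is correct and follows exactly the paper's own proof in Appendix~\ref{sec:pf_tensorize}: expand $P^k$ as a product, factor the sum over $\calY^k$ into a product of single-coordinate sums, take the negative logarithm, and then maximize over $s\in[0,1]$ for \eqref{eq:iid2}. No differences worth noting.
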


Note that in general, it can happen that $\dc(\vec{x}, \vec{x}', P^k) \ne  \sum_{i=1}^k \dc(\vec{x}_i, \vec{x}'_i, P)$, due to the order of summation and maximization.

The following lemma gives a useful relation between the KL divergence and Chernoff divergence.  We are again not aware of an existing statement of this result, so we provide a short proof.

\begin{lemma}
    For any three distributions $Q_1, Q_2, Q_3$ defined over the same finite alphabet and any $0\leq s \leq 1$, we have
    \begin{equation}
        (1-s) D(Q_1\|Q_2) + sD(Q_1\|Q_3) \geq \dc(Q_2,Q_3, s).
    \end{equation}
    \label{minimum_value}
\end{lemma}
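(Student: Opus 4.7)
The plan is to rewrite the left-hand side so that the Chernoff quantity $e^{-\dc(Q_2,Q_3,s)} = \sum_x Q_2(x)^{1-s} Q_3(x)^s$ appears naturally as a normalization constant, and then to invoke the non-negativity of KL divergence.

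First I would combine the two KL divergences using linearity of expectation under $Q_1$:
\begin{equation}
    (1-s)D(Q_1\|Q_2) + sD(Q_1\|Q_3) = \sum_x Q_1(x) \log \frac{Q_1(x)}{Q_2(x)^{1-s} Q_3(x)^s}.
\end{equation}
Next, denote $R(x) = Q_2(x)^{1-s} Q_3(x)^s$ and $Z = \sum_x R(x)$. By the definition of Chernoff divergence in \eqref{eq:dc}, we have $Z = e^{-\dc(Q_2,Q_3,s)}$. Setting $\tilde R(x) = R(x)/Z$ gives a genuine probability distribution on the common alphabet, and splitting the log yields
\begin{equation}
    \sum_x Q_1(x) \log \frac{Q_1(x)}{R(x)} = \sum_x Q_1(x) \log \frac{Q_1(x)}{\tilde R(x)} - \log Z = D(Q_1\|\tilde R) + \dc(Q_2,Q_3,s).
\end{equation}

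Finally, since $D(Q_1\|\tilde R)\ge 0$, the claim follows. There is really no obstacle here: the only subtlety is handling the usual edge cases ($Q_1(x)=0$ contributes nothing under the convention $0\log 0 = 0$, and if $Q_2(x)=0$ or $Q_3(x)=0$ on the support of $Q_1$ then one side is $+\infty$ and the inequality is trivial), but once $R$ is normalized to a probability distribution the argument reduces to a one-line application of Gibbs' inequality.
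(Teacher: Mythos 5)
Your proof is correct and follows essentially the same route as the paper: both start from the identity $(1-s)D(Q_1\|Q_2)+sD(Q_1\|Q_3)=\sum_x Q_1(x)\log\frac{Q_1(x)}{Q_2(x)^{1-s}Q_3(x)^s}$, and your normalization of $Q_2^{1-s}Q_3^s$ followed by Gibbs' inequality is just the standard equivalent of the paper's direct application of Jensen's inequality to $-\log(\cdot)$.
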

\begin{proof}
    We have
    \begin{align}
        & (1-s) D(Q_1\|Q_2) + sD(Q_1\|Q_3) \nonumber \\
        &= (1-s) \sum_x Q_1(x) \log \frac{Q_1(x)}{Q_2(x)} +s \sum_x Q_1(x) \log \frac{Q_1(x)}{Q_3(x)}\\
        &= -\sum_x Q_1(x) \log \frac{{Q_2(x)^{1-s}Q_3(x)^s}}{Q_1(x)} \\ 
        &\geq -\log\Big(\sum_x Q_2(x)^{1-s}Q_3(x)^s\Big) = \dc(Q_2, Q_3,s),
    \end{align}
    where we applied Jensen's inequality on the convex function $-\log(\cdot)$.
\end{proof}

\subsubsection{Divergence-based bounds on probabilities}

The following result bounds the error exponent of a DMC in terms of the Chernoff divergence.  This result is implicit in prior works such as \cite{berlekamp}, but we also provide a short proof in Appendix \ref{sec:pf_initial}.

\begin{lemma} \label{lem:initial}
    Given $P$ and $M$, for any $S \subseteq \calX_P$ with $|S|=M$, we have
    \begin{equation}
        \rateone_{M,P} \geq \dcmin(S,P).
    \end{equation}
    Moreover, this lower bound can be attained using repetition coding, in which the message is encoded by repeating a corresponding element of $S$.
\end{lemma}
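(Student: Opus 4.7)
The plan is to exhibit an explicit protocol—repetition coding followed by maximum-likelihood (ML) decoding—that achieves error exponent $\dcmin(S, P)$. Since $\rateone_{M,P}$ is defined as the supremum over all protocols, this simultaneously establishes both claims of the lemma. Fix an enumeration $S = \{x^{(1)}, \ldots, x^{(M)}\}$; for $n$ channel uses, encode message $m$ by the constant sequence $\vec{x}^{(m)} = (x^{(m)}, \ldots, x^{(m)}) \in \calX_P^n$ and decode with ML.

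The central ingredient is the classical Chernoff bound for binary hypothesis testing, which (via $\min(a,b) \leq a^{1-s}b^s$ for $s \in [0,1]$, then optimizing $s$) bounds the pairwise error probability by
\begin{equation}
    \p(m \to m') \leq \exp\bigl(-\dc(\vec{x}^{(m)}, \vec{x}^{(m')}, P^n)\bigr),
\end{equation}
where $\p(m \to m')$ denotes the probability that the ML rule prefers $\vec{x}^{(m')}$ over $\vec{x}^{(m)}$ given that $m$ was transmitted. Because the codewords are constant sequences, Lemma \ref{lem:tensorize} (equation \eqref{eq:iid2}) collapses the right-hand side to $n \cdot \dc(x^{(m)}, x^{(m')}, P)$: the per-symbol divergences are identical, so the outer maximum over $s$ matches $\dc(x^{(m)}, x^{(m')}, P)$ without any loss from interchanging summation and maximization.

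A union bound over the $M-1$ alternative messages then yields
\begin{equation}
    P_e^{(1)}(n, M) \leq (M-1) \exp\bigl(-n \cdot \dcmin(S, P)\bigr),
\end{equation}
so $\liminf_{n \to \infty} -\tfrac{1}{n} \log P_e^{(1)}(n, M) \geq \dcmin(S, P)$, with the $O(1)$ prefactor $M - 1$ vanishing in the exponent. There is no significant obstacle: the argument uses only standard tools—the Chernoff bound for binary hypothesis testing, the tensorization from Lemma \ref{lem:tensorize}, and a union bound over $O(1)$ alternatives—consistent with the remark that the result is implicit in prior work such as \cite{berlekamp}.
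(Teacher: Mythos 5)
Your proposal is correct and follows essentially the same route as the paper's proof in Appendix B: repetition coding, a Chernoff-style bound on each pairwise error probability, and a union bound over the $M-1$ alternatives. The only difference is that you spell out the tensorization step (noting that for constant sequences the maximization over $s$ incurs no loss), which the paper subsumes into the phrase ``standard Chernoff-style upper bound.''
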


We will also use a well-known result on the probability of falling within a given type class.

\begin{lemma}
    \emph{(\cite[Theorem 11.1.4]{cover_thomas})} Let $Y_1, \ldots, Y_n$ be i.i.d.~random variables with distribution $P_Y$, and let $Q$ be the empirical distribution of $(Y_1,\ldots, Y_n)$. Then, for any $q \in \calP(\calY)$, we have
    \begin{equation}
        -\log \p(Q=q) \geq n\cdot D(q\|P_Y).
    \end{equation}
    \label{lem:type_class}
\end{lemma}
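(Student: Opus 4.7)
The plan is to follow the standard method-of-types argument. First I would dispose of the degenerate case: if $nq(y)$ fails to be a nonnegative integer for some $y$, then no length-$n$ sequence has empirical distribution exactly $q$, so $\p(Q=q)=0$ and the inequality is vacuously true (since $-\log 0 = +\infty \ge n D(q\|P_Y)$).

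In the nontrivial case, I would write the probability explicitly. Each sequence $\vec{y}$ with empirical distribution $q$ contributes probability $\prod_y P_Y(y)^{nq(y)}$, a quantity that depends only on $q$ and not on the particular sequence. The number of such sequences is the multinomial coefficient $\binom{n}{nq(y_1),\ldots,nq(y_k)}$, and $\p(Q=q)$ equals the product of these two quantities.

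The pivotal step is to upper bound this multinomial coefficient by $\exp(nH(q))$, where $H(q) = -\sum_y q(y)\log q(y)$. The cleanest way I see is to observe that the total probability under $q^{\otimes n}$ of the set of sequences with empirical distribution exactly $q$ is at most $1$; every such sequence has $q^{\otimes n}$-probability $\exp(-nH(q))$, so the multinomial coefficient itself must be at most $\exp(nH(q))$, without invoking Stirling.

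Putting the pieces together, $\p(Q=q) \le \exp(nH(q))\prod_y P_Y(y)^{nq(y)}$, and taking $-\log$ and simplifying yields $-\log\p(Q=q) \ge -nH(q) - n\sum_y q(y)\log P_Y(y) = n D(q\|P_Y)$, as required. The only step with any subtlety is the multinomial-coefficient bound; the remainder is routine bookkeeping that I would not grind through in detail.
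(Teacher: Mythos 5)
Your argument is correct and is exactly the standard method-of-types proof that the paper defers to via its citation of Cover and Thomas (Theorem 11.1.4): bound the size of the type class by $\exp(nH(q))$ using that the class has $q^{\otimes n}$-probability at most one, multiply by the common per-sequence probability, and take logarithms. The degenerate cases (non-integer $nq(y)$, or $P_Y(y)=0$ with $q(y)>0$) are handled properly, so there is nothing to add.
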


\section{Protocol Design and Analysis (Proof of Theorem \ref{thm:fixed_alphabet})} \label{sec:protocol}

In this section, we introduce our protocol for the fixed-$M$ regime, first at a high level and then with specific details, leading to a proof of Theorem \ref{thm:fixed_alphabet}.

\subsection{Block-Structured Protocol: Macroscopic View}

The high-level view of our protocol is similar to the case of $M=2$ \cite{teachlearn}, but the details are largely different with several new challenges; see Section \ref{sec:relation} for some discussion and comparison.

Let $f: \calY^k \rightarrow \calX^k$ be a function on length-$k$ sequences, and let $\calC = (x^{(1)}, \ldots, x^{(M)})$ be an arbitrary $(M,k)$ codebook. Consider the following block structured protocol:
\begin{itemize}
    \item Upon receiving $\Theta \in \{1,\ldots, M\}$, the encoder repeatedly sends $x^{(\Theta)}$ in blocks of $k$ symbols.
    \item The relay reads in blocks of length $k$ and sends to the decoder in blocks of $k$, using the mapping
    \begin{equation}
        W_{[ik+1, (i+1)k]} = f(Y_{[(i-1)k+1, ik]})
    \end{equation}
    for all $i\geq 1$. The first block $W_{[1,k]}$ is arbitrary, and is ignored by the decoder.
    \item After receiving $Z_{[1,n]}$, the decoder forms the estimate $\hat{\Theta}_n$; we will focus on maximum-likelihood decoding.
\end{itemize}

We momentarily ignore the fact that the encoder uses block-wise repetition coding, and consider the effect of the relay's strategy.  We see that its strategy leads to the encoder being able to send ``directly'' to the decoder using $\lfloor n/k-1 \rfloor$ uses of the composite channel $P^k \circ f \circ P^k$.  Hence, by considering an optimal encoder/decoder pair, we deduce that
\begin{equation}
    P^{(2)}_e(n, M, P) \leq P^{(1)}_e(\lfloor n/k-1 \rfloor, M, P^k\circ f \circ P^k).
    \label{eq:two_to_one}
\end{equation}
Accordingly, we can bound the 2-hop error exponent in terms of a 1-hop error exponent:
\begin{align}
    \ratetwo_{M,P} &= \liminf_{n\rightarrow \infty}  -\frac1n \log P^{(2)}_e(n, M, P) \\
    &\geq \liminf_{n\rightarrow \infty} -\frac1n \log P^{(1)}_e(\lfloor n/k-1 \rfloor, M, P^k\circ f \circ P^k)\\
    &= \liminf_{n\rightarrow \infty} -\frac{\lfloor n/k-1 \rfloor}n \frac{1}{\lfloor n/k-1 \rfloor} \nonumber \\
        &\hspace*{1cm}\times\log P^{(1)}_e(\lfloor n/k-1 \rfloor, M, P^k\circ f \circ P^k)\\
    & = \frac1k \rateone_{M,P^k \circ f \circ P^k}. \label{eq:E2_to_E1}
\end{align}
Now consider the $(M,k)$-codebook $\calC$ introduced above. Since the codewords of $\calC$ are inputs to $P^k$ (and hence to $P^k \circ f \circ P^k$), we can apply Lemma \ref{lem:initial} to obtain
\begin{equation}
    \rateone_{M, P^k \circ f \circ P^k} \geq \dcmin(\calC, P^k \circ f \circ P^k). \label{eq:E2_to_E2_weakened}
\end{equation}
Next, we discuss the role of the encoder.  In principle, to attain the lower bound in \eqref{eq:E2_to_E1}, the encoder (and decoder) may need to perform complicated coding over the ``super-alphabet'' $\calX^k$.  However, since Lemma \ref{lem:initial} is based on repetition coding, such coding is no longer necessary for the weakened lower bound \eqref{eq:E2_to_E2_weakened}, and instead our protocol described above suffices.  Thus, we have established the following.

\begin{lemma}
    Given $M$ and $P$, we have for any $(k,\calC,f)$ that
    \begin{equation}
        \ratetwo_{M,P} \geq 
        \frac1k \dcmin(\calC, P^k \circ f \circ P^k).
    \end{equation}
    \label{lem:block}
    Moreover, this lower bound is achieved by our protocol described above.
\end{lemma}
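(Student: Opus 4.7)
The plan is to turn the proposed block-structured protocol into a bound by reducing the 2-hop problem to a 1-hop problem over the composite channel $P^k \circ f \circ P^k$, and then to apply Lemma \ref{lem:initial} to lower bound that 1-hop exponent via Chernoff distances. This is exactly the chain of reasoning already sketched in equations \eqref{eq:two_to_one}--\eqref{eq:E2_to_E2_weakened}; the lemma simply packages those intermediate steps into a reusable statement.

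First, I would fix the relay's strategy to be the block-wise application of $f$ (discarding the first block). Because the relay's $i$-th output block depends only on the $(i-1)$-st input block, this partitions the horizon into $\lfloor n/k-1\rfloor$ independent uses of the composite DMC $P^k\circ f\circ P^k$ from encoder to decoder. Any encoder/decoder design layered on top of this block scheme therefore satisfies
\begin{equation}
    P_e^{(2)}(n, M, P) \;\leq\; P_e^{(1)}\bigl(\lfloor n/k-1\rfloor,\, M,\, P^k \circ f \circ P^k\bigr),
\end{equation}
and taking $-\frac{1}{n}\log$ followed by $\liminf_{n\to\infty}$ yields $\ratetwo_{M,P} \geq \frac{1}{k}\rateone_{M, P^k \circ f \circ P^k}$, with the factor $\frac{1}{k}$ arising from the time-rescaling between encoder symbols and composite-channel uses.

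Second, I would apply Lemma \ref{lem:initial} to the composite channel $P^k \circ f \circ P^k$ with $S$ taken to be the codeword set $\{x^{(1)},\dotsc,x^{(M)}\}$ of $\calC$ (viewed as $M$ inputs to $P^k$). This immediately gives $\rateone_{M, P^k \circ f \circ P^k} \geq \dcmin(\calC, P^k \circ f \circ P^k)$, and chaining with the previous inequality establishes the stated bound. The ``achieved by the protocol'' clause is automatic: Lemma \ref{lem:initial}'s lower bound is attained by repetition coding of the $M$ codewords through the composite channel, which is precisely what the stated protocol implements, since the encoder repeats $x^{(\Theta)}$ in length-$k$ blocks while the relay and decoder together emulate the 1-hop repetition-ML decoder for $P^k\circ f\circ P^k$.

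There is no substantial obstacle in the proof; all the ingredients are already in hand. The only minor care point is the off-by-one arising from the uninformative first relay block, which contributes only an $O(1/n)$ correction and is absorbed harmlessly in the exponent limit.
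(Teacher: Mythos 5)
Your proposal is correct and follows essentially the same route as the paper: the reduction $P_e^{(2)}(n,M,P) \le P_e^{(1)}(\lfloor n/k-1\rfloor, M, P^k\circ f\circ P^k)$, the resulting bound $\ratetwo_{M,P} \ge \frac{1}{k}\rateone_{M,P^k\circ f\circ P^k}$, and the application of Lemma \ref{lem:initial} with $S = \calC$ are exactly the steps the paper packages into this lemma. Your observation that the repetition-coding basis of Lemma \ref{lem:initial} is what makes the stated block protocol (rather than general coding over the super-alphabet $\calX^k$) sufficient is also precisely the paper's justification for the ``achieved by the protocol'' clause.
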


\subsection{Details and Analysis of the Protocol when \texorpdfstring{$|\calX| = M$}{X=M}}
\label{subsection:protocol_details}

 Throughout this subsection, we adopt the additional assumption $|\calX| = M$; given this assumption, we may set $\calX = \{1,\dotsc,M\}$ without loss of generality.  This turns out to be a convenient stepping stone towards proving the general case in the following subsection. Although Theorem \ref{thm:fixed_alphabet} assumes that the channels are pairwise reversible, all results in this subsection do not require this assumption.

Define the shorthand
\begin{equation}
    \altrate = \min_{m\neq m'} \db(m,m',P). \label{eq:shorthand}
\end{equation}
The function $f$ in our protocol maps sequences to sequences, but will be defined via another function $g$ mapping distributions to distributions.  Specifically, letting $m_q = \argmin_j D(q\|P_j)$ for each $q \in \calP(\calY)$ (with arbitrary tie-breaking), we define $g: \calP(\calY) \rightarrow \calP(\calX)$ as follows:

\begin{equation}
    g(q)_x = \begin{cases}
    \frac1{|\calX|} \min\Big(\frac{D(q\|P_{m_q})}{\altrate}, 1\Big) & x\neq m_q \\
    1-\frac{|\calX|-1}{|\calX|} \min\Big(\frac{D(q\|P_{m_q})}{\altrate}, 1\Big) & x = m_q.
    \end{cases}
    \label{eq:g_dfn}
\end{equation}
In the case of a binary symmetric channel with $M=2$, this function is closely related (but not identical) to that used in our earlier work \cite{teachlearn}; see Section \ref{sec:relation} for further discussion. 

We first establish a useful property of $g$.
\begin{lemma}
    With $|\calX| = M$, for all $q \in \calP(\calX)$, $m, m' \in \calX$ and $m\neq m'$, we have
    \begin{equation}
        D(q\|P_m) \geq \altrate (1-g(q)_m + g(q)_{m'}).
        \label{eq:linear_bound}
    \end{equation}
    \label{lem:linear_bound}
\end{lemma}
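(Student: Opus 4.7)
The plan is a case analysis based on the positions of $m$ and $m'$ relative to $m_q$, combined with a single application of Lemma \ref{minimum_value} at $s=1/2$. To streamline bookkeeping, let $\alpha = \min(D(q\|P_{m_q})/\altrate, 1) \in [0,1]$, so that the definition in \eqref{eq:g_dfn} reads $g(q)_{m_q} = 1 - \frac{M-1}{M}\alpha$ and $g(q)_x = \frac{\alpha}{M}$ for $x \ne m_q$. Substituting these expressions into $\altrate(1 - g(q)_m + g(q)_{m'})$ collapses the right-hand side to one of three values: $\altrate\alpha$ when $m = m_q$, $\altrate(2-\alpha)$ when $m' = m_q$ (and $m \ne m_q$), and $\altrate$ when neither of $m, m'$ equals $m_q$.

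The case $m = m_q$ is immediate from the definition of $\alpha$, since $D(q\|P_m) = D(q\|P_{m_q}) \geq \altrate\alpha$ (with equality when $D(q\|P_{m_q}) \leq \altrate$ and strict inequality otherwise). For the remaining two cases I need a lower bound on $D(q\|P_m)$ when $m \ne m_q$, and here the workhorse is Lemma \ref{minimum_value} applied with $Q_1 = q$, $Q_2 = P_m$, $Q_3 = P_{m_q}$, and $s = 1/2$, which yields
\begin{equation}
    \tfrac{1}{2} D(q\|P_m) + \tfrac{1}{2} D(q\|P_{m_q}) \geq \dc(P_m, P_{m_q}, \tfrac{1}{2}) = \db(m, m_q, P) \geq \altrate,
\end{equation}
so that $D(q\|P_m) \geq 2\altrate - D(q\|P_{m_q})$. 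Combined with the trivial inequality $D(q\|P_m) \geq D(q\|P_{m_q})$ (which holds because $m_q$ is an argmin of $D(q\|P_j)$), this suffices in both remaining cases. When $m' = m_q$ and $D(q\|P_{m_q}) \leq \altrate$, the target $\altrate(2-\alpha)$ equals $2\altrate - D(q\|P_{m_q})$ exactly, matching the bound from Lemma \ref{minimum_value}; when $D(q\|P_{m_q}) > \altrate$ the target collapses to $\altrate$ and is dominated by $D(q\|P_m) \geq D(q\|P_{m_q}) > \altrate$. In the third case the target is just $\altrate$, which follows from $2\altrate - D(q\|P_{m_q}) \geq \altrate$ when $D(q\|P_{m_q}) \leq \altrate$ and from $D(q\|P_m) \geq D(q\|P_{m_q}) > \altrate$ otherwise.

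Conceptually the proof is mostly bookkeeping, with Lemma \ref{minimum_value} at $s = 1/2$ doing the real work by converting the inaccessible quantity $D(q\|P_m)$ into a comparison with $D(q\|P_{m_q})$ mediated by the Bhattacharyya distance. The main pitfall I anticipate is the piecewise structure of $\alpha$: without the clipping at $1$, the construction of $g$ would not produce a valid probability distribution, but the clipping also forces the case analysis to split on whether $D(q\|P_{m_q}) \leq \altrate$. Getting the two regimes to align cleanly with the three positional cases is the only part of the argument that requires any care, and it is precisely the reason one invokes Lemma \ref{minimum_value} at $s = 1/2$ rather than at a general $s$, since only this choice makes the Chernoff divergence equal to $\db$ and thus bounded below by $\altrate$ uniformly in the channel (without needing pairwise reversibility).
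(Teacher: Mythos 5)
Your proof is correct and follows essentially the same route as the paper's: a three-way case split on whether $m_q$ equals $m$, $m'$, or neither, with Lemma \ref{minimum_value} at $s=\tfrac12$ supplying the bound $D(q\|P_m) \ge 2\altrate - D(q\|P_{m_q})$ and the argmin property supplying $D(q\|P_m)\ge D(q\|P_{m_q})$. Your bookkeeping via $\alpha$ is if anything slightly cleaner, correctly expressing the right-hand side in terms of $D(q\|P_{m_q})$ throughout.
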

\begin{proof}
    (\underline{Case 1}: $m_q=m$). From the definition of $g$ in \eqref{eq:g_dfn}, we obtain
    \begin{equation}
        \altrate (1-g(q)_m + g(q)_{m'}) = \altrate \cdot \min\Big(\frac{D(q\|P_{m})}{\altrate}, 1\Big) \leq D(q\|P_m).
    \end{equation}
    (\underline{Case 2}: $m_q=m'$).
    By Lemma \ref{minimum_value} with $s = \frac{1}{2}$, we have
    \begin{equation}
        \frac12 D(q\|P_m) + \frac12 D(q\|P_{m'}) \geq \db(m, m', P) \geq \altrate.
        \label{eq:lem4case1}
    \end{equation}
    Moreover, since $m' = \argmin_j D(q\|P_j)$, we have
    \begin{equation}
        D(q\|P_m) \geq \frac12 D(q\|P_m) + \frac12 D(q\|P_{m'}) \geq \altrate.
        \label{eq:lem4case2}
    \end{equation}
    Therefore, \eqref{eq:g_dfn} gives
    \begin{align}
        \altrate (1-g(q)_m + g(q)_{m'}) 
            &= \altrate \cdot \left(2- \min\Big(\frac{D(q\|P_{m})}{\altrate}, 1\Big)\right) \\
            &= \max(\altrate, 2\altrate - D(q\|P_m)) \\ 
            &\leq D(q\|P_m),
    \end{align}
    where in the last step we apply \eqref{eq:lem4case2}.
    
    (\underline{Case 3}: $m_q \notin \{m, m'\}$). In this case, we have $D(q\|P_m) \geq \altrate$ (using the same reasoning as \eqref{eq:lem4case2}) and $g(q)_m=g(q)_{m'}$, and the conclusion follows immediately.
\end{proof}

Next, to each $q \in \calP(\calY)$, we associate a codeword $w^{(q)}$.  Up to rounding issues, this codeword simply repeats each $x \in \calX$ for $k \cdot g(q)_x$ times, and thus takes a form such as $1112222222233$.  To account for rounding, the general procedure for constructing $w^{(q)}$ is as follows: {\em For each $i = 1,\ldots, M$, append 
\begin{equation}
    \left\lfloor k \sum_{i'\leq i} g(q)_{i'} \right\rfloor - \left\lfloor k \sum_{i' < i} g(q)_{i'} \right\rfloor
\end{equation}
copies of symbol $i$, in sequence.}  From this construction, we have
\begin{equation}
    |\text{(number of times $x$ appears in $w^{(q)}$)} - k\cdot g(q)_x| \leq 1.
    \label{eq:rounding_error}
\end{equation}
Observe that $w \,:\, \calP(\calY) \to \calX^k$ maps a probability distribution on $\calY$ to a length-$k$ sequence on $\calX$.  We also introduce the function $\empdist \,:\, \calY^k \to \calP(\calY)$ that simply maps a length-$k$ sequence to its type (i.e., its empirical distribution).  With these definitions in place, we can specify our choice of $f$ as follows:
\begin{equation}
    f(\vec{y}) = w^{(\hat{q})}, \quad \text{where}~~ \hat{q} = \empdist(\vec{y}). \label{eq:choice_f}
\end{equation}
That is, $f$ first maps the received length-$k$ sequence to an empirical distribution, which in turn is mapped to the associated $w^{(\cdot)}$ sequence.

Recalling that $\calX = \{1,\dotsc,M\}$ in this subsection, we consider the `trivial' $(M,k)$-codebook $\calC = (x^{(1)}, \ldots, x^{(M)})$ where $x^{(m)}$ is simply $k$ copies of symbol $m$.  Then, we have the following.

\begin{lemma}
    With $\calX = \{1,\dotsc,M\}$, let $f$ be defined as above and $\calC$ be the `trivial' $(M,k)$-codebook. Then, we have
    \begin{equation}
        \dbmin(\calC, P^k \circ f \circ P^k) \geq (k-2) \cdot \altrate - 2|\calY| \log(k+1).
    \end{equation}
    \label{lem:db_sep}
\end{lemma}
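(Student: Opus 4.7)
The plan is to factor $f = w \circ \empdist$ so that $P^k \circ f \circ P^k$ decomposes as $P_2 \circ P_1$ with $P_1 = \empdist \circ P^k$ and $P_2 = P^k \circ w$; the crucial point is that the intermediate alphabet $\calY_{P_1}$ is the set of length-$k$ empirical types on $\calY$, of cardinality at most $(k+1)^{|\calY|}$. Applying Lemma~\ref{lem:distributive} at $s = \tfrac{1}{2}$ (so that $\dc$ reduces to $\db$), and bounding $-\log P_1(q \mid x^{(m)}) \geq k D(q\|P_m)$ via Lemma~\ref{lem:type_class} (and analogously for $q', m'$), I obtain
\begin{align*}
\db(x^{(m)}, x^{(m')}, P^k \circ f \circ P^k) \geq \min_{q,q'} \Big\{ & \db(w^{(q)}, w^{(q')}, P^k) \\
& + \tfrac{k}{2} D(q\|P_m) + \tfrac{k}{2} D(q'\|P_{m'}) \Big\} - 2|\calY|\log(k+1).
\end{align*}

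For each pair $(q, q')$, I bound $\db(w^{(q)}, w^{(q')}, P^k)$ by tensorization (Lemma~\ref{lem:tensorize} at $s = \tfrac{1}{2}$): it equals $\sum_i \db(w_i^{(q)}, w_i^{(q')}, P)$, which is at least $\altrate \cdot d_H(w^{(q)}, w^{(q')})$, since every disagreeing coordinate contributes at least $\altrate$ by~\eqref{eq:shorthand}. To lower-bound $d_H$, I count disagreements coming from the two distinguished symbols. Letting $n_a^{(q)}$ denote the number of positions of $w^{(q)}$ carrying symbol $a$, the $m$-positions of $w^{(q)}$ that fail to be matched by $m$-positions of $w^{(q')}$ contribute at least $\max(0, n_m^{(q)} - n_m^{(q')}) \geq \max(0, k(g(q)_m - g(q')_m) - 2)$ disagreements by the rounding guarantee~\eqref{eq:rounding_error}, and the symmetric argument on $m'$-positions of $w^{(q')}$ gives $d_H \geq \max(0, k(g(q')_{m'} - g(q)_{m'}) - 2)$. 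Combining these via $\max(A, B) \geq \tfrac{1}{2}(A + B)$ together with the trivial $d_H \geq 0$, a brief case check produces the key bound
\begin{equation}
d_H(w^{(q)}, w^{(q')}) \geq \tfrac{k}{2}\big((g(q)_m - g(q')_m) + (g(q')_{m'} - g(q)_{m'})\big) - 2. \label{eq:plan_hamming}
\end{equation}

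Finally, applying Lemma~\ref{lem:linear_bound} to $(q, m, m')$ and to $(q', m', m)$ and summing gives
\begin{equation*}
D(q\|P_m) + D(q'\|P_{m'}) \geq \altrate\big(2 - (g(q)_m - g(q')_m) - (g(q')_{m'} - g(q)_{m'})\big).
\end{equation*}
Multiplying~\eqref{eq:plan_hamming} by $\altrate$ and this inequality by $\tfrac{k}{2}$ and summing, the common $g$-expressions cancel exactly, leaving $(k - 2)\altrate$ uniformly in $(q, q')$; taking the minimum over $m \neq m'$ then gives the claim. The main obstacle I anticipate is discovering the Hamming inequality~\eqref{eq:plan_hamming}: its coefficient of $\tfrac{k}{2}$ is exactly what is needed to match the $\altrate$ coefficient in Lemma~\ref{lem:linear_bound} and produce the cancellation. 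This is the very reason for the factor $1/|\calX|$ on the non-ML coordinates in the construction~\eqref{eq:g_dfn}; a cruder $\ell_1$-based Hamming bound would leave a residual loss of order $M \altrate$ instead of the clean $2\altrate$.
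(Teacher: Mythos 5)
Your proposal is correct and follows essentially the same route as the paper's proof: the same factorization $f = w\circ\empdist$ with Lemma \ref{lem:distributive} at $s=\tfrac12$, the type-class bound, tensorization plus the per-coordinate $\altrate$ bound, and the same cancellation against Lemma \ref{lem:linear_bound}. The only cosmetic difference is that the paper phrases your Hamming bound \eqref{eq:plan_hamming} via $\|g(q)-g(q')\|_{\infty}$ before averaging the $m$ and $m'$ coordinates, which is the same counting argument.
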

\begin{proof}
Throughout the proof, we let $\calP_k(\calY)$ denote the set of all empirical distributions (i.e., types) associated with sequences in $\calY^k$.  
Observe that our choice of $f$ in \eqref{eq:choice_f} gives $P^k \circ f \circ P^k = (P^k \circ w) \circ (\empdist \circ P^k)$.  As a result, we can apply Lemma \ref{lem:distributive} to obtain
\begin{align}
    &\db(x^{(m)}, x^{(m')}, P^k \circ f \circ P^k) \geq -2 |\calY| \log (k+1) \nonumber \\
    & + \max_{q,q' \in \calP_k(\calY)} \bigg\{ \db(q, q', P^k \circ w) - \frac12 \log\big[(\empdist\circ P^k)(q|x^{(m)})\big] \nonumber \\
    &\hspace*{3cm}- \frac12 \log\big[(\empdist\circ P^k)(q'|x^{(m')})\big] \bigg\},
    \label{eq:channel_decompose}
\end{align}
where we note the following:
\begin{itemize}
    \item $\empdist \circ P^k$ and $P^k \circ w$ play the role of $P_1$ and $P_2$ in Lemma \ref{lem:distributive}, and the channels are unconventional in the sense that the output alphabet of $P_1$ and input alphabet of $P_2$ are both $\calP_k(\calY)$.
    \item The first term in \eqref{eq:channel_decompose} arises because $|\calP_k(\calY)| \le (k+1)^{|\calY|}$ \cite[Ch.~2]{ck}.
    \item $x^{(m)}, x^{(m')}$ replaces $(x,x')$ in Lemma \ref{lem:distributive}, and $q,q'$ replaces $y,y'$.
\end{itemize}
We now bound the terms appearing in \eqref{eq:channel_decompose}.  By Lemma \ref{lem:type_class}, and recalling that $P_m(\cdot) = P(\cdot|m)$ (with $\calX = \{1,\dotsc,M\}$ in this subsection) and the use of a trivial codebook, we have
\begin{equation}
    -\log\big[(\empdist\circ P^k)(q|x^{(m)})\big] \geq k\cdot D(q\|P_m).
    \label{eq:sanov1}
\end{equation}
and similarly,
\begin{equation}
     -\log [(\empdist\circ P^k)(q'|x^{(m')})] \geq k\cdot D(q'\|P_{m'}).
     \label{eq:sanov2}
\end{equation}
To bound $\db(q, q', P^k \circ w)$ in \eqref{eq:channel_decompose}, we write
\begin{align}
    \db(q, q', P^k \circ w) 
    &= \db(P^k_{w^{(q)}}, P^k_{w^{(q')}}) \label{eq:remove_w} \\ 
    &= \sum_{i=1}^k \db(P_{w^{(q)}_i}, P_{w^{(q')}_i}) \label{eq:tensorize} \\
    &\geq \sum_{i=1}^k \boldsymbol{1}(w_i^{(q)} \ne w_i^{(q')}) \cdot \altrate \label{eq:expand_indicator} \\
    &\geq (k\cdot \|g(q)-g(q')\|_\infty - 2) \altrate, \label{eq:sup_norm}
\end{align}
where:
\begin{itemize}
    \item \eqref{eq:remove_w} follows since $w$ is a deterministic function.
    \item \eqref{eq:tensorize} follows from the tensorization of $\db$ (see Lemma \ref{lem:tensorize}).
    \item \eqref{eq:expand_indicator} follows since the Bhattacharyya distance is non-negative, and is at least $\altrate$ when the inputs differ (see \eqref{eq:shorthand}).

    \item \eqref{eq:sup_norm} holds because for all $x \in \calX$, the number of times $x$ appears in exactly one of $w^{(q)}$ and $w^{(q')}$ is at least the difference in the number of occurrences of $x$; the $-2$ term comes from the `rounding error' in  \eqref{eq:rounding_error}.
\end{itemize}

Substituting \eqref{eq:sanov1}, \eqref{eq:sanov2}, and \eqref{eq:sup_norm} into \eqref{eq:channel_decompose}, we obtain
\begin{align}
   & \db(x^{(m)}, x^{(m')}, P^k) 
   \geq -2 |\calY|\log(k+1) -2 \altrate \nonumber \\ 
   &+ k \cdot \max_{q,q' \in \calP(\calY)}\Big\{ \|g(q)-g(q')\|_{\infty}\altrate + \frac12 D(q\|P_m) \nonumber \\
   &\hspace*{4.6cm}+ \frac12 D(q'\|P_{m'})\Big\}.
   \label{eq:db_bound}
\end{align}
Next, observe that
\begin{align}
    \|&g(q)-g(q')\|_{\infty} 
        = \max_{j} |g(q)_j - g(q')_j| \\
        & \ge \frac12(g(q)_m - g(q')_m) + \frac12(g(q')_{m'} - g(q)_{m'}),
\end{align}
and combining this with Lemma \ref{lem:linear_bound} gives
\begin{align}
    &\|g(q)-g(q')\|_{\infty}\altrate + \frac12 D(q\|P_m) + \frac12 D(q'\|P_{m'}) \\
    &\geq \altrate \Big(\frac12(g(q)_m - g(q')_m) + \frac12(g(q')_{m'} - g(q)_{m'})\Big) \nonumber \\
        &~~~ + \frac12 \altrate\big(2- g(q)_m + g(q)_{m'} -g(q')_{m'} + g(q')_{m}\big)\\
    &= \altrate
    \label{eq:overall_bound}
\end{align}
Combining \eqref{eq:db_bound} and \eqref{eq:overall_bound} completes the proof of Lemma \ref{lem:db_sep}.
\end{proof}

Combining Lemma \ref{lem:block} and Lemma \ref{lem:db_sep}, recalling the definition of $\altrate$ in \eqref{eq:shorthand}, and noting that $k$ can be arbitrarily large, we obtain the following.

\begin{theorem}
    In the case that $|\calX| = M$, we have
    \begin{equation}
        \ratetwo_{|\calX|} \geq \min_{m\neq m'} \db(m,m',P).
    \end{equation}
    \label{thm:m_inputs}
\end{theorem}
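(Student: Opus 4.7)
The plan is to combine Lemma \ref{lem:block} with Lemma \ref{lem:db_sep} and then let $k \to \infty$, since nearly all of the heavy lifting has already been done in the preceding development. The one small observation needed to glue the two lemmas together is the inequality $\dcmin \geq \dbmin$: recall that $\db(\cdot,\cdot) = \dc(\cdot, \cdot, \tfrac{1}{2})$ while $\dc(\cdot, \cdot) = \max_{s \in [0,1]} \dc(\cdot, \cdot, s)$, so the Chernoff divergence (with optimized $s$) always dominates the Bhattacharyya distance. This inequality persists after minimizing over codeword pairs in the trivial codebook, so that $\dcmin(\calC, P^k \circ f \circ P^k) \geq \dbmin(\calC, P^k \circ f \circ P^k)$.

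Concretely, I would proceed as follows. First, instantiate Lemma \ref{lem:block} with the trivial $(M,k)$-codebook $\calC$ (where $x^{(m)}$ consists of $k$ copies of symbol $m$) and the function $f$ defined in \eqref{eq:choice_f}, which yields
\begin{equation}
    \ratetwo_{M,P} \geq \frac{1}{k} \dcmin(\calC, P^k \circ f \circ P^k).
\end{equation}
Next, apply the observation $\dcmin \geq \dbmin$ together with the bound from Lemma \ref{lem:db_sep}, giving
\begin{equation}
    \ratetwo_{M,P} \geq \frac{(k-2)\altrate - 2|\calY|\log(k+1)}{k}.
\end{equation}
Finally, since the left-hand side does not depend on $k$, I would send $k \to \infty$: the term $\frac{2|\calY|\log(k+1)}{k}$ vanishes and the prefactor $\frac{k-2}{k}$ tends to $1$, so the bound reduces to $\ratetwo_{M,P} \geq \altrate = \min_{m \neq m'} \db(m,m',P)$, recalling the shorthand \eqref{eq:shorthand}.

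There is no real obstacle here, since the technical content is already encapsulated in Lemma \ref{lem:block} (which turns the two-hop exponent into a one-hop exponent via the block protocol) and Lemma \ref{lem:db_sep} (which controls the Bhattacharyya separation of the composite channel $P^k \circ f \circ P^k$). The only subtlety worth flagging is that the statement of the theorem uses $|\calX|$ in the subscript of $\ratetwo$ rather than $M$; this is consistent because the whole subsection operates under the assumption $|\calX| = M$, so $\ratetwo_{|\calX|}$ and $\ratetwo_{M}$ refer to the same quantity.
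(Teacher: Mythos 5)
Your proposal is correct and matches the paper's own argument, which likewise combines Lemma \ref{lem:block} with Lemma \ref{lem:db_sep} (implicitly using $\dcmin \ge \dbmin$) and lets $k \to \infty$. The extra details you spell out, such as the $\dc \ge \db$ domination and the vanishing of the $\frac{2|\calY|\log(k+1)}{k}$ term, are exactly the steps the paper leaves implicit.
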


\subsection{Proof of Theorem \ref{thm:fixed_alphabet}} \label{sec:pf_fixed}

We are now ready to  remove the assumption that $\calX = M$, and prove Theorem \ref{thm:fixed_alphabet}.  
For any $\edag < \rateone_M$, fix an $(M,\ell)$-codebook $\calC'$ (with sufficiently large $\ell$) satisfying the conditions of Theorem \ref{thm:berlekamp}. Since $P$ is pairwise reversible, the quantity
\begin{equation}
    \dc(x^{(m)}, x^{(m')}, P^\ell) = \max_{0\leq s \leq 1} \sum_{i=1}^\ell \dc(x^{(m)}_i, x^{(m')}_i, P, s)
\end{equation}
(see Lemma \ref{lem:tensorize}) is maximized at $s=\frac12$.  Therefore, for all $m \ne m'$, we have
\begin{equation}
    \dc(x^{(m)}, x^{(m')}, P^\ell) = \db(x^{(m)}, x^{(m')}, P^\ell) \geq \ell \cdot \edag,
    \label{eq:strong_codebook}
\end{equation}
where the last inequality is guaranteed by the preceding application of Theorem \ref{thm:berlekamp}.

Let $P'$ be the restriction of $P^\ell$ to the codewords in $\calC'$, so that $\calX_{P'} = \calC'$ and $\calY_{P'} = \calY^{\ell}$.  Noting that $\ell$ is a fixed (albeit possibly large) constant, $P'$ can be viewed as a discrete memoryless channel (on a large alphabet).  In particular, $P'$ satisfies the condition for Theorem \ref{thm:m_inputs} with $|\calX_{P'}| = M$, as well as $\min_{m\neq m'} \db(x^{(m)}, x^{(m')},P') \geq \ell \cdot \edag$.  Hence, we have
\begin{equation}
    \ratetwo_{M, P'} \geq \ell  \cdot \edag. \label{eq:P'_lb}
\end{equation}
Now the idea is that we can run the protocol from the previous subsection with ``input alphabet'' $\calC'$ and ``output alphabet'' $\calY^{\ell}$, meaning that blocks of $\ell$ symbols are being treated as one ``super-symbol''. By doing so, we obtain
\begin{equation}
    P^{(2)}_e(n,M,P) \leq P^{(2)}_e(\lfloor n/\ell \rfloor,M,P^\ell) \leq P^{(2)}_e(\lfloor n/\ell \rfloor,M,P'), \label{eq:super_symbol}
\end{equation}
which implies via \eqref{eq:P'_lb} that
\begin{equation}
    \ratetwo_{M,P} \geq \frac1\ell\ratetwo_{M, P'} \geq \edag,
\end{equation}
as desired.

\subsection{Relation to Previous Work} \label{sec:relation}

The function $g(q)$ in \eqref{eq:g_dfn} is central to our analysis, and an analogous function played a similar role in our previous work on the BSC with $M=2$ \cite{teachlearn}.  

In fact, the latter turns out to be closely related to the former.  For the BSC with crossover probability $p$, the distributions $P_m$ are Bernoulli with parameters $p$ and $1-p$, and a simple calculation shows that $\altrate$ in \eqref{eq:shorthand} is equivalent to $D\big(\frac{1}{2} \| p\big)$ (where $D(a\|b)$ represents the KL divergence between Bernoulli distributions).  Hence, when $q$ corresponds to a fraction $(\alpha,1-\alpha)$ of the two symbols with $\alpha \le \frac{1}{2}$, the choice of $g(q)$ in \eqref{eq:g_dfn} corresponds to having $\frac{1}{2}\min\big(\frac{D(\alpha\|p)}{2D(1/2\|p)}, 1\big)$ of one symbol, and $1-\frac{1}{2}\min\big(\frac{D(\alpha\|p)}{2D(1/2\|p)}, 1\big)$ of the other.

In Figure \ref{fig:bsc}, we compare this choice of $g$ to the one made in our previous work \cite{teachlearn} for the BSC.  The latter is perhaps more intuitive due to being monotone -- if more 1s are received, then more 1s are sent.  This property is also likely to be beneficial in practical scenarios, since it leads to the relay indicating higher certainty in blocks that were lucky enough to have very few bit flips.  Despite this, it turns out that the two choices of $g$ have identical error exponents.  Essentially, using the monotone curve only reduces the probability associated with error events that are non-dominant in dictating the overall error exponent.

The use of a monotone curve also has a natural interpretation that the belief on $\Theta$ increases together with a suitably-defined likelihood ratio; we used this interpretation to provide a generalization beyond BSCs in \cite{teachlearn}.  The difficulty when $M > 2$ is that there are ${\binom{M}{2}}$ pairwise likelihood ratios of interest.  Accordingly, we were unable to find a ``neat'' solution that maintains a similar kind of monotonicity to the $M=2$ case; if we partially define a function on the simplex by starting with the lines between pairs of distributions $\{P_m\}_{m=1}^M$ (supposing $|\calX| = M$), we are left with considerable ``gaps'' in the simplex that are unclear how to fill.  Thus, we adopted the different approach of measuring confidence via the minimum KL divergence to the empirical distribution.  Lemma \ref{minimum_value} then helps to lower bound  the KL divergence with respect to the other inputs. 

    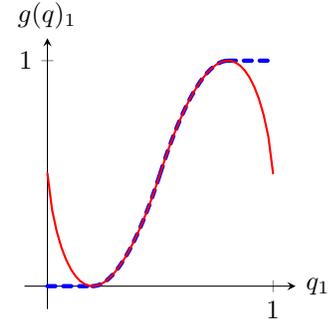
\begin{figure}
    \begin{center}
        \begin{tikzpicture}[line cap=round,line join=round,>=triangle 45,x=3.0cm,y=3.0cm]
            \draw (1.3,0.1) node{$q_1$};
            \draw (0.1,1.3) node{$g(q)_1$};

            \begin{axis}[
                x=3.0cm,y=3.0cm,
                axis lines=middle,
                xmin=-0.1,
                xmax=1.1,
                ymin=-0.1,
                ymax=1.1,
                xtick={0,1},
                ytick={0,1},]
                \clip(-0.1,-0.1) rectangle (1.1,1.1);
                \addplot[ultra thick,domain=0:0.2, color=blue, dashed] {0};
                \addplot[ultra thick,domain=0.2:0.5, color=blue, dashed] {(x*log10(x/0.2)+(1-x)*log10((1-x)/0.8))/0.1938};
                \addplot[ultra thick,domain=0.5:0.8, color=blue,dashed] {1-((1-x)*log10((1-x)/0.2)+x*log10(x/0.8))/0.1938};
                \addplot[ultra thick,domain=0.8:1, color=blue,dashed] {1};
                \addplot[thick, domain=0:0.5, color=red] {(x*log10(x/0.2)+(1-x)*log10((1-x)/0.8))/0.1938};
                \addplot[thick, domain=0.5:1, color=red] {1-((1-x)*log10((1-x)/0.2)+x*log10(x/0.8))/0.1938};
            \end{axis}
        \end{tikzpicture}
        \caption{The function $g$ used in the protocol, plotted for a BSC with crossover probability $p=0.2$. The dashed line corresponds to the strategy from our previous previous work \cite{teachlearn}, while the solid line represents the choice of $g$ in the present paper.}
        \label{fig:bsc}
    \end{center}
\end{figure}

\section{Channels that are not Pairwise Reversible} \label{sec:non_pw}

In this section, we show that the 1-hop and 2-hop exponents coincide for a broader class of channels that need not be pairwise reversible.  The main technical effort towards doing so is in deriving the following result for the case that $|\calX| = M$.

\begin{theorem} \label{thm:almost}
    Suppose that $|\calX| = M$ and $|\calX|\geq 3$, and let $t$ be the unique root in $\big[\frac13, \frac12\big]$ satisfying\footnote{The existence and uniqueness follows from the fact that the left-hand side of \eqref{inequality_t} is continuous and strictly increasing, and is below (respectively, above) one at $t = \frac{1}{3}$ (respectively, $t = \frac{1}{2}$).}
    \begin{equation}
        t + \frac{1}{M} + \frac{M-2}{M} \frac{t}{1-t} = 1.
        \label{inequality_t}
    \end{equation}
    Then, we have $\ratetwo_{|\calX|} \geq \altrate$, where
    \begin{equation}
        \altrate = \min_{m \neq m'} \max_{t \leq s \leq 1-t} \dc(m, m', P, s).
        \label{eq:altrate}
    \end{equation}
    \label{almost_reversible}
\end{theorem}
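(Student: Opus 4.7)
The plan is to follow the template of Section \ref{subsection:protocol_details}: apply Lemma \ref{lem:block} with the trivial $(M,k)$-codebook $\calC$ (in which $x^{(m)}$ is $k$ copies of symbol $m$) and a suitably chosen $f\colon\calY^k\to\calX^k$. Pairwise reversibility was previously used only to collapse $\dc(\cdot,\cdot,P,s)$ to $\db$ at $s=\tfrac12$; without it, the analysis must be carried out pointwise in the Chernoff parameter $s$, with the maximum over $s$ taken separately for each pair $(m,m')$. Concretely, for every $s\in[t,1-t]$ and every $m\ne m'$, the goal is to establish
\begin{equation*}
\dc\bigl(x^{(m)},x^{(m')},P^k\circ f\circ P^k,s\bigr)\;\ge\;k\,\dc(m,m',P,s)-o(k),
\end{equation*}
and then take $\max_{s\in[t,1-t]}$ followed by $\min_{m\ne m'}$, substitute into Lemma \ref{lem:block}, and send $k\to\infty$.

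I would define $f$ as in \eqref{eq:choice_f}, via a map $g\colon\calP(\calY)\to\calP(\calX)$ obtained by replacing the symmetric splitting \eqref{eq:g_dfn} with an asymmetric one controlled by $t$:
\begin{equation*}
g(q)_x=\tfrac{t}{M-1}\min\!\Bigl(\tfrac{D(q\|P_{m_q})}{\altrate},1\Bigr)\ (x\ne m_q),\qquad g(q)_{m_q}=1-t\min\!\Bigl(\tfrac{D(q\|P_{m_q})}{\altrate},1\Bigr),
\end{equation*}
with $w^{(q)}$ built by the same rounding procedure that led to \eqref{eq:rounding_error}. The choice $t=(M-1)/M$ recovers \eqref{eq:g_dfn}, whereas the smaller value singled out by \eqref{inequality_t} leaves more mass on the argmin symbol $m_q$ and, correspondingly, allows a wider admissible interval $[t,1-t]$ of Chernoff parameters.

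The heart of the proof is a strengthened replacement for Lemma \ref{lem:linear_bound}: for every $s\in[t,1-t]$, every $m\ne m'$, and every pair of types $q,q'$,
\begin{equation*}
(\text{position-wise Chernoff contribution})+(1-s)D(q\|P_m)+sD(q'\|P_{m'})\;\ge\;\dc(m,m',P,s),
\end{equation*}
where the position-wise term is the per-symbol average of $\dc(w_i^{(q)},w_i^{(q')},P,s)$ and must extract $\dc(m,m',P,s)$ itself (rather than $\min_{a\ne b}\dc(a,b,P,s)$) from the positions on which $w^{(q)}$ places $m$ opposite $m'$ in $w^{(q')}$; this is where the block structure of $w^{(q)}$ is used. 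The argument proceeds by case analysis on the positions of $m_q$ and $m_{q'}$ within $\{m,m',\text{other}\}$, with the Chernoff-weighted Lemma \ref{minimum_value}, namely $(1-s)D(q\|P_m)+sD(q\|P_{m'})\ge\dc(m,m',P,s)$, taking the role that the $s=\tfrac12$ trick played in Lemma \ref{lem:linear_bound}. Granted this linear bound, the remainder is routine: apply Lemma \ref{lem:distributive} to the decomposition $P^k\circ f\circ P^k=(P^k\circ w)\circ(\empdist\circ P^k)$, control type probabilities by Lemma \ref{lem:type_class}, tensorize via Lemma \ref{lem:tensorize}, and absorb an additive $2|\calY|\log(k+1)$ error, exactly as in the proof of Lemma \ref{lem:db_sep}.

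The main obstacle is the strengthened linear bound, and the binding sub-case is $\{m_q,m_{q'}\}\cap\{m,m'\}=\emptyset$. There Lemma \ref{minimum_value} is applicable on at most one side; each of $g(q)_m,g(q)_{m'},g(q')_m,g(q')_{m'}$ equals $\tfrac{t}{M-1}$ times a slack; and the shortfall must be absorbed entirely by the $1-t$ mass sitting on the argmin coordinate. Balancing these contributions leads to a system of linear inequalities in $t$, $\tfrac{1}{M}$, and $\tfrac{t}{M-1}$ that collapses to \eqref{inequality_t}, which therefore pins down the largest $t$ for which all sub-cases simultaneously go through and, equivalently, the widest $s$-interval $[t,1-t]$ permitted in \eqref{eq:altrate}. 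Extending the argument beyond this interval would force the $(1-s)$ or $s$ factor below $t$, at which point the weighted Lemma \ref{minimum_value} no longer pays for the shortfall in the binding sub-case.
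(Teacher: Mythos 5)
Your high-level template is indeed the paper's: trivial codebook, Lemma \ref{lem:block}, $f=w\circ\empdist$, Lemma \ref{lem:distributive}, type-class bounds and tensorization. But the two ingredients you defer are where the entire content of the theorem lies, and your sketch of them misses the main mechanism. On the Chernoff side, once a parameter $s$ is fixed for the pair $(m,m')$, the positions where $w^{(q)}_i=x$ and $w^{(q')}_i=x'$ with $(x,x')\neq(m,m')$ contribute $\dc(x,x',P,s)$ evaluated at this \emph{mismatched} $s$, which can be far below $\max_{t\le s'\le 1-t}\dc(x,x',P,s')$. The paper handles this via concavity of $\dc$ in $s$ (Lemma \ref{lem:worst_case_bound}), which only guarantees the degraded value $\frac{t}{1-t}\altrate$ for such cross pairs, and then needs a non-obvious convex combination of two counting bounds (Lemma \ref{lem:weighted_bound}) to cancel the $N(m,m')$ term, whose coefficient $\frac{1-3t}{1-t}$ may be negative. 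Your sketch treats the positions pairing $m$ against $m'$ as the only delicate ones; in fact it is the \emph{other} pairs that cause the loss, and the factor $\frac{t}{1-t}$ lost there---matched against the KL-side coefficients $c_1,c_2$ of \eqref{eq:c_def1}--\eqref{eq:c_def2} via Lemma \ref{lem:c_bounds}---is precisely what generates \eqref{inequality_t}. Attributing \eqref{inequality_t} solely to the KL-side sub-case $\{m_q,m_{q'}\}\cap\{m,m'\}=\emptyset$ therefore misidentifies the binding constraint, and the "system of linear inequalities" you invoke is asserted rather than exhibited.

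Second, your modification of $g$ is both unnecessary and hazardous. The paper keeps $g$ exactly as in \eqref{eq:g_dfn} (only the normalizer $\altrate$ changes); the asymmetry needed for $s\neq\frac12$ enters through the coefficients $c_1,c_2$ in the \emph{analysis}, not through the protocol. With your $g$, when $D(q\|P_{m_q})\ge\altrate$ the relay's output is not uniform but still puts mass $1-t\ge\frac12$ on $m_q$, and it is unclear that the analogue of Case 1 of Lemma \ref{lem:D_ratio} survives: the quantity playing the role of $c_2(1-g(q)_m)+c_1 g(q)_{m'}$ can exceed $1-\smm$ when $m'=m_q$. Finally, proving your displayed inequality for \emph{every} $s\in[t,1-t]$ simultaneously is strictly harder than necessary; the paper establishes it only at a single $\smm$ per pair, chosen so that $\dc(m,m',P,\smm)\ge\altrate$. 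As written, the proposal is a plausible outline whose hardest steps are not carried out and whose stated mechanism for producing \eqref{inequality_t} does not match what a completed argument requires.
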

\begin{proof}
    See Section \ref{sec:pf_almost}.
\end{proof}

One way to interpret this theorem is to contrast it against Theorem \ref{thm:m_inputs}. There, we defined $\altrate$ with respect to $\db$, corresponding to $s=\frac{1}{2}$.  Here, we are allowed to take any value of $s$ between $t$ and $1-t$.

The assumption $|\calX| \ge 3$ is not restrictive, because our previous work established the exact exponent (and the fact that it matches the 1-hop setting) for $M=|\calX|=2$ and all $P$ \cite{teachlearn}.  As $M$ increases, $t$ decreases; the (approximate) threshold values for $t$ for some values of $M$ are given in Table \ref{tbl:t_vals}.

\begin{table}[!t]
\centering
\caption{Values of $t$ as a function of $M$. \label{tbl:t_vals}}
\begin{tabular}{c|c}
$M$   & $t$      \\ \hline
3      & 0.423  \\ \hline
4      & 0.407  \\ \hline
5      & 0.4    \\ \hline
$\to \infty$ & $\to$ 0.3819
\end{tabular} 
\end{table} \par

With Theorem \ref{almost_reversible} in place, we readily obtain the following in a similar manner to Section \ref{sec:pf_fixed} (wherein the proof of Theorem \ref{thm:fixed_alphabet} was completed).

\begin{theorem} \label{thm:match}
    Consider any DMC $P$ and any fixed number of messages $M$.  
    Let $t$ be defined via \eqref{inequality_t}, and suppose that for all $x \neq x'$, it holds that
    \begin{equation}
        t \leq \argmax_{0\leq s\leq 1} \dc(x, x', P, s) \leq 1-t.
        \label{condition_argmin}
    \end{equation}
    Then, we have $\ratetwo_M = \rateone_M$.
    \label{thm:non_reversible_equal_rate}
\end{theorem}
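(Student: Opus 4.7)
The plan is to mimic the derivation of Theorem \ref{thm:fixed_alphabet} from Theorem \ref{thm:m_inputs} carried out in Section \ref{sec:pf_fixed}, now using Theorem \ref{almost_reversible} as the $|\calX|=M$ stepping stone in place of Theorem \ref{thm:m_inputs}. Fix any $\edag < \rateone_M$ and invoke Theorem \ref{thm:berlekamp} to obtain, for sufficiently large $\ell$, an $(M,\ell)$-codebook $\calC' = (x^{(1)}, \ldots, x^{(M)})$ with $\dcmin(\calC', P^\ell) \geq \ell \cdot \edag$. Let $P'$ denote the ``super-channel'' obtained by restricting $P^\ell$ to the codewords of $\calC'$, so that $|\calX_{P'}| = M$ and $P'$ is a DMC on a large alphabet.

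The crux is to verify that $P'$ satisfies the hypothesis \eqref{condition_argmin} of Theorem \ref{almost_reversible}, i.e., that for every $m \neq m'$, every maximizer of $\dc(x^{(m)}, x^{(m')}, P^\ell, s)$ over $s \in [0,1]$ lies in $[t, 1-t]$. By the tensorization identity in Lemma \ref{lem:tensorize}, this function equals $\sum_{i=1}^\ell \dc(x^{(m)}_i, x^{(m')}_i, P, s)$. Each summand is concave in $s$, because $\sum_y P(y|x)^{1-s} P(y|x')^s$ is a finite sum of log-linear (hence convex) functions of $s$, and composing a positive convex function with the decreasing concave map $-\log$ yields a concave function. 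By hypothesis, every summand with $x^{(m)}_i \neq x^{(m')}_i$ has all its maximizers in $[t, 1-t]$, while summands with $x^{(m)}_i = x^{(m')}_i$ are identically zero. A concave function whose maximizer set lies in $[t, 1-t]$ is nondecreasing on $[0, t]$ and nonincreasing on $[1-t, 1]$; summing preserves both monotonicities, so the tensorized Chernoff divergence inherits them and all of its maximizers also lie in $[t, 1-t]$.

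With the hypothesis verified, applying Theorem \ref{almost_reversible} to $P'$ yields
\begin{align}
    \ratetwo_{M, P'} &\geq \min_{m \neq m'} \max_{t \leq s \leq 1-t} \dc(x^{(m)}, x^{(m')}, P^\ell, s) \\
    &= \min_{m \neq m'} \dc(x^{(m)}, x^{(m')}, P^\ell) = \dcmin(\calC', P^\ell) \geq \ell \cdot \edag,
\end{align}
where the first equality uses the just-verified containment of the argmax in $[t, 1-t]$. Running the block protocol with ``super-symbols'' of length $\ell$, as in the chain leading to \eqref{eq:super_symbol}, then gives $\ratetwo_{M, P} \geq \frac{1}{\ell} \ratetwo_{M, P'} \geq \edag$. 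Letting $\edag \nearrow \rateone_M$ and combining with the trivial data processing bound $\ratetwo_M \leq \rateone_M$ completes the argument.

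The main obstacle in this outline is propagating the argmax localization from single-letter pairs $(x^{(m)}_i, x^{(m')}_i)$ to the $\ell$-fold product codewords: it is essential that $\dc(x, x', P, \cdot)$ is concave in $s$, since without concavity a sum of functions each maximized on $[t, 1-t]$ could easily have its maximizer pushed outside that interval, breaking the key identity $\max_{s \in [0,1]} = \max_{s \in [t, 1-t]}$ used above. Given Theorem \ref{almost_reversible}, however, this concavity observation is the only new ingredient needed beyond what is already done in Section \ref{sec:pf_fixed}.
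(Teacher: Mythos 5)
Your proposal follows essentially the same route as the paper's proof: invoke Theorem \ref{thm:berlekamp} to get a codebook with $\dcmin(\calC',P^\ell)\ge\ell\cdot\edag$, tensorize via Lemma \ref{lem:tensorize}, use concavity of $s\mapsto\dc(x,x',P,s)$ together with \eqref{condition_argmin} to conclude that the maximum of $\dc(x^{(m)},x^{(m')},P^\ell,s)$ over $s\in[0,1]$ is already attained on $[t,1-t]$, and then feed the restricted super-channel $P'$ into Theorem \ref{thm:almost}. The only flaw is your justification of the concavity itself: it is \emph{not} true that composing a positive convex function with $-\log$ yields a concave function (e.g., $-\log(1+s^2)$ is convex for $|s|>1$); the correct argument is that $\sum_y P(y|x)^{1-s}P(y|x')^s$ is a sum of log-affine, hence log-convex, functions of $s$, and is therefore itself log-convex, so its negative logarithm is concave --- a standard fact that the paper simply cites from \cite{berlekampI}. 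With that repair (or the citation), the argument is complete and matches the paper's.
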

\begin{proof}
    See Section \ref{sec:pf_nonpw_rate}.
\end{proof}


\subsection{Proof of Theorem \ref{almost_reversible}} \label{sec:pf_almost}

We adopt the same protocol as in Section \ref{subsection:protocol_details} (using the same function $f$ and the `trivial' codebook $\calC$).  The definition of $g$ in \eqref{eq:g_dfn} is also the same, except that the more general definition of $\altrate$ in \eqref{eq:altrate} is used (the previous choice in \eqref{eq:shorthand} corresponds to setting $t = \frac{1}{2}$).

For each pair $(m,m')$ with $m \ne m'$, choose $\smm \in [t,1-t]$ such that $\dc(m, m', P, \smm) \geq \altrate$. We may assume that $\smm + s_{m',m} = 1$, since $\dc(m,m',s) = \dc(m',m,1-s)$ by the definition of $\dc$.

We will prove the following for all $m \ne m'$:
\begin{equation}
    \dc(x^{(m)}, x^{(m')}, P^k \circ f \circ P^k, \smm) \geq (k-2) \altrate - 2 |\calY| \log(k+1)
    \label{eq:dc_target}.
\end{equation}
Assuming \eqref{eq:dc_target} holds, we immediately obtain
\begin{equation}
    \dcmin(\calC, P^k \circ f \circ P^k) \geq (k-2) \altrate - 2 |\calY|\log(k+1),
\end{equation}
and applying Lemma \ref{lem:block} and taking $k \to \infty$ yields the desired bound $\ratetwo_M \geq \altrate$.

Recall from Section \ref{subsection:protocol_details} that $w^{(q)} \in \calX^k$ maps a distribution $q$ to an ordered sequence (using $g$ defined in \eqref{eq:g_dfn}), and that the function $\empdist(\cdot)$ maps a length-$k$ sequence to its empirical distribution.  As before, we choose $f = w \circ \hat{p}$. Towards establishing \eqref{eq:dc_target}, we first apply Lemma \ref{lem:distributive} to obtain the following generalization of \eqref{eq:channel_decompose}:
\begin{align}
    &\dc(x^{(m)}, x^{(m')}, P^k \circ f \circ P^k, \smm)  \nonumber \\ &\geq  -2 |\calY|\log (k+1) +  \min_{q,q' \in \calP(\calY)} \Big(\dc(q, q', P^k \circ w, \smm)  \nonumber \\
    &\qquad -
    (1-\smm)\log(\empdist\circ P^k)(q|x^{(m)}) \nonumber \\
    &\qquad - \smm\log(\empdist\circ P^k)(q'|x^{(m')})\Big).
\end{align}
Again using Lemma \ref{lem:type_class} (or more simply combining \eqref{eq:sanov1}--\eqref{eq:sanov2}), we have
\begin{align}
    &-(1-\smm ) \log(\empdist\circ P^k)(q|x^{(m)}) \nonumber \\
        &\hspace*{3.5cm}- \smm \log(\empdist\circ P^k)(q'|x^{(m')}) \nonumber \\ 
    & \qquad \geq (1-\smm) k D(q\|P_m) + \smm kD(q'\|P_{m'}).
\end{align}
As such, to prove \eqref{eq:dc_target}, it remains to show that for all $q, q'\in \calP(\calY)$ and $m \ne m'$, we have
\begin{multline}
    \dc(q, q', P^k \circ w, \smm) + (1-\smm)k D(q\|P_m) \\ + \smm kD(q'\|P_{m'}) \geq (k-2)\altrate \label{eq:to_establish}
\end{multline}
This is shown in a series of fairly technical lemmas throughout the rest of the subsection.
	
Let $(c_1, c_2)$ solve the following system of simultaneous equations (the dependence of $c_1, c_2$ on $m,m'$ is left implicit):
\begin{gather}
    \smm  =  c_1 \frac{|\calX|-1}{|\calX|}  + c_2 \frac{1}{|\calX|} \label{eq:c_def1} \\
    1  =  c_1 +  c_2. \label{eq:c_def2}
\end{gather}	
Note that the determinant of this system is non-zero for $|\calX| \ge 3$, so there exists a unique solution.

\begin{lemma} \label{lem:c_bounds}
    The preceding constants $c_1,c_2$ satisfy
    \begin{equation}
       \frac{1-2t}{1-t} \leq c_{\nu} \leq \frac{t}{1-t} \text{~~~for~~~}\nu=1,2,
       \label{eq:c1_c2_bound}
    \end{equation}
    where $t$ is given in \eqref{inequality_t} (with $M = |\calX|$).
\end{lemma}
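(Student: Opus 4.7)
The plan is to solve the $2 \times 2$ system \eqref{eq:c_def1}--\eqref{eq:c_def2} explicitly, use the given range $\smm \in [t,1-t]$ to obtain interval bounds on $c_1$ and $c_2$, and then verify that the resulting endpoints coincide exactly with $\frac{1-2t}{1-t}$ and $\frac{t}{1-t}$ by invoking the defining equation \eqref{inequality_t}.

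First I would eliminate $c_2 = 1 - c_1$ from \eqref{eq:c_def1} to obtain the closed forms
\begin{equation}
    c_1 = \frac{M\smm - 1}{M-2}, \qquad c_2 = \frac{M-1 - M\smm}{M-2},
\end{equation}
where $M = |\calX| \ge 3$ ensures the denominator is nonzero. Since $c_1$ is strictly increasing in $\smm$ and $c_2$ is strictly decreasing, the constraint $\smm \in [t, 1-t]$ yields the same interval for both variables:
\begin{equation}
    c_\nu \in \left[ \frac{Mt - 1}{M-2},\ \frac{M-1-Mt}{M-2} \right], \qquad \nu = 1,2.
\end{equation}

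The key observation, and the only place where the rather opaque definition \eqref{inequality_t} of $t$ is used, is that these endpoints reduce exactly to the target bounds. Rearranging \eqref{inequality_t} by multiplying through by $M(1-t)$ and isolating the middle term gives
\begin{equation}
    \frac{t}{1-t} = \frac{M-1-Mt}{M-2},
\end{equation}
which matches the upper endpoint. For the lower endpoint, I would simply subtract this identity from $1$ to get
\begin{equation}
    \frac{1-2t}{1-t} = 1 - \frac{t}{1-t} = 1 - \frac{M-1-Mt}{M-2} = \frac{Mt-1}{M-2},
\end{equation}
matching the lower endpoint. Combining these identifications with the interval above establishes \eqref{eq:c1_c2_bound}.

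There is essentially no technical obstacle here; the only thing that could go wrong is the sign of $M-2$ or of $1-t$ in the manipulations, but both are positive because $M \ge 3$ and $t \in [1/3, 1/2]$, so all cross-multiplications are valid. The content of the lemma is really just that the defining equation for $t$ has been set up precisely so that the endpoints $\frac{Mt-1}{M-2}$ and $\frac{M-1-Mt}{M-2}$ admit the clean rewrite $\frac{1-2t}{1-t}$ and $\frac{t}{1-t}$, respectively, which will presumably be the convenient form for the estimates carrying out the rest of the proof of Theorem \ref{almost_reversible}.
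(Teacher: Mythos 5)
Your proof is correct and is essentially the same as the paper's: both reduce the $2\times 2$ system via $c_2 = 1 - c_1$, use $\smm \in [t,1-t]$ to bound the result, and invoke \eqref{inequality_t} to recognize $\frac{M}{M-2}\bigl(1-t-\frac{1}{M}\bigr) = \frac{t}{1-t}$, with the lower bound following from $c_1 + c_2 = 1$. Your explicit closed forms and monotonicity remark are a minor cosmetic variant of the paper's inequality chain.
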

\begin{proof}
Note that
\begin{align}
    1-t &\geq \smm \\
    &\stackrel{\eqref{eq:c_def1}}{=}  c_1 \frac{|\calX|-1}{|\calX|}  + c_2 \frac{1}{|\calX|} \\
    &\stackrel{\eqref{eq:c_def2}}{=}  c_1 \frac{|\calX|-1}{|\calX|}  + (1-c_1) \frac{1}{|\calX|} \\
    &= c_1 \frac{|\calX|-2}{|\calX|}  + \frac{1}{|\calX|}, \label{eq:c_pf1}
\end{align}
and therefore,
\begin{equation}
    c_1 \stackrel{\eqref{eq:c_pf1}}{\leq} \frac{|\calX|}{|\calX|-2}\left(1-t-\frac{1}{|\calX|}\right) \stackrel{\eqref{inequality_t}}{=} \frac{t}{1-t}.
\end{equation}
In addition, we have
\begin{align}
     t &\leq \smm  \\
     &\stackrel{\eqref{eq:c_def1}}{=}  c_1 \frac{|\calX|-1}{|\calX|}  + c_2 \frac{1}{|\calX|} \\
     &\stackrel{\eqref{eq:c_def2}}{=}  (1-c_2) \frac{|\calX|-1}{|\calX|}  + c_2 \frac{1}{|\calX|} \\
     &= \frac{|\calX|-1}{|\calX|} - c_2 \frac{|\calX|-2}{|\calX|}, \label{eq:c_pf2}
\end{align}
and therefore,
\begin{align}
    c_2 &\stackrel{\eqref{eq:c_pf2}}{\leq} \frac{|\calX|}{|\calX|-2} \left(\frac{|\calX|-1}{|\calX|}-t\right)
    \\ &= \frac{|\calX|}{|\calX|-2}\left(1-t-\frac{1}{|\calX|}\right) \\
    &\stackrel{\eqref{inequality_t}}{=} \frac{t}{1-t}.
\end{align}
This concludes the upper bound in \eqref{eq:c1_c2_bound}, and the lower bound immediately follows since $c_1 + c_2 = 1$ and $1 - \frac{t}{1-t} = \frac{1-2t}{1-t}$.
\end{proof}

For convenience, we repeat \eqref{eq:g_dfn} here:
\begin{equation}
    g(q)_x = \begin{cases}
    \frac1{|\calX|} \min\Big(\frac{D(q\|P_{m_q})}{\altrate}, 1\Big) & x\neq m_q \\
    1-\frac{|\calX|-1}{|\calX|} \min\Big(\frac{D(q\|P_{m_q})}{\altrate}, 1\Big) & x = m_q.
    \end{cases} \label{eq:g_rep}
\end{equation}
where $m_q = \argmin_j D(q\|P_j)$.  Recall also that we are now using $\altrate$ defined in \eqref{eq:altrate} with $t$ given in \eqref{inequality_t}, rather than the previous definition of $\altrate$ corresponding to $t = \frac{1}{2}$.

The following lemma is analogous to Lemma \ref{lem:linear_bound}.

\begin{lemma}  \label{lem:D_ratio}
    We have for all $q \in \calP(\calY)$ and $m \ne m'$ that
    \begin{equation}
        \frac{(1-s_{m,m'}) D(q\|P_m)}{\altrate} \geq c_2(1-g(q)_m) + c_1 g(q)_{m'}. \label{eq:D_ratio}
    \end{equation}
\end{lemma}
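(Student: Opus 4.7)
The plan is to mirror the three-case analysis of Lemma \ref{lem:linear_bound}, splitting according to whether $m_q$ equals $m$, equals $m'$, or neither, and in each case substituting the explicit form of $g$ from \eqref{eq:g_rep}. A basic identity that will resurface throughout is $1 - s_{m,m'} = c_2 \frac{|\calX|-1}{|\calX|} + c_1 \frac{1}{|\calX|}$, which follows from \eqref{eq:c_def1} by using $c_1 + c_2 = 1$ to swap the roles of $c_1$ and $c_2$. Throughout the argument I set $\alpha = \min\big(D(q\|P_{m_q})/\altrate,\, 1\big)$.

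The two ``easy'' cases proceed as in Lemma \ref{lem:linear_bound}. When $m_q = m$, the identity above causes the right-hand side of \eqref{eq:D_ratio} to collapse to $\alpha(1-s_{m,m'})$, and the target reduces to $D(q\|P_m)/\altrate \geq \alpha$, which is automatic from $m_q = m$. When $m_q = m'$, the right-hand side becomes $1 - s_{m,m'}\alpha$; if $\alpha < 1$ then $D(q\|P_{m'}) = \alpha\altrate$ and the target rearranges into precisely the instance of Lemma \ref{minimum_value} at $s = s_{m,m'}$, using $\dc(m,m',P,s_{m,m'}) \geq \altrate$, while the $\alpha = 1$ subcase follows directly from the argmin property together with $D(q\|P_{m'}) \geq \altrate$.

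The main work is the case $m_q \notin \{m, m'\}$, where $g(q)_m = g(q)_{m'} = \alpha/|\calX|$ so that the right-hand side equals $c_2 + \alpha(c_1-c_2)/|\calX|$. Here I would select $s_{m,m_q} \in [t, 1-t]$ with $\dc(m, m_q, P, s_{m,m_q}) \geq \altrate$ (which exists by definition of $\altrate$) and apply Lemma \ref{minimum_value} to the triple $(q, P_m, P_{m_q})$ to obtain
\begin{equation*}
    \frac{D(q\|P_m)}{\altrate} \;\geq\; \frac{1 - s_{m,m_q}\alpha}{1 - s_{m,m_q}},
\end{equation*}
where $D(q\|P_{m_q}) = \alpha\altrate$ is used when $\alpha < 1$, and the argmin bound $D(q\|P_m) \geq D(q\|P_{m_q}) \geq \altrate$ handles $\alpha = 1$ (giving the same bound at that endpoint). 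Substituting this bound into the target leaves a scalar inequality in $\alpha \in [0,1]$ that is linear on both sides, so it suffices to check the two endpoints: at $\alpha = 1$ both sides equal $1 - s_{m,m'}$ (again by the identity), and at $\alpha = 0$ the inequality reduces to $(1-s_{m,m'})/(1-s_{m,m_q}) \geq c_2$.

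The main obstacle is precisely this $\alpha = 0$ endpoint check, and it is what dictates the choice of $t$ in \eqref{inequality_t}. From $1 - s_{m,m'} \geq t$ and $1 - s_{m,m_q} \leq 1-t$, the ratio $(1-s_{m,m'})/(1-s_{m,m_q})$ is at least $t/(1-t)$, which just matches the upper bound $c_2 \leq t/(1-t)$ provided by Lemma \ref{lem:c_bounds}. Relaxing the constraint on $s_{m,m'}$ beyond $[t,1-t]$ would break this delicate inequality, which is why the specific calibration $t + \frac{1}{M} + \frac{M-2}{M}\frac{t}{1-t} = 1$ appears exactly where it does: it is the largest $t$-range for which the third case still survives.
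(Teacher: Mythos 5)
Your proof is correct and follows essentially the same route as the paper's: the same case split on $m_q$, the same use of Lemma \ref{minimum_value}, and the same linearity-in-$D(q\|P_{m_q})$ endpoint check. The only (valid) variation is in the final case's $\alpha=0$ endpoint, where you keep $s_{m,m_q}$ exact and invoke $c_2 \le \frac{t}{1-t}$ from Lemma \ref{lem:c_bounds}, whereas the paper coarsens $s_{m,m_q}$ to $\big[\frac13,\frac23\big]$ and instead verifies $\frac32(1-\smm) \ge c_2$ directly from \eqref{eq:c_def1}--\eqref{eq:c_def2} and $|\calX|\ge 3$.
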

\begin{proof}
(\underline{Case 1}: $D(q\|P_j)\geq \altrate$ for all $j$) In this case, \eqref{eq:g_rep} gives $g(q)_m = 1/|\calX|$ for all $m$, so that
\begin{align}
    c_2(1-g(q)_m) + c_1 g(q)_{m'} 
    &= c_2\frac{|\calX|-1}{|\calX|} + c_1 \frac{1}{|\calX|} \\
    & \stackrel{\eqref{eq:c_def1}, \eqref{eq:c_def2}}{=} 1-\smm \\ 
    &\leq \frac{(1-s_{m,m'}) D(q\|P_m)}{\altrate}.
\end{align}

(\underline{Case 2}: $m_q = m$, $D(q\|P_m) < \altrate$) We have
\begin{align}
    &c_2(1-g(q)_m) + c_1 g(q)_{m'} \\ 
    &\stackrel{\eqref{eq:g_rep}}{=} c_2\frac{|\calX|-1}{|\calX|} \frac{D(q\|P_{m})}{\altrate} + c_1 \frac{1}{|\calX|}\frac{D(q\|P_{m})}{\altrate} \\
    & \stackrel{\eqref{eq:c_def1}, \eqref{eq:c_def2}}{=} (1-\smm) \frac{D(q\|P_{m})}{\altrate}.
\end{align}

(\underline{Case 3}: $m_q = m'$, $D(q\|P_{m'}) < \altrate$) We have
\begin{align}
    & c_2(1-g(q)_m) + c_1 g(q)_{m'} \\ &\stackrel{\eqref{eq:g_rep}}{=} c_2\left(1 - \frac{1}{|\calX|} \frac{D(q\|P_{m'})}{\altrate}\right) + c_1\left(1- \frac{|\calX|-1}{|\calX|}\frac{D(q\|P_{m'})}{\altrate}\right)\\
    &= c_2 + c_1 - \frac{D(q\|P_{m'})}{\altrate} \left(c_2\frac{1}{|\calX|} + c_1\frac{|\calX|-1}{|\calX|}\right)\\
    &\stackrel{\eqref{eq:c_def1}}{=} c_2 + c_1 - \frac{D(q\|P_{m'})}{\altrate}\smm \label{eq:simplify}\\
    &\stackrel{\eqref{eq:c_def2}}{=} 1 - \frac{\smm D(q\|P_{m'})}{\altrate}\\
    &\stackrel{Lem.~\ref{minimum_value}}{\le} 1 - \frac{\dc(m,m',P,\smm) - (1-\smm) D(q\|P_{m})}{\altrate}\label{eq:apply_min_value}\\
    &\stackrel{\eqref{eq:altrate}}{\leq} 1 - \frac{\altrate - (1-\smm) D(q\|P_{m})}{\altrate}\\
    &= \frac{(1-\smm) D(q\|P_{m})}{\altrate}.
\end{align}

(\underline{Case 4:} $m_q \notin \{m,m'\}$, $D(q\|P_{m_q}) < \altrate$) 
Since $D(q\|P_m) \geq D(q\|P_{m_q})$, we have
\begin{multline}
    \frac23 D(q\|P_m) + \frac13 D(q\|P_{m_q}) \\ \geq (1- s_{m, m_q}) D(q\|P_m) + s_{m, m_q} D(q\|P_{m_q})\geq  \altrate, \label{eq:to_rearrange}
\end{multline}
where the first inequality uses $s_{m, m_q} \in \big[\frac{1}{3},\frac{2}{3}\big]$, and the second inequality uses Lemma \ref{minimum_value} and the fact that we defined $\smm$ to satisfy $\dc(m, m', P, \smm) \geq \altrate$.
Re-arranging \eqref{eq:to_rearrange}, we obtain
\begin{equation}
    \frac{(1-\smm) D(q\|P_m)}{\altrate} \geq (1-\smm) \left(\frac32 - \frac12 \frac{D(q\|P_{m_q})}{\altrate}\right).
\end{equation}
Moreover, the choice of $g$ in \eqref{eq:g_rep} gives
\begin{equation}
    c_2(1-g(q)_m) + c_1 g(q)_{m'} = c_2 + (c_1-c_2) \frac{1}{|\calX|} \frac{D(q\|m_q)}{\altrate}.
\end{equation}
Thus, it remains to show that
\begin{equation}
    (1-\smm) \left(\frac32 - \frac12 \frac{D(q\|m_q)}{\altrate}\right) \geq c_2 + (c_1-c_2) \frac{1}{|\calX|} \frac{D(q\|m_q)}{\altrate}. \label{eq:two_cases}
\end{equation}
Since this is a linear function in $D(q\|m_q)$ and $0\leq D(q\|m_q) \leq \altrate$, we only need to check the endpoints:
\begin{itemize}
    \item When $D(q\|m_q) = \altrate$, \eqref{eq:two_cases} reduces to $1-\smm \ge c_2 + (c_1 - c_2)\frac{1}{|\calX|}$, which holds with equality in view of \eqref{eq:c_def1} and \eqref{eq:c_def2}.
    \item When $D(q\|m_q)=0$, \eqref{eq:two_cases} reduces to $\frac32 (1-\smm) \geq c_2$, which holds because
    \begin{equation}
        1-\smm \stackrel{\eqref{eq:c_def1},\eqref{eq:c_def2}}{=} \frac{|\calX|-1}{|\calX|}c_2 + \frac{1}{|\calX|}c_1 \geq  \frac{|\calX|-1}{|\calX|}c_2 \geq \frac23 c_2
    \end{equation}
    under our assumption $|\calX|\geq 3$.
\end{itemize}
\end{proof}

\begin{lemma}
We have for all $q,q' \in \calP(\calY)$ and $m \ne m'$ that
\begin{multline}
    (1-\smm) k \cdot D(q||P_m) + \smm k \cdot D(q'||P_{m'}) \\ \geq k \cdot \altrate (1 - c_2(g(q)-g(q'))_m - c_1(g(q')-g(q))_{m'}).
\end{multline}
\label{lem:linear_bound2}
\end{lemma}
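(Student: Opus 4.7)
The plan is to derive this bound by invoking Lemma \ref{lem:D_ratio} twice and adding the two resulting inequalities. After dividing through by $k \altrate$, it suffices to show
\[
\frac{(1-\smm) D(q\|P_m)}{\altrate} + \frac{\smm D(q'\|P_{m'})}{\altrate} \ge 1 - c_2 (g(q)-g(q'))_m - c_1 (g(q')-g(q))_{m'}.
\]
Lemma \ref{lem:D_ratio} applied directly to the pair $(q,m,m')$ handles the first term, giving the lower bound $c_2(1-g(q)_m) + c_1 g(q)_{m'}$. For the second term, I would reapply the same lemma with $(q,m,m')$ replaced by $(q',m',m)$.

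The one real bookkeeping point is to track how the constants $(c_1,c_2)$ defined by \eqref{eq:c_def1}--\eqref{eq:c_def2} transform under the swap $(m,m') \leftrightarrow (m',m)$. Since $\dc(m,m',P,s) = \dc(m',m,P,1-s)$ and the authors fix $s_{m',m} = 1-\smm$, the corresponding linear system for the swapped pair is
\[
1-\smm = c_1' \tfrac{|\calX|-1}{|\calX|} + c_2' \tfrac{1}{|\calX|}, \qquad 1 = c_1' + c_2'.
\]
A direct substitution, using only $c_1+c_2=1$ together with the original system, verifies that $(c_1',c_2') = (c_2,c_1)$ is the unique solution. Hence the second application of Lemma \ref{lem:D_ratio} yields
\[
\frac{\smm D(q'\|P_{m'})}{\altrate} \ge c_1(1-g(q')_{m'}) + c_2 g(q')_m.
\]

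Adding the two lower bounds and collecting terms, the right-hand side becomes
\[
(c_1+c_2) - c_2\bigl(g(q)_m - g(q')_m\bigr) - c_1\bigl(g(q')_{m'} - g(q)_{m'}\bigr),
\]
which equals $1 - c_2(g(q)-g(q'))_m - c_1(g(q')-g(q))_{m'}$ by $c_1+c_2=1$. Multiplying through by $k\altrate$ recovers the target inequality. The only step that requires any care is the identity $(c_1',c_2')=(c_2,c_1)$ under role reversal; everything else is a short algebraic consolidation of the two invocations of Lemma \ref{lem:D_ratio}, so I do not anticipate any substantive obstacle.
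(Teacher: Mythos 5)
Your proposal is correct and matches the paper's own proof: the paper likewise applies Lemma \ref{lem:D_ratio} to $(q,m,m')$, observes that swapping $m$ and $m'$ swaps $c_1$ and $c_2$ (which you verify explicitly via the linear system and $s_{m',m}=1-\smm$), obtains the same second inequality for $(q',m',m)$, and adds the two using $c_1+c_2=1$.
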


\begin{proof}
From \eqref{eq:c_def1}--\eqref{eq:c_def2} and $s_{m,m'} = 1- s_{m',m}$, 
swapping $m$ and $m'$ has the effect of swapping $c_1$ and $c_2$. Hence, a symmetric argument to the proof of Lemma \ref{lem:D_ratio} gives the following analog of \eqref{eq:D_ratio}:
\begin{equation}
    \frac{\smm D(q'||P_{m'})}{\altrate} \geq c_1 (1-g(q')_{m'}) + c_2 g(q')_{m}.  \label{eq:D_ratio2}
\end{equation}
Combining \eqref{eq:D_ratio} and \eqref{eq:D_ratio2} gives
\begin{align}
    &(1-\smm) k \cdot D(q||P_m) + \smm k \cdot D(q'||P_{m'})\\ 
    &\quad\geq k \cdot \altrate (c_2(1-g(q)_m) + c_1 g(q)_{m'} + c_1 (1-g(q')_{m'}) \nonumber \\
        &\hspace*{5.6cm} + c_2 g(q')_{m})\\
    &\quad \stackrel{\eqref{eq:c_def2}}{=} k \cdot \altrate (1 - c_2(g(q)-g(q'))_m - c_1(g(q')-g(q))_{m'}).
\end{align}
\end{proof}

\begin{lemma}
For all $x, x' \in \calX$ and $m \ne m'$, we have
\begin{equation}
    \dc(x, x', P, \smm) \geq
    \begin{cases}
        \altrate & (x,x') = (m,m')\\
        \frac{t}{1-t} \altrate & \text{ otherwise. }
    \end{cases}
\end{equation}
\label{lem:worst_case_bound}
\end{lemma}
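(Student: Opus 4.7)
My plan is a short concavity-and-interpolation argument that does not require any new ideas about $P$ itself. The two structural facts I will use about the function $s \mapsto \dc(x,x',P,s) = -\log\sum_{y} P(y|x)^{1-s}P(y|x')^{s}$ on $[0,1]$ are: (i) it is \emph{concave} in $s$, because its negative is the logarithm of a sum of exponentials in $s$ and is therefore convex; and (ii) it \emph{vanishes at the endpoints}, i.e., $\dc(x,x',P,0)=\dc(x,x',P,1)=-\log 1 = 0$. These are standard, and I will just state them.

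The first case, $(x,x')=(m,m')$, is immediate: the lemma's preamble chose $\smm \in [t,1-t]$ precisely so that $\dc(m,m',P,\smm)\ge \altrate$.

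For the ``otherwise'' case, I will take $x\neq x'$ (the case $x=x'$ makes $\dc$ vanish identically and does not arise in the downstream use of the lemma, where such pairs contribute nothing). By the definition of $\altrate$ in \eqref{eq:altrate}, there exists $s^{*}\in[t,1-t]$ with $\dc(x,x',P,s^{*})\ge \altrate$. By concavity of $\dc(x,x',P,\cdot)$ together with its vanishing at $0$ and $1$, the piecewise-linear function joining $(0,0)$, $(s^{*},\dc(x,x',P,s^{*}))$, and $(1,0)$ lies below the graph of $\dc(x,x',P,\cdot)$. Evaluating at $s=\smm$ and considering the two sub-cases $\smm\le s^{*}$ and $\smm\ge s^{*}$ gives
\[
    \dc(x,x',P,\smm) \;\ge\; \min\!\Big(\tfrac{\smm}{s^{*}},\ \tfrac{1-\smm}{1-s^{*}}\Big)\cdot \dc(x,x',P,s^{*}).
\]

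To finish, I will show that both ratios are at least $\frac{t}{1-t}$. Since $\smm,s^{*}\in[t,1-t]$, in the sub-case $\smm\le s^{*}$ we have $\smm\ge t$ and $s^{*}\le 1-t$, hence $\frac{\smm}{s^{*}}\ge \frac{t}{1-t}$; in the sub-case $\smm\ge s^{*}$ we have $1-\smm\ge t$ and $1-s^{*}\le 1-t$, hence $\frac{1-\smm}{1-s^{*}}\ge \frac{t}{1-t}$. Combining with $\dc(x,x',P,s^{*})\ge \altrate$ yields the desired bound $\dc(x,x',P,\smm)\ge \frac{t}{1-t}\altrate$. I do not anticipate a real obstacle: the only thing to be careful about is that concavity plus endpoint vanishing really does give the linear lower bound through $(s^{*},\dc(x,x',P,s^{*}))$, and this is just the standard fact that a concave function lies above the chord to any of its points (here the ``chords'' to $(0,0)$ and to $(1,0)$).
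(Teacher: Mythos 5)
Your proof is correct and takes essentially the same route as the paper's: both arguments dispose of the case $(x,x')=(m,m')$ by the definition of $\smm$, and for the remaining pairs use the concavity of $s\mapsto\dc(x,x',P,s)$ together with a chord from the point $s_{x,x'}\in[t,1-t]$ (your $s^{*}$) to an endpoint where $\dc$ is nonnegative, yielding the ratio $\frac{t}{1-t}$. The only cosmetic difference is that the paper treats the sub-case $\smm>s_{x,x'}$ via the symmetry $\dc(x,x',P,s)=\dc(x',x,P,1-s)$ instead of interpolating directly toward $s=1$ as you do; the two are equivalent.
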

\begin{proof}
    The case $(x,x') = (m,m')$ is trivial, as it precisely reduces to how we defined $s_{m,m'}$.  In the following, we focus on the other case.
    
    The function $\dc(x, x', P, s)$ is concave in $s$ (see \cite[Theorem 5]{berlekampI}).  As a result, in the case that $\smm \leq s_{x,x'}$, Jensen's inequality gives
    \begin{align}
        &\dc(x, x', P, \smm) \nonumber \\ 
        &\geq \frac{\smm}{s_{x,x'}} \dc(x, x', P, s_{x,x'}) + \Big(1-\frac{\smm}{s_{x,x'}}\Big) \dc(x, x', P, 0) \\
        &\geq \frac{t}{1-t}\altrate,
    \end{align}
    where we lower bounded the $\dc$ terms by $\altrate$ and $0$ respectively, and used $s \in [t,1-t]$.
    
    On the other hand, for $\smm > s_{x,x'}$, we can use the fact that $\smm = 1-s_{m',m}$ to write $\dc(x, x', P, \smm) = \dc(x', x, P, s_{m',m})$, 
    after which we can apply the same argument as the first case, since $s_{m',m} = 1-\smm \leq 1-s_{x, x'} = s_{x', x}$.
\end{proof}

\begin{lemma}
    For all $q,q' \in \calP(\calY)$ and all $m \ne m'$, we have
    \begin{multline}
    \dc(w^{(q)}, w^{(q')}, P^k, \smm) \geq k \cdot \altrate\Big(c_2 (g(q) - g(q'))_m \\ +  c_1  (g(q') - g(q))_{m'}\Big) - 2 \altrate.
    \label{eq:weighted_bound}
    \end{multline}
    \label{lem:weighted_bound}
\end{lemma}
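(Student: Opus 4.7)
The plan is to tensorize the Chernoff divergence and then construct a pointwise lower bound that naturally produces the linear combination appearing on the right-hand side. By Lemma \ref{lem:tensorize}, we have
\begin{equation}
\dc(w^{(q)}, w^{(q')}, P^k, \smm) = \sum_{i=1}^k \dc(w^{(q)}_i, w^{(q')}_i, P, \smm),
\end{equation}
so it suffices to establish a per-symbol lower bound and sum it up.

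The key pointwise bound I would prove is
\begin{equation}
\dc(x, x', P, \smm) \geq \altrate \big[c_2 \boldsymbol{1}(x = m,\, x' \neq m) + c_1 \boldsymbol{1}(x \neq m',\, x' = m')\big]
\end{equation}
for all $x, x' \in \calX$. A short case analysis verifies this: when $(x,x') = (m,m')$ both indicators activate and their sum is $c_1 + c_2 = 1$, matching the bound $\altrate$ from Lemma \ref{lem:worst_case_bound}; when $x \neq x'$ but $(x,x') \neq (m,m')$ at most one indicator activates, and since Lemma \ref{lem:c_bounds} guarantees $c_1, c_2 \leq \frac{t}{1-t}$, the bound is dominated by $\frac{t}{1-t}\altrate$ (again from Lemma \ref{lem:worst_case_bound}); when $x = x'$ both indicators vanish and the bound is trivially $0$.

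Summing this pointwise bound over $i$ gives $\dc(w^{(q)}, w^{(q')}, P^k, \smm) \geq \altrate (c_2 N_1 + c_1 N_2)$, where $N_1 = \#\{i : w^{(q)}_i = m,\, w^{(q')}_i \neq m\}$ and $N_2 = \#\{i : w^{(q)}_i \neq m',\, w^{(q')}_i = m'\}$. Writing $n_x^{(q)}$ for the count of $x$ in $w^{(q)}$, a simple counting argument (the number of joint occurrences $w^{(q)}_i = w^{(q')}_i = m$ is at most $n_m^{(q')}$) gives $N_1 \geq n_m^{(q)} - n_m^{(q')}$; the rounding bound \eqref{eq:rounding_error} then yields $N_1 \geq k(g(q)_m - g(q')_m) - 2$ and similarly $N_2 \geq k(g(q')_{m'} - g(q)_{m'}) - 2$. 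Substituting and using $c_1 + c_2 = 1$ produces exactly $k\altrate(c_2(g(q)-g(q'))_m + c_1(g(q')-g(q))_{m'}) - 2\altrate$, as required.

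The main obstacle lies in finding the right pointwise bound: a crude bound like $\dc(x,x',P,\smm) \geq \frac{t}{1-t}\altrate \cdot \boldsymbol{1}(x \neq x')$ does not produce the specific linear combination with weights $c_1, c_2$. The trick is to decompose the ``differing symbols'' indicator into two coordinate-wise indicators that simultaneously activate on the distinguished pair $(m,m')$ (yielding the enhanced bound $\altrate$ via $c_1+c_2 = 1$), while Lemma \ref{lem:c_bounds} ensures the weaker bound $\frac{t}{1-t}\altrate$ suffices in all other off-diagonal cases. Once this decomposition is in place, the remainder of the proof is bookkeeping with the definitions of $N_1, N_2$ and the $O(1)$ rounding error.
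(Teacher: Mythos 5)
Your proof is correct, and it reorganizes the paper's argument in a genuinely more direct way. The paper reaches \eqref{eq:weighted_bound} in three stages: it first derives two aggregate bounds on $\sum_{x,x'} N(x,x')\dc(x,x',P,\smm)$ (one summing over the row $x=m$, one over the row $x=m$ together with the column $x'=m'$), takes a weighted sum of these with weights $\frac{3t-1}{t}$ and $\frac{1-2t}{t}$ chosen to cancel the cross term $N(m,m')$, and finally forms a convex combination of the resulting inequality with its mirror image (swapping $q \leftrightarrow q'$ and $m \leftrightarrow m'$) to convert the coefficients $\frac{t}{1-t},\frac{1-2t}{1-t}$ into $c_2, c_1$. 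You instead build the target linear combination directly as a single per-symbol inequality, $\dc(x,x',P,\smm) \geq \altrate\big[c_2\boldsymbol{1}(x=m,x'\neq m) + c_1\boldsymbol{1}(x\neq m',x'=m')\big]$, whose case analysis is exactly where $c_1+c_2=1$ (both indicators fire only at $(m,m')$) and $c_\nu \leq \frac{t}{1-t}$ (at most one fires elsewhere) enter; the same two ingredients --- Lemma \ref{lem:worst_case_bound} and Lemma \ref{lem:c_bounds} --- do the same work, but the $N(m,m')$-cancellation and the final convex combination are absorbed into the pointwise decomposition. Your counting step ($N_1 \geq n_m^{(q)} - n_m^{(q')}$ via $N(m,m)\le n_m^{(q')}$, then \eqref{eq:rounding_error}, then $c_1+c_2=1$ to collect the $-2$ terms) matches the paper's. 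The trade-off: your version is shorter and makes the mechanism transparent in one display, while the paper's staged version isolates the two structurally different lower bounds \eqref{eq:positive_coefficient_nmm} and \eqref{eq:negative_coefficient_nmm} whose interpolation is what makes any $c_\nu \in \big[\frac{1-2t}{1-t},\frac{t}{1-t}\big]$ admissible.
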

\begin{proof}
Define $N(x,x')$ to be the number of indices $i \in \{1,\dotsc,k\}$ such that $w^{(q)}_i = x$ and $w^{(q')}_i = x'$.  We have
\begin{align}
    &\dc(w^{(q)}, w^{(q')}, P^k, \smm)  \\
    & = \sum_{x, x'} N(x, x') \dc(x, x', P, \smm) \label{eq:nmm_step1} \\
    & \geq \sum_{x'} N(m, x') \dc(m, x', P, \smm) \\
    & \geq \sum_{x' \neq m} N(m, x') \frac{t}{1-t} \altrate + N(m, m') \frac{1-2t}{1-t} \altrate,
    \label{eq:positive_coefficient_nmm}
\end{align}
where \eqref{eq:nmm_step1} uses the tensorization property (Lemma \ref{lem:tensorize}), and \eqref{eq:positive_coefficient_nmm} uses Lemma \ref{lem:worst_case_bound} and the fact that $\frac{t}{1-t}+\frac{1-2t}{1-t} = 1$.  Moreover, following a similar but modified set of steps, we have
\begin{align}
    & \dc(w^{(q)}, w^{(q')}, P^k, \smm) \nonumber \\
      & = \sum_{x, x'} N(x, x') \dc(x, x', P, \smm) \\
    & \geq \sum_{\substack{(x,x') \,:\, \\ x=m \text{ or } x'=m'}} N(x, x') \dc(m, x', P, \smm) \\
    & \geq \sum_{x' \neq m} N(m, x') \frac{t}{1-t} \altrate + \sum_{x \neq m'} N(x, m') \frac{t}{1-t} \altrate \nonumber \\
        &\hspace*{3.5cm} + N(m, m') \frac{1-3t}{1-t} \altrate.
    \label{eq:negative_coefficient_nmm}
\end{align}
We add $\frac{3t-1}{t}$ times \eqref{eq:positive_coefficient_nmm} together with $\frac{1-2t}{t}$ times \eqref{eq:negative_coefficient_nmm} (note that $\frac13 \leq t \leq \frac12$, so that the weights are non-negative), so that $N(m,m')$ cancels out:
\begin{align}
    &\dc(w^{(q)}, w^{(q')}, P^k, \smm) \nonumber \\ 
    &\geq \sum_{x' \neq m} N(m, x') \frac{t}{1-t} \altrate + \sum_{x \neq m'} N(x, m') \frac{1-2t}{1-t} \altrate, \label{eq:to_simplify}
\end{align}
where the coefficient $\frac{t}{1-t}$ arises by simplifying $\frac{3t-1}{t} \cdot \frac{t}{1-t} + \frac{1-2t}{t} \cdot \frac{t}{1-t}$.

Let $N_q(x)$ denote the number of indices $i \in \{1,\ldots, k\}$ with $w_i^{(q)} = x$, and similarly for $N_{q'}(x)$.  Since $\sum_{x'} N(m,x') = N_q(m)$ and $N(m,m) \le N_{q'}(m)$, we have

\begin{equation}
    \sum_{x' \neq m} N(m,x') \geq N_q(m) - N_{q'}(m) \geq k (g(q) - g(q'))_m - 2,
\end{equation}
with the right-hand side coming from \eqref{eq:rounding_error}.  Similarly, we have
\begin{equation}
    \sum_{x \neq m'} N(m',x) \geq N_{q'}(m') - N_{q}(m')  \geq k (g(q') - g(q))_{m'} - 2,
\end{equation}
so that \eqref{eq:to_simplify} can be weakened to 
\begin{multline}
    \dc(w^{(q)}, w^{(q')}, P^k, \smm) \geq k \cdot \altrate\Big(\frac{t}{1-t} (g(q) - g(q'))_m \\ +  \frac{1-2t}{1-t}  (g(q') - g(q))_{m'}\Big) - 2\altrate.
    \label{eq:weighted_bound_mirror0}
\end{multline}

A similar argument with $q,q'$ interchanged and $m,m'$ interchanged gives
\begin{multline}
    \dc(w^{(q')}, w^{(q)}, P^k, s_{m',m})  
        \geq k \cdot \altrate\Big(\frac{1-2t}{1-t} (g(q) - g(q'))_m \\ +  \frac{t}{1-t}  (g(q') - g(q))_{m'}\Big) - 2\altrate.
    \label{eq:weighted_bound_mirror}
\end{multline}
Since $c_1 + c_2 = 1$ (see \eqref{eq:c_def2}), $ c_1, c_2 \in \big[\frac{1-2t}{1-t},\frac{t}{1-t}\big]$ (see Lemma \ref{lem:c_bounds}), and $\frac{t}{1-t} + \frac{1-2t}{1-t} = 1$, we can combine
\eqref{eq:weighted_bound_mirror0} and \eqref{eq:weighted_bound_mirror} via a suitable convex combination (i.e., add $\lambda$ times one and $1-\lambda$ times the other, where $\lambda \in [0,1]$) to obtain the desired inequality \eqref{eq:weighted_bound}.
\end{proof}

Finally, we combine Lemmas \ref{lem:linear_bound2} and \ref{lem:weighted_bound} to obtain
\begin{align}
    &\dc(w^{(q)}, w^{(q')}, P^k \circ w, \smm) \nonumber \\
        &\quad+ (1-\smm)k D(q\|P_m) + \smm kD(q'\|P_{m'})\\
    &\geq k \cdot \altrate\left(c_2 (g(q) - g(q'))_m +  c_1  (g(q') - g(q))_{m'}\right) - 2 \altrate \nonumber \\
    &\qquad + k \cdot \altrate (1 - c_2(g(q)-g(q'))_m - c_1(g(q')-g(q))_{m'}) \\
    &\geq (k-2) \altrate,
\end{align}
which establishes \eqref{eq:to_establish} and completes the proof of Theorem \ref{thm:almost}.

\subsection{Proof of Theorem \ref{thm:non_reversible_equal_rate}} \label{sec:pf_nonpw_rate}

Let $\edag < \rateone_M$, and fix an $(M,\ell)$-codebook satisfying the conditions of Theorem \ref{thm:berlekamp}.  We use Lemma \ref{lem:tensorize} to write
\begin{equation}
    \dc(x^{(m)}, x^{(m')}, P^\ell, s) = \sum_{i=1}^\ell \dc(x^{(m)}_i, x^{(m')}_i, P^\ell, s),
\end{equation}
and observe that when this quantity is treated as a function of $s$, it is decreasing from $[0,t]$ and decreasing in $[1-t,1]$, due to \eqref{condition_argmin} and the concavity of $\dc$ in $s$ \cite{berlekampI}.  Therefore, this quantity is maximized by some $s \in [t, 1-t]$, so that
\begin{equation}
     \dc(x^{(m)}, x^{(m')}, P^\ell) = \max_{t \leq s \leq 1-t} \dc(x^{(m)}, x^{(m')}, P^\ell, s). \label{eq:dc_nonpw}
\end{equation}
The remainder of the proof is the same as that of Section \ref{sec:pf_fixed}, so we only treat it briefly: Let $P'$ be the restriction of $P^\ell$ to $\calC$, and observe that
\begin{equation}
    \min_{m\neq m'} \max_{t \leq s \leq 1-t} \dc(x^{(m)}, x^{(m')}, P^\ell, s) \geq \ell \cdot \edag, \label{eq:min_nonpw}
\end{equation}
where the inequality combines \eqref{eq:dc_nonpw} with the conclusion of Theorem \ref{thm:berlekamp}.

Applying Theorem \ref{thm:almost} to \eqref{eq:min_nonpw} readily gives $\ratetwo_{M,P} \geq \frac{1}{\ell} \ratetwo_{M,P'} = \edag$.  Since $\edag$ is arbitrarily close to $\rateone_M$, this completes the proof.
    
\section{Zero-Rate Error Exponents (Proof of of Theorem \ref{thm:zero_rate})}

In this section, we prove Theorem \ref{thm:zero_rate}, which concerns the limiting error exponent as the coding rate tends to zero from above.  While this corresponds to a number of messages growing with the block length, the following result concerning a fixed number of messages will serve as a useful building block.  This result can be deduced from classical works such as \cite{berlekamp,jelinek1968eval}, but we also provide a short proof in Appendix \ref{sec:pf_expurg}.

	\begin{lemma}
    	For any $\otherrate < \erateone(0)$ and any positive integer $M$, it holds for sufficiently large $\ell$ that there exists an $(M,\ell)$-codebook $\calC'$ such that 
    	\begin{equation}
    	\dbmin(\calC', P^\ell) \geq \ell \cdot \otherrate.
    	\end{equation}
    	\label{lem:expurgation}
	\end{lemma}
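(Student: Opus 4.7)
The plan is a classical random-coding argument that leverages the tensorization of the Bhattacharyya distance together with the fact that $M$ is a fixed constant (so only a constant number of pairwise conditions must be verified). By Theorem~\ref{zero_rate} and the hypothesis $\otherrate < \erateone(0)$, I would first fix some $q^* \in \calP(\calX)$ and $\eta > 0$ such that
\begin{equation}
    \sum_{x, x' \in \calX} q^*_x q^*_{x'} \db(x, x', P) \geq \otherrate + \eta.
\end{equation}

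Next I would generate a random $(M,\ell)$-codebook $\calC' = (\vec{x}^{(1)}, \ldots, \vec{x}^{(M)})$ by drawing the $M$ codewords independently, each with coordinates i.i.d.\ from $q^*$. For any fixed pair $(m, m')$, Lemma~\ref{lem:tensorize} at $s = \tfrac12$ gives
\begin{equation}
    \db(\vec{x}^{(m)}, \vec{x}^{(m')}, P^\ell) = \sum_{i=1}^\ell \db(x_i^{(m)}, x_i^{(m')}, P),
\end{equation}
so this distance is a sum of $\ell$ i.i.d.\ random variables whose common mean is at least $\otherrate + \eta$. The weak law of large numbers (or Hoeffding, after a truncation step if necessary; see below) then yields, for each fixed pair,
\begin{equation}
    \p\big[\db(\vec{x}^{(m)}, \vec{x}^{(m')}, P^\ell) \geq \ell \cdot \otherrate\big] \to 1 \quad \text{as}~\ell \to \infty.
\end{equation}

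Since $M$ is fixed, the number of distinct pairs $(m, m')$ is $\binom{M}{2}$, a constant independent of $\ell$, so a union bound gives
\begin{equation}
    \p\big[\dbmin(\calC', P^\ell) \geq \ell \cdot \otherrate\big] \to 1 \quad \text{as}~\ell \to \infty.
\end{equation}
In particular, this probability is strictly positive for all sufficiently large $\ell$, hence a deterministic codebook meeting the desired bound exists.

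The only subtlety I anticipate is the possibility that $\db(x, x', P) = +\infty$ for some pairs (when $P_x, P_{x'}$ have disjoint supports), which would prevent a direct concentration estimate on an unbounded sum. This is remedied by replacing each $\db(x, x', P)$ by $\min(\db(x, x', P), K)$ for a sufficiently large constant $K$; for $K$ large enough we still have $\sum_{x,x'} q^*_x q^*_{x'} \min(\db(x, x', P), K) \geq \otherrate + \eta/2$, so Hoeffding's inequality applies to the bounded i.i.d.\ summands and delivers the same concentration, and since truncation only decreases $\db$, any lower bound obtained for the truncated sum transfers to the original. With this detail handled, the argument is just the standard random coding + union bound recipe, so I do not expect any further substantive obstacles.
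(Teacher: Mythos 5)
Your proposal is correct and matches the paper's own proof essentially line for line: both draw $M$ codewords i.i.d.\ from the maximizing distribution in Theorem~\ref{zero_rate}, use tensorization of $\db$ to write the pairwise distance as a sum of $\ell$ i.i.d.\ terms with mean exceeding $\otherrate$, apply the law of large numbers, and union bound over the constant number of $\binom{M}{2}$ pairs. Your truncation step for possibly infinite $\db(x,x',P)$ is a small extra care the paper omits, and it is handled correctly.
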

    
    \begin{lemma}
    For any discrete memoryless channel $P$ with input alphabet $\calX$, we have
    \begin{equation}
        \erateone_P(0) \geq \frac{|\calX|-1}{|\calX|} \dbmin(\calX, P). \label{eq:E1LB}
    \end{equation}
    \label{lem:zero_rate_uniform}
    \end{lemma}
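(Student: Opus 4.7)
The plan is to apply Theorem \ref{zero_rate} directly and then lower bound the resulting maximum by plugging in the uniform distribution. This gives a clean proof in essentially four lines with no real obstacle.

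Specifically, Theorem \ref{zero_rate} gives
\begin{equation}
    \erateone_P(0) = \max_{q \in \calP(\calX)} \sum_{x, x' \in \calX} q_x q_{x'} \db(x, x', P),
\end{equation}
so any particular choice of $q$ yields a lower bound. I would take $q$ to be uniform on $\calX$, i.e.\ $q_x = 1/|\calX|$ for all $x$, so that
\begin{equation}
    \erateone_P(0) \ge \frac{1}{|\calX|^2} \sum_{x, x' \in \calX} \db(x, x', P).
\end{equation}

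The next step uses the fact that $\db(x, x, P) = -\log \sum_y P(y|x) = 0$, so the diagonal terms vanish and the double sum reduces to a sum over the $|\calX|(|\calX|-1)$ ordered pairs with $x \ne x'$. Each such term is at least $\dbmin(\calX, P)$ by the definition of $\dbmin$, giving
\begin{equation}
    \sum_{x, x' \in \calX} \db(x, x', P) \ge |\calX|(|\calX|-1)\,\dbmin(\calX, P),
\end{equation}
and dividing by $|\calX|^2$ produces exactly the claimed bound $\frac{|\calX|-1}{|\calX|} \dbmin(\calX, P)$.

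There is no real obstacle here — the lemma is just a corollary of Theorem \ref{zero_rate} obtained by evaluating at the uniform prior, and the whole argument fits in a short paragraph. The slight subtlety worth flagging is that the uniform $q$ need not be the maximizer of \eqref{eq:E1zero} (in general it is not), so the inequality in \eqref{eq:E1LB} may be strict; but that is precisely why the statement is an inequality rather than an equality.
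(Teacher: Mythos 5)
Your proof is correct and is essentially identical to the paper's: both apply Theorem \ref{zero_rate}, substitute the uniform distribution for $q$, observe that the diagonal terms vanish and each off-diagonal term is at least $\dbmin(\calX,P)$, and count the $|\calX|(|\calX|-1)$ ordered off-diagonal pairs. Nothing further is needed.
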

    \begin{proof}
        We apply Theorem \ref{zero_rate} and lower bound \eqref{eq:E1zero} by substituting the uniform distribution for $q$.  Then $\db(x,x',P)$ is zero when $x = x'$, and is at least $\dbmin(\calX, P)$ when $x \ne x'$, which yields \eqref{eq:E1LB}. 
    \end{proof}
    
    Fix any $\otherrate < \erateone(0)$ and any finite $M$, and let $\calC'$ be the $(M,\ell)$-codebook in Lemma \ref{lem:expurgation} with some large enough $\ell$.  Moreover, let $P'$ be the restriction of $P^\ell$ on $\calC'$.
    
    For any positive integer $k$, any function $f: \calY_{P'}^k \rightarrow \calX_{P'}^k$ and any rate $R>0$,\footnote{Note that the finite constant $M$ should not be confused with the growing number of messages here.} we can follow the same argument as the one leading to \eqref{eq:E2_to_E1}:
    \begin{align}
        &\eratetwo_{P'}(R) \nonumber \\
        &= \liminf_{n'\rightarrow\infty} -\frac1{n'}\log P^{(2)}_e(n',e^{n' R}, P')\\
        & \stackrel{\eqref{eq:two_to_one}}{\geq} \liminf_{n' \rightarrow\infty} -\frac{1}{n'}\log P^{(1)}_e(\lfloor n'/k-1\rfloor,e^{n'R}, (P')^k \circ f \circ (P')^k)\\
        & = \frac1k \erateone_{(P')^k \circ f \circ (P')^k}(R).
    \end{align}
    Taking $R \rightarrow 0^+$, we obtain
    \begin{align}
        \eratetwo_{P'}(0) &\geq \frac1k \erateone_{(P')^k \circ f \circ (P')^k}(0) \\ 
        &\geq \frac1k \frac{M-1}{M} \dbmin(\calC', (P')^k \circ f \circ (P')^k),
        \label{eq:rate_limit}
    \end{align}
    where the last inequality comes from Lemma \ref{lem:zero_rate_uniform}.  We note that here, in contrast to \eqref{eq:E2_to_E2_weakened}, we do not consider the use of repetition at the encoder.  Instead, Lemma \ref{lem:zero_rate_uniform} is based on the expurgated exponent of the channel $(P')^k \circ f \circ (P')^k$, which implicitly requires complicated coding over the large super-alphabet.
    
    Applying Lemma \ref{lem:db_sep} to the channel $P'$, and noting that $\altrate$ in \eqref{eq:shorthand} reduces to $\dbmin(\calC', P')$, we deduce that there exists a suitable function $f$ such that 
    \begin{multline}
        \dbmin(\calC', (P')^k \circ f \circ (P')^k) \\ \geq (k-2)\dbmin(\calC', P') - 2|\calY_{P'}|\log(k+1).
        \label{eq:max_f}
    \end{multline}
    By the preceding application of Lemma \ref{lem:expurgation}, we have $\dbmin(\calC', P') \geq \ell \cdot \otherrate$ with $\otherrate$ being arbitrarily close to $\erateone(0)$.  Moreover, since $\ell$ is constant, the term $|\calY_{P'}|$ in \eqref{eq:max_f} is also constant (albeit possibly large).
    
    Substituting \eqref{eq:max_f} into \eqref{eq:rate_limit} and taking $k\rightarrow \infty$ (note that our choice of $P'$ does not depend on $k$) gives
    \begin{equation}
        \eratetwo_{P'}(0) \geq \frac{M-1}{M} \ell \cdot \otherrate.
        \label{eq:ratetwo_final}
    \end{equation}
    To convert this exponent for $P'$ to one for $P$, we can allow the agents to read and write $\ell$ symbols at a time, recalling that $P'$ is a restriction of $P^\ell$.  Analogous to \eqref{eq:super_symbol}, we obtain
    \begin{align}
        P^{(2)}_e(n,e^{nR},P) 
        &\leq P^{(2)}_e(\lfloor n/\ell \rfloor,e^{nR},P^\ell) \\
        &\leq P^{(2)}_e(\lfloor n/\ell \rfloor,e^{nR},P'),
    \end{align}
    and hence,
    \begin{align}
        \eratetwo_P(R) &\geq \liminf_{n\rightarrow\infty} -\frac1n \log P^{(2)}_e(\lfloor n/\ell \rfloor,e^{nR},P') \\
        &= \frac{1}{\ell}\eratetwo_{P'}(\ell \cdot R).
    \end{align}
    Finally, we take $R\rightarrow 0^+$ and combine with \eqref{eq:ratetwo_final} to obtain
    \begin{equation}
        \eratetwo_P(0) \geq \frac{1}{\ell} \eratetwo_{P'}(0) \geq \frac{M-1}{M} \otherrate.
    \end{equation}
    This completes the proof, since $\otherrate$ is arbitrarily close to $\erateone(0)$ and $M$ is arbitrarily large.

\section{Further Extensions}
    
\subsection{Distinct Channels Setting} \label{sec:diff_chan}

Throughout the paper, we have assumed that the two channels have the same transition law.  We now drop this assumption, letting $P$ be the first channel (encoder to relay) and $Q$ be the second channel (relay to decoder).  We denote the corresponding error exponents by $\ratetwo_{M,P,Q}$, $\eratetwo_{P,Q}$, and so on.

Starting with the fixed-$M$ setting, by data processing inequalities, we readily obtain the following converse bound:
\begin{equation}
    \ratetwo_{M,P,Q} \leq \min(\rateone_{M,P}, \rateone_{M,Q}).
\end{equation}
The following result gives a matching achievability bound for pairwise reversible channels, and generalizes Theorem \ref{thm:fixed_alphabet}.

\begin{theorem}
If both $P$ and $Q$ are pairwise reversible, then $\ratetwo_{M,P,Q} = \min(\rateone_{M,P}, \rateone_{M,Q})$.
\end{theorem}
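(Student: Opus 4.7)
The converse $\ratetwo_{M,P,Q} \leq \min(\rateone_{M,P}, \rateone_{M,Q})$ is immediate from two separate applications of the data processing inequality: the decoder's estimate based on $Z^n$ cannot outperform an estimate based on $Y^n$ (giving $\rateone_{M,P}$), and the induced random map $\Theta \to W^n$ followed by $Q^n$ cannot outperform an arbitrary 1-hop encoder over $Q$ (giving $\rateone_{M,Q}$). The achievability direction parallels the proof of Theorem \ref{thm:fixed_alphabet} but requires careful bookkeeping of which inequalities involve $P$ versus $Q$.

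The first step is to establish the following distinct-channels analog of Theorem \ref{thm:m_inputs}: if $|\calX_P| = |\calX_Q| = M$, then $\ratetwo_{M,P,Q} \geq \min\bigl(\dbmin(\calX_P, P), \dbmin(\calX_Q, Q)\bigr)$. Identifying both alphabets with $\{1,\dotsc,M\}$, one runs the block-structured protocol of Section \ref{subsection:protocol_details}: the encoder repeats $\Theta$ in blocks of length $k$, the relay computes the empirical distribution $\hat q$ of each received $k$-block and emits $w^{(\hat q)} \in \calX_Q^k$, where $g \,:\, \calP(\calY_P) \to \calP(\calX_Q)$ is defined exactly as in \eqref{eq:g_dfn} but with the normalization constant replaced by $\bar\altrate := \min\bigl(\dbmin(\calX_P, P), \dbmin(\calX_Q, Q)\bigr)$. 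The analysis of Lemma \ref{lem:db_sep} then carries over to the composite channel $Q^k \circ f \circ P^k$: the bound in \eqref{eq:sanov1} is unchanged since the first hop is still $P^k$; the Bhattacharyya bound in \eqref{eq:sup_norm} now uses $\db(w_i^{(q)}, w_i^{(q')}, Q) \geq \bar\altrate$, valid because $\bar\altrate \leq \dbmin(\calX_Q, Q)$; and the key inequality $D(q\|P_m) \geq \bar\altrate(1-g(q)_m + g(q)_{m'})$ from Lemma \ref{lem:linear_bound} continues to hold, since that proof only requires $\bar\altrate \leq \dbmin(\calX_P, P)$. Combining with the straightforward distinct-channels analog of Lemma \ref{lem:block} and passing $k\to\infty$ yields the claimed bound.

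With this super-alphabet result available, the reduction to the general case proceeds in direct analogy with Section \ref{sec:pf_fixed}. Fix $\edag < \min(\rateone_{M,P}, \rateone_{M,Q})$ and apply Theorem \ref{thm:berlekamp} to both $P$ and $Q$ with a common block length $\ell$ chosen large enough that both applications succeed, producing $(M,\ell)$-codebooks $\calC_P'$ and $\calC_Q'$ with Chernoff separation at least $\ell\edag$ on $P^\ell$ and $Q^\ell$, respectively. Pairwise reversibility is preserved under tensor products (each factor of the Chernoff sum is individually minimized at $s=\tfrac12$, so the sum is too), hence Chernoff equals Bhattacharyya here, and the separation is also at least $\ell\edag$ in Bhattacharyya distance. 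Taking the restrictions $P', Q'$ of $P^\ell, Q^\ell$ to the two codebooks yields super-channels with $|\calX_{P'}| = |\calX_{Q'}| = M$ and $\dbmin(\calX_{P'},P'), \dbmin(\calX_{Q'},Q') \geq \ell\edag$, so the distinct-channels version of Theorem \ref{thm:m_inputs} gives $\ratetwo_{M,P',Q'} \geq \ell\edag$. The super-symbol construction (the encoder and relay both operating in blocks of $\ell$ actual time steps) then yields $\ratetwo_{M,P,Q} \geq \tfrac{1}{\ell}\ratetwo_{M,P',Q'} \geq \edag$, and letting $\edag \to \min(\rateone_{M,P}, \rateone_{M,Q})$ concludes the proof.

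The main obstacle is the adaptation of Lemma \ref{lem:db_sep}: one must track carefully which lower bounds draw on properties of $P$ and which on $Q$. The symmetric choice $\bar\altrate = \min(\altrate_P,\altrate_Q)$ is precisely what is required for both sides of the asymmetric composite-channel analysis to go through simultaneously, and it is also precisely the quantity that matches the converse $\min(\rateone_{M,P}, \rateone_{M,Q})$ after lifting to super-symbols.
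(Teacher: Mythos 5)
Your proposal is correct and follows essentially the same route as the paper: the data-processing converse, the block-structured protocol with $g$ normalized by $\min_{m\neq m'}\min\bigl(\db(m,m',P),\db(m,m',Q)\bigr)$ so that the analogs of Lemmas \ref{lem:linear_bound} and \ref{lem:db_sep} go through for the composite channel $Q^k\circ f\circ P^k$, and the super-symbol lift via Theorem \ref{thm:berlekamp} applied to $P$ and $Q$ with a common block length $\ell$. Your write-up in fact fills in a few details (the common $\ell$, which bounds draw on $P$ versus $Q$) that the paper's own sketch leaves implicit.
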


The proof is largely the same as Theorem \ref{thm:fixed_alphabet}, so we only briefly outline some of the differences.  As before, we start by assuming $|\calX_P| = |\calX_Q| = M$, and Lemma \ref{lem:block} is now replaced by
\begin{equation}
    \ratetwo_{M,P,Q} \geq \frac1k \dcmin(\calC, Q^k \circ f \circ P^k).
    \label{eq:block_two}
\end{equation}
The function $f$ (and the auxiliary function $g$ that it depends on) remains the same, except that we generalize $\altrate$ (see \eqref{eq:shorthand}) to
\begin{equation}
    \altrate = \min_{m\neq m'} \min\big( \db(m,m',P), \db(m,m',Q) \big).
\end{equation}
For any $\otherrate < \ratetwo_{M,P,Q}$, we find two $(M,\ell)$-codebooks $\calC_P$ and $\calC_Q$ satisfying the conditions of Theorem \ref{thm:berlekamp}. We then let $P'$ and $Q'$ be the restriction of $P^\ell$ and $Q^\ell$ on the corresponding codebooks, and apply \eqref{eq:block_two} to $(P',Q')$.

By similar reasoning, we also have the following analog of Theorem \ref{thm:zero_rate} for the zero-rate exponent, without any requirement of $P$ or $Q$ being pairwise reversible.

\begin{theorem}
    For any DMCs $P$ and $Q$, we have $\eratetwo_{P,Q}(0) = \min(\erateone_{M,P}(0), \erateone_{M,Q}(0))$.
\end{theorem}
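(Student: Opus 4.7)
The converse $\eratetwo_{P,Q}(0) \leq \min(\erateone_P(0), \erateone_Q(0))$ is immediate from the data processing inequality, since the relayed system can do no better than either channel taken in isolation. For achievability, I would mimic the proof of Theorem~\ref{thm:zero_rate}, substituting the distinct-channels adaptations of Section~\ref{sec:diff_chan} for the fixed-$M$ step. Fix any $\otherrate < \min(\erateone_P(0), \erateone_Q(0))$ and any positive integer $M$. Applying Lemma~\ref{lem:expurgation} separately to $P$ and to $Q$ with a sufficiently large common length $\ell$, I would obtain $(M,\ell)$-codebooks $\calC_P$ and $\calC_Q$ satisfying $\dbmin(\calC_P, P^\ell) \geq \ell \cdot \otherrate$ and $\dbmin(\calC_Q, Q^\ell) \geq \ell \cdot \otherrate$. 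Let $P'$ (resp.\ $Q'$) be the restriction of $P^\ell$ (resp.\ $Q^\ell$) to $\calC_P$ (resp.\ $\calC_Q$), and identify both input alphabets with $\{1,\ldots,M\}$.

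Next, I would apply the block-structured protocol of Section~\ref{sec:diff_chan} to the pair $(P',Q')$. Viewing each length-$k$ block of $P'$- or $Q'$-symbols as a single super-symbol, the effective outer channel becomes $(Q')^k \circ f \circ (P')^k$ for a relay function $f$, and following the argument leading to \eqref{eq:rate_limit} (with Lemma~\ref{lem:zero_rate_uniform} applied to this composite channel after $R \to 0^+$) gives
\begin{equation}
    \eratetwo_{P',Q'}(0) \geq \frac{M-1}{Mk}\, \dbmin(\calC', (Q')^k \circ f \circ (P')^k),
\end{equation}
where $\calC'$ is the trivial codebook on $\{1,\ldots,M\}$. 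The crucial technical step is to establish the distinct-channels analog of Lemma~\ref{lem:db_sep}: with $g$ as in \eqref{eq:g_dfn} and $f = w \circ \empdist$ as in Section~\ref{subsection:protocol_details}, but with $\altrate := \min_{m\neq m'}\min(\db(m,m',P'), \db(m,m',Q'))$, show that
\begin{equation}
    \dbmin(\calC', (Q')^k \circ f \circ (P')^k) \geq (k-2)\altrate - 2|\calY_{P'}| \log(k+1).
\end{equation}
The proof mirrors that of Lemma~\ref{lem:db_sep}: apply Lemma~\ref{lem:distributive} to split the composite chain into $\empdist \circ (P')^k$ and $(Q')^k \circ w$; bound the type-class probability via Lemma~\ref{lem:type_class} (using $P'$); and bound $\db(q,q',(Q')^k\circ w)$ by tensorization (Lemma~\ref{lem:tensorize}) and Lemma~\ref{lem:linear_bound} (using $Q'$). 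The common choice of $\altrate$ is precisely what ensures that the hypothesis of Lemma~\ref{lem:linear_bound} holds simultaneously for both channels.

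Since $\dbmin(\calC_P, P') \geq \ell\,\otherrate$ and $\dbmin(\calC_Q, Q') \geq \ell\,\otherrate$ together imply $\altrate \geq \ell\,\otherrate$, sending $k\to\infty$ gives $\eratetwo_{P',Q'}(0) \geq \frac{M-1}{M}\ell\,\otherrate$. Converting back to the original channels by reading $\ell$ symbols at a time (the analog of \eqref{eq:super_symbol}) then yields $\eratetwo_{P,Q}(0) \geq \frac{M-1}{M}\otherrate$, and the conclusion follows by taking $M\to\infty$ and $\otherrate$ arbitrarily close to $\min(\erateone_P(0), \erateone_Q(0))$. The main obstacle is the distinct-channels generalization of Lemma~\ref{lem:db_sep}: the single auxiliary function $g$ must simultaneously do the right job on the $P'$-side (through the type-class bound) and on the $Q'$-side (through the tensorized Bhattacharyya bound), so one must check carefully that a unified scalar $\altrate$ keeps both halves of the argument symmetric despite $P'$ and $Q'$ being different DMCs on possibly different output alphabets.
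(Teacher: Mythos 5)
Your proposal is correct and follows essentially the same route as the paper, which itself only sketches this result by combining the distinct-channels modification of $\altrate$ with the zero-rate machinery from the proof of Theorem~\ref{thm:zero_rate} (expurgated codebooks for each channel, restriction to $P'$ and $Q'$, the block protocol, Lemma~\ref{lem:zero_rate_uniform}, and the super-symbol conversion). One minor attribution slip: Lemma~\ref{lem:linear_bound} is actually applied to the KL-divergence terms $D(q\|P'_m)$ coming from the type-class bound on the $P'$-side, while $Q'$ enters separately through the lower bound $\db(x,x',Q')\ge\altrate$ in the tensorized Bhattacharyya step --- but the unified $\altrate$ covers both requirements exactly as you observe, so the argument goes through.
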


In the fixed-$M$ setting without pairwise reversibility, the situation becomes more complicated; even in the case of $M=2$ studied in \cite{teachlearn}, we do not have a complete solution to the question of when it holds that $\ratetwo_{2,P,Q} = \min(\rateone_{2,P}, \rateone_{2,Q})$ (both positive and negative cases are known, with $P=Q$ being a notable positive case).  Hence, we leave this more challenging setting for future work.

\subsection{A Case Where the 1-Hop and 2-Hop Exponents Differ (Proof of Theorem \ref{thm:mismatch})} \label{sec:differ}

We now return to the case that the two channels and identical, and address the question of how general the result $\ratetwo_{M,P} = \rateone_{M,P}$ might be.  One might hope that this result can be extended to \emph{all} $(M,P)$ with no additional assumptions.  However, here we show that such a level of generality is not possible, thereby proving Theorem \ref{thm:mismatch}.

Fix $p>0$, and let $P$ have the following transition law (with $|\calX|=3$ and $|\calY| = 4$):
\begin{equation}
\begin{bmatrix}
    1-2p & p & 0 & p \\
    0 & 1-2p & p & p \\
    p & 0 & 1-2p & p \\
\end{bmatrix}.
\end{equation}
We will show that for all sufficiently small $p$, it holds that $\ratetwo_{3,P} < \rateone_{3,P}$.   
We number the inputs as $1,2,3$ and the outputs as $1,2,3,\erasure$, where $\erasure$ stands for ``erasure''.

We first claim that for $p < \frac{1}{3}$, we have
\begin{equation}
    \dcmin(\{1,2,3\}, P) = \log\frac{1}{2p}.
\end{equation}
This follows by a direct substitution into the definition of $\dc$; the relevant $s$-dependent expression is $-\log\big( (1-2p)^s p^{1-s} + p \big)$, and for $p < \frac{1}{3}$ we have $1-2p > p$, so that the maximum is attained at $s=0$.  Applying Lemma \ref{lem:initial}, it follows that $\rateone_3 \geq \log\frac{1}{2p}$. 

We will show that whenever $n$ is a multiple of 19, we have
\begin{equation}
    \mathbb{P}_e^{(2)}(n,3) \geq \frac13 p^{18n/19}(1-2p)^{n}. \label{eq:two_hop_bound}
\end{equation}
Note that we can always make the problem easier by rounding up to the next multiple of 19.  Thus, \eqref{eq:two_hop_bound} implies that we can upper bound the 2-hop exponent by $\frac{18}{19}\log\frac{1}{p} + \log\frac{1}{1-2p}$.  For sufficiently small $p$, this is strictly smaller than $\log\frac{1}{2p}$, yielding the desired claim $\ratetwo_3 < \rateone_3$.

To prove \eqref{eq:two_hop_bound}, we consider a relaxed version of the two-hop problem.  We split the transmission time into blocks of length $6n/19$ and $13n/19$ respectively, and consider the following setup:
\begin{itemize}
    \item In the first block, the encoder sends $6n/19$ symbols to the relay (via $P$).
    \item If only the erasure symbol $\erasure$ is received by the relay in the first block (which occurs with probability $p^{6n/19}$ regardless of $\Theta$), then:
    \begin{itemize}
        \item The relay sends $6n/19$ symbols to the decoder in the first block; call this string $w'_{\erasure}$.
        \item The relay then learns the true value of $\Theta$, and then sends another $13n/19$ symbols to the decoder; call this string $w^{\dagger\dagger}_{\Theta}$.
    \end{itemize}
    \item Otherwise, if the relay receives any symbols among $\{1,2,3\}$ in the first block (i.e., non-erasures), then:
    \begin{itemize}
        \item The relay learns the true value of $\Theta$ immediately, and sends $n$ symbols to the decoder. Let $w'_{\Theta}$ contain the first $6n/19$ symbols, and $w''_{\Theta}$ contain the remaining $13n/19$ symbols.
    \end{itemize}
\end{itemize}
Observe that this is an easier problem than the original one, because the relay is either given $\Theta$ ``for free'' for the entire transmission time, or it is given $\Theta$ after the first block in a scenario where the first block it received gave no information about $\Theta$ anyway (i.e., all erasures).  Thus, any converse in this setting implies a converse in the original setting.  We used similar ideas for the case $M=2$ in \cite[Sec.~III-D]{teachlearn}.


We will show that under this modified setting, the error probability is at least $\frac13 p^{18n/19}(1-2p)^n$.  We proceed with a proof by contradiction, instead assuming that
\begin{equation}
    \p({\rm error}) < \frac13 p^{18n/19}(1-2p)^n. \label{eq:assumed_err}
\end{equation}
In the following, we use the notation $u \oplus v$ to denote the concatenation of two strings.

We first claim that $w'_{1}$ and $w'_{2}$ can only share the same symbol in fewer than $n/19$ positions.  To see this, suppose that they agree in $n/19$ or more positions, and consider the event that (i) at least one symbol from the encoder is non-erased in the first block, and (ii) all symbols from the relay for which $w'_{1} \oplus w''_{1}$ and $w'_{2} \oplus w''_{2}$ differ are erased.  Conditioned on either $\Theta=1$ or $\Theta=2$, the probability of this occurring is lower bounded by $p^{18n/19}(1-2p)^n$,\footnote{The probability of having at least one non-erasure in the first encoder block is lower bounded by $1-p \ge 1-2p$, and this is factored into the $(1-2p)^{n}$ term in which the exponent of $n$ is a crude upper bound on the actual power of $1-2p$. \label{foot:crude}} and when it occurs, the decoder has no information for distinguishing between these two $\Theta$ values.  Since $\p(\Theta=1) = \p(\Theta=2) = \frac{1}{3}$, we deduce that the error probability is at least $\frac13 p^{18n/19}(1-2p)^n$, contradicting \eqref{eq:assumed_err}.  

By the same argument, we can assume that among $w'_{1}$, $w'_{2}$, and $w'_{3}$, any two strings share the same symbol in fewer than $n/19$ positions, and we proceed under this assumption.

Now consider the strings $w'_{1}, w'_{2}, w'_{3}, w'_{\erasure}$.  Since $|\calX| = 3$, in each of the $6n/19$ positions, at least two of them share the same symbol. Therefore, by summing the number of common symbols over all $\binom{4}{2}=6$ pairs, we conclude that at least one pair contains the same symbol in at least $n/19$ positions. From the preceding paragraph, one of them must be $w'_{\erasure}$, and without loss of generality, we can let the other one be $w'_{1}$.  Thus,
\begin{equation}
    \text{(\#symbols in common between $w'_{1}$ and $w'_{\erasure}$)} \ge \frac{n}{19}. \label{eq:common_symb}
\end{equation}

Without loss of generality, assume that $w'_{1}$ and $w''_{1}$ consist of only the symbol 1; if not, we can perform suitable cyclic shifts (i.e., $1 \rightarrow 2 \rightarrow 3 \rightarrow 1$) on every codeword symbol-by-symbol, and suitably apply the inverse shifts at the decoder. 
Then, consider the event $\calA$ described by the following two conditions:
\begin{enumerate}
    \item One of the following two events occurs:
    \begin{itemize}
        \item It holds that $\Theta = 1$, and at least one non-erasure occurs in the first encoder block, so that the relay transmits $w'_{1} \oplus w''_{1}$.
        \item It holds that $\Theta = 2$, and the first encoder block is all erased, so that the relay sends $w'_{\erasure} \oplus w^{\dagger\dagger}_{2}$.
    \end{itemize}
    \item For each position in $\{1,\dotsc,n\}$, depending on the symbol at the corresponding position of $w'_{\erasure} \oplus w^{\dagger\dagger}_{2}$, we have the following:
    \begin{itemize}
        \item If the symbol is 1 (respectively, 2), the decoder receives 1 (respectively, 2).
        \item If the symbol is 3, the decoder receives $\erasure$.
    \end{itemize}
    Note that although this event is defined with respect to $w^{\dagger\dagger}_{2}$, we require this condition to hold regardless of whether $\Theta=1$ or $\Theta=2$.
\end{enumerate}

First consider conditioning on $\Theta = 1$.  We know from \eqref{eq:common_symb} that $w'_{\erasure}$ contains at least $n/19$ 1s, and for $\calA$ to occur, we require that (i) the sent 1s in those corresponding positions are received as 1s, and (ii) at the locations where $w'_{\erasure} \oplus w^{\dagger\dagger}_{2}$ is in $\{2,3\}$, a $1 \to 2$ or $1 \to \erasure$ transition occurs.  Since the latter transitions both occur with probability $p$, we deduce that the probability of $\calA$ occurring given $\Theta=1$ is lower bounded by $p^{18n/19}(1-2p)^{n}$.  (See also Footnote \ref{foot:crude} regarding the requirement of the first encoder block not being entirely erased.)

Now we consider the probability of $\calA$ given $\Theta=2$.  The first block from the encoder is entirely erased with probability $p^{6n/19}$, and given that this is true, the conditional probability of $\calA$ equals $p^{n_3}(1-2p)^{13n/19 - n_3}$, where $n_3$ is the number of 3s in $w^{\dagger\dagger}_{2}$.  Hence, given $\Theta=2$, the probability of $\calA$ is  $p^{6n/19 + n_3} (1-2p)^{13n/19 - n_3} \ge p^{6n/19 + n_3} (1-2p)^n$.

Under event $\calA$, the decoder has no information for distinguishing between $\Theta = 1$ and $\Theta = 2$.  Since $\p(\Theta=1) = \p(\Theta=2) = \frac{1}{3}$, we find that to avoid the preceding lower bounds contradicting the assumed upper bound \eqref{eq:assumed_err}, we require that $p^{6n/19+n_3} \leq p^{18n/19}$, or equivalently, $n_3 \geq 12n/19$. Since there are at least $n/19$ 1s in $w'_{\erasure}$ (see \eqref{eq:common_symb}), there can only be at most $5n/19$ 3s in $w'_{\erasure}$, so there must be at least $7n/19$ 3s in $w^{\dagger\dagger}_{2}$.

By the same argument with $\Theta \in \{1,3\}$ instead of $\{1,2\}$, there are also at least $7n/19$ 3s in $w^{\dagger\dagger}_{3}$. Since $w^{\dagger\dagger}_{2}$ and $w^{\dagger\dagger}_{3}$ have a common length of $13n/19$, we conclude that 
\begin{equation}
    \text{(\#3s in common between $w^{\dagger\dagger}_{2}$ and $w^{\dagger\dagger}_{3}$)} \ge \frac{n}{19}. \label{eq:common_symb2}
\end{equation}
We will finally show that this contradicts \eqref{eq:assumed_err}, roughly via a contrapositive argument to how we deduced \eqref{eq:common_symb} from \eqref{eq:assumed_err}.  
To do so, let $\calB$ be the event that all of the following occur:
\begin{itemize}
    \item The relay first receives $6n/19$ erasures in the first block, and accordingly sends $w'_{\erasure}$.
    \item The relay is then given $\Theta \in \{2,3\}$, and accordingly sends $w^{\dagger\dagger}_{2}$ or $w^{\dagger\dagger}_{3}$.
    \item In the second block, the receiver receives 3 at the positions where $w^{\dagger\dagger}_{2}$ and $w^{\dagger\dagger}_{3}$ equals 3, but receives $\erasure$ at all remaining positions.
\end{itemize}
Regardless of whether we condition on $\Theta=2$ or $\Theta=3$, a similar calculation to that above (but now using \eqref{eq:common_symb2}) yields that $\calB$ occurs with probability at least $p^{18n/19}(1-2p)^n$.  Since the decoder has no information for distinguishing between $\Theta \in \{2,3\}$ when $\calB$ occurs, we obtain the desired contradiction of \eqref{eq:assumed_err}.  This completes the proof of Theorem \ref{thm:mismatch}.

\section{Conclusion}

In this paper, we studied the problem of relaying multiple (or many) bits over a tandem of channels.  
We demonstrated that $\rateone_{M,P} = \ratetwo_{M,P}$ for all pairwise reversible channels, as well as certain channels that are ``almost'' pairwise reversible.  In addition, we showed that $\erateone_P(0) = \eratetwo_P(0)$ for all DMCs regardless of pairwise reversibility, while also showing that there exist channels such that $\ratetwo_{3,P} < \rateone_{3,P}$.  Finally, we generalized our main findings to the case that the two channels differ.

We conclude by briefly raising some interesting open problems:
\begin{itemize}
    \item Under what conditions beyond pairwise reversibility (and its extension in Section \ref{sec:non_pw}) does $\rateone_{M,P} = \ratetwo_{M,P}$?
    \item Under what conditions (if any) do we have $\erateone_P(R) = \eratetwo_P(R) > 0$ when $R > 0$? 
    \item Are fundamentally different protocols needed to attain the optimal 2-hop exponent when $\ratetwo_{M,P} < \rateone_{M,P}$?
    \item What is the smallest possible value of $\frac{\ratetwo_{M,P}}{\rateone_{M,P}}$? The proof of Theorem \ref{thm:mismatch} shows that this ratio can be made arbitrarily close to $\frac{18}{19}$, whereas a simple lower bound is $\frac{1}{2}$.
\end{itemize}

\appendix

\subsection{Proof of Lemma \ref{lem:tensorize}}  \label{sec:pf_tensorize}

We have
\begin{align}
    &\sum_{\vec{y} \in \calY_{P}^k} P^k(\vec{y}|\vec{x})^{1-s} P^k(\vec{y}|\vec{x}')^{s} \\
    &= \sum_{y_1 \in \calY_P} \cdots \sum_{y_k \in \calY_P} \left(\prod_{i=1}^k P(y_i|x_i)^{1-s}\right) \left( \prod_{i=1}^k P^k(y_i|x'_i)^{s} \right)\\
    &= \sum_{y_1 \in \calY_P} P(y_1|x_1)^{1-s} P^k(y_1|x'_1)^{s} \cdots \nonumber \\
        &\hspace*{2cm}\times\sum_{y_k \in \calY_P} P(y_k|x_k)^{1-s} P^k(y_k|x'_k)^{s}\\
    &= \prod_{i=1}^k  \left(\sum_{y_i \in \calY_P} P(y_i|x_i)^{1-s} P^k(y_i|x'_i)^{s}\right)
\end{align}
Taking the negative log on both sides gives \eqref{eq:iid1}, and maximizing over $0\leq s \leq 1$ gives \eqref{eq:iid2}.

\subsection{Proof of Lemma \ref{lem:initial}} \label{sec:pf_initial}

Fix $S = \{x^{(1)}, \ldots, x^{(m)}\}$ with each $x^{(i)} \in \calX_P$. Upon receiving $\Theta$, let the encoder simply send $x^{(\Theta)}$ repeatedly. For each pair of distinct $(x,x')$, we have
\begin{align}
    \p(\hat{\Theta}=x' | \Theta=x) &\leq \exp(-n \cdot \dc(x, x', P)) \\
    &\leq \exp(-n \cdot \dcmin(S,P)),
\end{align}
where the first inequality is a standard Chernoff-style upper bound, and the second inequality follows from the definition of $\dcmin$.  Since there are $M-1$ incorrect values of $\Theta$, a union bound gives
\begin{equation}
    \p(\hat{\Theta} \neq \Theta)  \leq (M-1) \exp(-n \cdot \dcmin(S,P)),
\end{equation}
and since $M$ is fixed (not scaling with $n$), this implies
\begin{align}
    \rateone_M &\leq \liminf_{n\rightarrow \infty} -\frac1n \log\Big((M-1) \exp(-n \cdot \dcmin(S,P))\Big) \\ &= \dcmin(S,P).
\end{align}

\subsection{Proof of Lemma \ref{lem:expurgation}}  \label{sec:pf_expurg}

Let $q \in \calP(\calX)$ be the distribution achieving the maximum in the expression for $E^{(1)}(0)$ in \eqref{eq:E1zero}.  Moreover, let $x$ and $x'$ be two independently chosen codewords of length $\ell$, where each symbol is chosen independently with distribution $q$.  Define $\hat{q}_{i_1, i_2}(x, x')$ to be the fraction of positions such that symbol $i_1$ occurs in $x$ and symbol $i_2$ occurs in $x'$. 

Since $x$ and $x'$ are independent, we have $\e[\hat{q}_{i_1, i_2}(x,x')] = q_{i_1}q_{i_2}$.  Hence, and by the tensorization property of $\db$ (Lemma \ref{lem:tensorize}), we have
\begin{align}
    \frac1\ell \e[\db(x, x',P^\ell)] &= \e\left[\sum_{i_1, i_2} \db(i_1, i_2, P)q_{i_1,i_2}(x, x')\right] \\ 
    &= E^{(1)}(0) > \otherrate.
\end{align}
Since $\frac1\ell \db(x, x',P^\ell)$ is the average of $\ell$ independent variables and $M$ is constant, by the law of large numbers, it holds for sufficiently large $\ell$ that
\begin{equation}
    \p(\db(x, x', P^\ell) \leq \ell \cdot \otherrate) \leq \frac{1}{2M^2}.
    \label{lln_s}
\end{equation}
Now consider choosing $M$ independent codewords according to the i.i.d.~distribution on $q$.  By using a union bound over all $\binom{M}{2} \le M^2$ codeword pairs in \eqref{lln_s}, we deduce that there exists an $(M,\ell)$-codebook $\calC'$ such that $\dbmin(\calC', P^\ell) \geq \ell \cdot \otherrate$, as required.

\bibliographystyle{IEEEtran}
\bibliography{general}    

\begin{thebibliography}{10}
\providecommand{\url}[1]{#1}
\csname url@samestyle\endcsname
\providecommand{\newblock}{\relax}
\providecommand{\bibinfo}[2]{#2}
\providecommand{\BIBentrySTDinterwordspacing}{\spaceskip=0pt\relax}
\providecommand{\BIBentryALTinterwordstretchfactor}{4}
\providecommand{\BIBentryALTinterwordspacing}{\spaceskip=\fontdimen2\font plus
\BIBentryALTinterwordstretchfactor\fontdimen3\font minus
  \fontdimen4\font\relax}
\providecommand{\BIBforeignlanguage}[2]{{%
\expandafter\ifx\csname l@#1\endcsname\relax
\typeout{** WARNING: IEEEtran.bst: No hyphenation pattern has been}%
\typeout{** loaded for the language `#1'. Using the pattern for}%
\typeout{** the default language instead.}%
\else
\language=\csname l@#1\endcsname
\fi
#2}}
\providecommand{\BIBdecl}{\relax}
\BIBdecl

\bibitem{onebit}
W.~Huleihel, Y.~Polyanskiy, and O.~Shayevitz, ``Relaying one bit across a
  tandem of binary-symmetric channels,'' \emph{IEEE International Symposium on
  Information Theory (ISIT)}, 2019.

\bibitem{jog2020teaching}
V.~{Jog} and P.~L. {Loh}, ``Teaching and learning in uncertainty,'' \emph{IEEE
  Transactions on Information Theory}, vol.~67, no.~1, pp. 598--615, 2021.

\bibitem{teachlearn}
Y.~H. Ling and J.~Scarlett, ``Optimal rates of teaching and learning under
  uncertainty,'' \emph{IEEE Transactions on Information Theory}, vol.~61,
  no.~11, pp. 7067--7080, 2021.

\bibitem{gallager}
R.~Gallager, \emph{Information Theory and Reliable Communication}.\hskip 1em
  plus 0.5em minus 0.4em\relax John Wiley \& Sons, Inc., 1968.

\bibitem{berlekamp}
C.~Shannon, R.~Gallager, and E.~Berlekamp, ``Lower bounds to error probability
  for coding on discrete memoryless channels. ii,'' \emph{Information and
  Control}, vol.~10, no.~5, p. 522–552, 1967.

\bibitem{schulman_1994}
S.~Rajagopalan and L.~Schulman, ``A coding theorem for distributed
  computation,'' \emph{ACM Symposium on Theory of Computing}, 1994.

\bibitem{highrates}
V.~Y.~F. Tan, ``On the reliability function of the discrete memoryless relay
  channel,'' \emph{IEEE Transactions on Information Theory}, vol.~61, no.~4,
  pp. 1550--1573, 2015.

\bibitem{bradford2012error}
G.~J. Bradford and J.~N. Laneman, ``Error exponents for block {M}arkov
  superposition encoding with varying decoding latency,'' in \emph{IEEE
  Information Theory Workshop}.\hskip 1em plus 0.5em minus 0.4em\relax IEEE,
  2012, pp. 237--241.

\bibitem{multihopping}
W.~Zhang and U.~Mitra, ``Multihopping strategies: An error-exponent
  comparison,'' in \emph{IEEE International Symposium on Information Theory},
  2007.

\bibitem{concat}
G.~D. Forney, \emph{Concatenated Codes}.\hskip 1em plus 0.5em minus 0.4em\relax
  MIT Press, 1965.

\bibitem{endotend}
A.~Chaaban and A.~Sezgin, ``Multi-hop relaying: An end-to-end delay analysis,''
  \emph{IEEE Transactions on Wireless Communications}, vol.~15, no.~4, pp.
  2552--2561, 2016.

\bibitem{gamalkim}
A.~El~Gamal and Y.-H. Kim, \emph{Network information theory}.\hskip 1em plus
  0.5em minus 0.4em\relax Cambridge University Press, 2011.

\bibitem{fong2017achievable}
S.~L. Fong and V.~Y. Tan, ``Achievable rates for {G}aussian degraded relay
  channels with non-vanishing error probabilities,'' \emph{IEEE Transactions on
  Information Theory}, vol.~63, no.~7, pp. 4183--4201, 2017.

\bibitem{hyptestrelay}
S.~Salehkalaibar, M.~Wigger, and L.~Wang, ``Hypothesis testing over the two-hop
  relay network,'' \emph{IEEE Transactions on Information Theory}, vol.~65,
  no.~7, pp. 4411--4433, 2019.

\bibitem{cover_thomas}
T.~M. Cover and J.~A. Thomas, \emph{Elements of information theory}.\hskip 1em
  plus 0.5em minus 0.4em\relax John Wiley \& Sons, Inc., 2006.

\bibitem{ck}
I.~Csisz\'{a}r and J.~K\"{o}rner, \emph{Information Theory: Coding Theorems for
  Discrete Memoryless Systems}, 2nd~ed.\hskip 1em plus 0.5em minus 0.4em\relax
  Cambridge University Press, 2011.

\bibitem{berlekampI}
C.~Shannon, R.~Gallager, and E.~Berlekamp, ``Lower bounds to error probability
  for coding on discrete memoryless channels. i,'' \emph{Information and
  Control}, vol.~10, no.~1, p. 65–103, 1967.

\bibitem{jelinek1968eval}
F.~Jelinek, ``Evaluation of expurgated bound exponents,'' \emph{IEEE
  Transactions on Information Theory}, vol.~14, no.~3, pp. 501--505, 1968.

\end{thebibliography}

     \begin{IEEEbiographynophoto}{Yan Hao Ling}
        received the B.Comp.~degree in computer science and the B.Sci.~degree 
        in mathematics from the National University of Singapore (NUS) in 2021. 
        He is now a PhD student in the Department of Computer Science at NUS.
        His research interests are in the areas of
        information theory, statistical learning, and theoretical computer science.
    \end{IEEEbiographynophoto}
    
     \begin{IEEEbiographynophoto}{Jonathan Scarlett}
        (S'14 -- M'15) received 
        the B.Eng.~degree in electrical engineering and the B.Sci.~degree in 
        computer science from the University of Melbourne, Australia. 
        From October 2011 to August 2014, he
        was a Ph.D. student in the Signal Processing and Communications Group
        at the University of Cambridge, United Kingdom. From September 2014 to
        September 2017, he was post-doctoral researcher with the Laboratory for
        Information and Inference Systems at the \'Ecole Polytechnique F\'ed\'erale
        de Lausanne, Switzerland. Since January 2018, he has been an assistant
        professor in the Department of Computer Science and Department of Mathematics,
        National University of Singapore. His research interests are in
        the areas of information theory, machine learning, signal processing, and
        high-dimensional statistics. He received the Singapore National Research Foundation (NRF)
        fellowship, and the NUS Presidential Young Professorship award.
    \end{IEEEbiographynophoto}
    
\end{document}